\documentclass[sn-mathphys,Numbered]{sn-jnl}


\usepackage{graphicx}%
\usepackage{multirow}%
\usepackage{amsmath,amssymb,amsfonts}%
\usepackage{amsthm}%
\usepackage{mathrsfs}%
\usepackage[title]{appendix}%
\usepackage{xcolor}%
\usepackage{textcomp}%
\usepackage{manyfoot}%
\usepackage{booktabs}%
\usepackage{algorithm}%
\usepackage{algorithmicx}%
\usepackage{algpseudocode}%
\usepackage{listings}%
\usepackage{booktabs}%
\usepackage{algcompatible}
\usepackage{listings}%
\usepackage{subcaption}
\usepackage{times}
\usepackage{soul}
\usepackage{url}
\usepackage{eucal}
\usepackage{hyperref}
\usepackage{longtable}
\usepackage{array,multirow}
\usepackage{microtype}
\usepackage[inline]{enumitem}
\usepackage{hhline}

\def \MBR{\mathbb{M}_{r,BFS}}

\def \MB{\mathbb{M}_{BFS}}
\def \MD{\mathbb{M}_{DFS}}
\def \MG{\mathbb{M}_{AP}}

\def \AA{\mathcal{A}}

\def \qq{\boldsymbol{q}}
\def \bb{\boldsymbol{b}}

\def \safepolicypp{\emph{First-Come-First-Served policy}}
\def \safepolicyp{FCFS policy}
\def \safepoliciesp{FCFS policies}
\def \safepolicy{\safepolicyp }
\def \safepolicies{\safepoliciesp }

\def \SP{FCFSP}

\newtheorem{lemma}{Lemma}



\newtheorem{theorem}{Theorem}
%

\newtheorem{example}{Example}%

\theoremstyle{thmstylethree}%

\raggedbottom

\begin{document}

\title[Edge Manipulations for the Maximum Vertex-Weighted Bipartite $b$-matching]{Edge Manipulations for the Maximum Vertex-Weighted Bipartite $b$-matching}


\author*[1]{\fnm{Gennaro} \sur{Auricchio}}\email{ga647@bath.ac.uk}

\author[2]{\fnm{Qun} \sur{Ma}}

\author[1]{\fnm{Jie} \sur{Zhang}}

\affil*[1]{\orgdiv{Department of Computer Science}, \orgname{University of Bath}, \orgaddress{\street{Claverton Down}, \city{Bath}, \postcode{BA2 7AY}, \state{England}, \country{United Kingdom}}}

\affil[2]{\orgdiv{College of Intelligence and Computing, Peiyang Campus}, \orgname{Tianjin University}, \orgaddress{\street{135 Yaguan Road, Haihe Education Park}, \city{Tianjin}, \postcode{300354}, \country{China}}}



\abstract{In this paper, we explore the Mechanism Design aspects of the Maximum Vertex-weighted $b$-Matching (MVbM) problem on bipartite graphs $(A\cup T, E)$.
The set $A$ comprises agents, while $T$ represents tasks.
The set $E$, which connects $A$ and $T$, is the private information of either agents or tasks.
In this framework, we investigate three mechanisms -- $\MB$, $\MD$, and $\MG$ -- that, given an MVbM problem as input, return a $b$-matching.
We examine scenarios in which either agents or tasks are strategic and report their adjacent edges to one of the three mechanisms.
In both cases, we assume that the strategic entities are bounded by their statements: they can hide edges, but they cannot report edges that do not exist.
First, we consider the case in which agents can manipulate. 
In this framework, $\MB$ and $\MD$ are optimal but not truthful.
By characterizing the Nash Equilibria induced by $\MB$ and $\MD$, we reveal that both mechanisms have a Price of Anarchy ($PoA$) and Price of Stability ($PoS$) of $2$.
These efficiency guarantees are tight; no deterministic mechanism can achieve a lower $PoA$ or $PoS$.
In contrast, the third mechanism, $\MG$, is not optimal, but truthful and its approximation ratio is $2$.
We demonstrate that this ratio is optimal; no deterministic and truthful mechanism can outperform it.
We then shift our focus to scenarios where tasks can exhibit strategic behaviour.
In this case, $\MB$, $\MD$, and $\MG$ all maintain truthfulness, making $\MB$ and $\MD$ truthful and optimal mechanisms. 
In conclusion, we investigate the manipulability of $\MB$ and $\MD$ through experiments on randomly generated graphs.
We observe that \begin{enumerate*}[label=(\roman*)]
    \item $\MB$ is less prone to be manipulated by the first agent than $\MD$;
    \item $\MB$ is more manipulable on instances in which the total capacity of the agents is equal to the number of tasks;
    \item randomizing the agents' order reduces the agents' ability to manipulate $\MB$.
\end{enumerate*}}

\keywords{Mechanism Design, Game Theory, Matching Problem, Price of Anarchy, Price of Stability}



\maketitle

\section{Introduction}

Managers of large companies are periodically required to submit a list of the projects they carried out for external evaluations.
Alongside the list of projects, a manager needs to identify a worker liable for the achievement of every project.
Due to the workload cap, every worker can be found responsible for only a finite number of projects.
Moreover, every project has a different prestige, which can be regarded as its overall value. 
The manager aims to maximize the total value of the reported projects by deploying its staff following the aforementioned constraints.
Meanwhile, every worker is interested in being associated with high-value projects rather than low-value ones. 
In this context, workers might benefit from hiding their connections to low-value projects.
Indeed, by concealing certain connections, the worker might be able to compel the manager to match them to higher-valued projects in order to increase the overall quality of the report.
A similar situation occurs in the universities of several European countries.
Indeed, to assess the research impact of their higher education institutions, the country asks the university for a list of their best publications, along with the name of the main author.\footnote{This, for example, is a common practice in the United Kingdom, see \url{https://www.ref.ac.uk} for a reference.}
On one hand, every university wants to find a matching between its lecturers and the publications that maximizes the total impact score.
On the other hand, individual academics want to be designated as the main author of their best publications.
In both scenarios, we need to allocate a set of resources that possess an objective and publicly known value, be they projects and their prestige or publications and their impact score, to a set of self-interested entities.
Aside from these two examples, there are many other real-life situations that can be rephrased as matching problems with self-interested agents.
Matching problems were first introduced to minimize transportation costs \cite{Hitchcock1941,Kantorovitch1958} and to optimally assign workers to job positions \cite{Easterfield1946,Thorndike1950}. 
Thereafter, bipartite matching found application in several applied problems, such as sponsored searches \cite{DBLP:journals/sigact/BirnbaumM08,DBLP:conf/www/AggarwalMPP09,DBLP:journals/fttcs/Mehta13}, school admissions \cite{abdulkadirouglu2005college,DBLP:journals/tcs/BiroFIM10}, scheduling \cite{ZHAO2010837,DBLP:conf/soda/GuptaKS14}, peer-reviewing \cite{charlin2013toronto}, and general resource allocation \cite{feng2013device,7295474}.
Indeed, characterizing matching through the maximization or minimization of a functional allows us to define mathematical objects that arise in several applied contexts. 
For example, Maximum Cardinality Matchings have connections with the computation of perfect matchings \cite{fukuda1994finding}, bottleneck matchings \cite{edmonds1970bottleneck}, and Lévy-Prokhorov distances, \cite{lahn2021faster}.
Another example is the Maximum Edge-weighted Matching problem, which has been widely used in clustering problems \cite{ding2019clustering}, machine learning \cite{bayati2005maximum}, and also to compute Wasserstein-barycenters \cite{auricchio2019computing}.
For a complete discussion of the matching problems and their applications, we refer to \cite{lovasz2009matching}.

{\bf Game-theoretic aspects of matching problems.}
In many matching problems, the bipartite graph and the weights of its edges describe the preferences that a group of agents has over the members of a second, disjointed, group. 
This happens both in one-sided matching (as, for example, house allocation \cite{HZ79,10.1257/000282802762024728}  or the resident/hospital problem \cite{manlove2013algorithmics}) and in two-sided matching (as, for example, stable marriage \cite{GaleS13}, student admission \cite{balinski1999tale}, and market matching, in general, \cite{kojima2019new}). 
Starting from the basic models, several variants have been proposed, most of the time, these variants enforce some meaningful constraint, such as regional constraints \cite{kamada2017recent}, diversity constraints \cite{ehlers2014school}, or lower quotas \cite{yokoi2017generalized}.
In all these frameworks, however, the preferences of each agent are reported by the agent itself. Therefore, studying the social aspects of implementing a mechanism to find an allocation is essential.
To this extent, several works studied the manipulability of these class of problems \cite{fragiadakis2016strategyproof,krysta2014size}, the fairness \cite{kamada2019fair}, and the envy-freeness \cite{abdulkadirouglu2003school}.
We refer to \cite{aziz2022matching} for a comprehensive survey.
In this paper, we deal with a different game-theoretical problem in which every strategic entity has the same preference order over the other side of the bipartite graph.
Given that the preference orders over the other side of the bipartite graph are common and public information, we assume that the private information of the strategic entities consists of the set of edges connecting them to the other side of the graph, which is the main novelty of our framework.
In this framework, we study both truthful and optimal mechanisms and provide positive and tight results on the efficiency guarantees.
A previous version of this paper appeared in \cite{gaecai23}.

\subsection*{Our Contributions}

Throughout the paper, we assume that the two sides of the bipartite graph consisting of a set of agents $A=\{a_1,\dots,a_n\}$ and a set of tasks $T=\{t_1,\dots,t_m\}$.
We represent $E$ as the set of edges connecting agents to tasks.
Each task $t_j$ is assigned a positive value, denoted as $q_j$, which is equal for all agents.
Moreover, each task can be matched with at most one agent in set $A$.
Each agent $a_i$ is associated with a positive integer, denoted as $b_i$, which specifies the maximum number of tasks that agent $a_i$ can be allocated.
We consider two game-theoretical frameworks.
In the first framework, the agents' set is composed of strategic entities (see Section \ref{sec:agentmanipulation}), while in the second framework, the tasks are behaving strategically (see Section \ref{sec:tasksmanipulation}).
Notice that, since the roles of agents and tasks in the MVbM problem are not interchangeable, the two frameworks are inherently different and yield different results.
Our main focus, however, pertains to the study of the first framework, i.e. the case in which the agents are reporting their information, which is also more challenging.
In both frameworks, we study three mechanisms induced by known algorithms.
Two of them, namely $\MB$ and $\MD$, are induced by the algorithm proposed in \cite{spencer1984node}.
The third one, namely $\MG$, is induced by the algorithm proposed in \cite{dobrian20192}, which is an approximation of the one proposed in \cite{spencer1984node}.
Throughout the paper, we assume that the agents (or tasks) are bounded by their statements, hence they can hide edges or part of their capacity (value), but they cannot report non-existing edges or a capacity (value) higher than the real one.
{\bf Agents' Manipulations -- The properties of $\MB$ and $\MD$.}
First, we study the case in which the agents are able to hide part of the edges connecting them to tasks and show that any mechanism that returns an MVbM cannot be truthful.
In particular, we infer that both $\MB$ and $\MD$ are not truthful.
We then show that the first agent has a strategic advantage over the other agents and that, in general, the individual agent's ability to alter the outcome of the mechanisms depends on the order in which the mechanism processes the agents' set.
Due to this correlation, we are able to describe the Nash Equilibria induced by the mechanisms and prove that the Price of Anarchy (PoA) and Price of Stability (PoS) of both $\MB$ and $\MD$ are equal to $2$.
We then proceed to show that $2$ is the best possible PoA and PoS achievable by deterministic mechanisms, thus $\MB$ and $\MD$ are the best possible mechanisms in terms of PoA and PoS.
Afterwards, we characterize some sets of inputs on which these mechanisms become truthful. 
Lastly, we extend our study to the case in which the agents report both the edges connecting them to the set of tasks and their capacity.
{\bf Agents' Manipulations --  The properties of $\MG$.}
Although this mechanism does not return an MVbM and hence is not optimal, we show that $\MG$ is truthful for agents' manipulations regardless of whether the agents can report only the edges or the edges and the capacity. 
Furthermore, it is group strategyproof whenever all the tasks have different values.
We show that its approximation ratio is $2$ and show that this bound is tight.
Thus, there does not exist a deterministic truthful mechanism that has a lower approximation ratio.
Moreover, we prove that the output produced by the approximation mechanism is actually the matching returned by $\MB$ and $\MD$ in their worst Nash Equilibrium.
{\bf Tasks' Manipulations.}
We then extend our study on the properties of the mechanisms when the tasks are allowed to act strategically.
In this case, the set of incident edges and/or the values are the tasks' private information.
In this simpler framework, we have that all of the three mechanisms we considered are truthful, albeit not group strategyproof.
We summarize the results we find of these mechanisms for both the Tasks and Agents Manipulation cases in Table \ref{table:schedule}.

{\bf Empirical Studies.} In Section \ref{sec:exp_res}, we empirically compare
the manipulability of $\MB$ and $\MD$ over randomly generated instances.
In particular, we observe that:
 \begin{enumerate*}[label=(\roman*)]
    \item $\MB$ is less prone to be manipulated by the first agent than $\MD$. Moreover, the number of instances in which the first agent is able to manipulate $\MB$ decreases as the number of agents increases, while the same does not hold for $\MD$.
    \item $\MB$ is more manipulable on instances in which the total capacity of the agents is equal to the number of tasks.
    \item Randomizing the processing order of $\MB$ reduces the overall manipulability of the mechanism. Thus knowing the processing order of the agents is essential information to manipulate $\MB$.
\end{enumerate*}


\begin{table*}[t!]
\centering
\begin{tabular}{c | c | c | c | c }
\toprule[2pt]
          \multicolumn{5}{c}{Agent Manipulation (Edge Manipulation$\slash$Edge and Capacity Manipulation)     }   \\
 \midrule
    & Truthful & Group SP  & Optimal  & Efficiency \\
 \hhline{~|-|-|-|-}
 $\MB$   & No & No & Yes & PoA = PoS = 2  \\
   $\MD$  & No & No & Yes & PoA = PoS = 2 \\
 $\MG$   & Yes & Yes* & No & $ar$ = 2 \\


\midrule[1.5pt]

\multicolumn{5}{c}{Task Manipulation (Edge Manipulation$\slash$Edge and Value Manipulation)} \\
\midrule
    & Truthful & Group SP  & Optimal  & Efficiency   \\
\hhline{~|-|-|-|-}
 $\MB$   & Yes & No & Yes & $ar$ = 1\\
   $\MD$  & Yes & No & Yes & $ar$ = 1 \\
 $\MG$   &  Yes & No & No & $ar$ = 2 \\
\bottomrule[2pt]
\end{tabular}
\caption{The mechanisms we study and their properties in the two game-theoretical settings we consider. The "Yes*" indicates that the mechanism is group strategyproof under minor assumptions. \label{table:schedule}}
\end{table*}  

\section{Preliminaries}
\label{sect:prelim}

In this section, we recall the basic notions and notations on the MVbM problem. 
Then, we describe the algorithm that defines the mechanisms we study.

{\bf The Maximum Vertex-weighted $b$-Matching Problem.}
Let $G=(A \cup T,E)$ be a bipartite graph. 
Throughout the paper we refer to $A=\{a_1,a_2, \cdots, a_{n}\}$ as the set of agents and refer to $T=\{t_1,t_2, \cdots, t_{m}\}$ as the set of tasks.
The set $E$ contains the edges of the bipartite graph.
We say that an edge $e\in E$ belongs to agent $a_i$ if $e=(a_i,t_j)$ for a $t_j\in T$.
We denote with $M=|E|$ and $N=|A \cup T|=n+m$ the number of edges and the total number of vertices of the graph, respectively. 
Since the graphs are undirected, we denote an edge by $(a_i,t_j)$ and $(t_j,a_i)$ interchangeably. 
Moreover, for any given agent $a_i\in A$, we define the set $T_{E,i}$ as the set of tasks in $T$ that are connected to agent $a_i$ through the set of edges $E$.
When it is clear from the context which set $E$ we are referring to, we drop the subscript from $T_{E,i}$ and use $T_i$.
Let $\bb=(b_1,b_2,\dots,b_n)$ be the vector containing the capacities of the agents, where $b_i\in \mathbb{N}$ for every $i=1,\dots,n$.
A subset $\mu\subset E$ is a $b$-matching if, for every vertex $a_{i}\in A$, the number of edges in $\mu$ linked to $a_i$ is less than or equal to $b_i$ and, for every vertex $t_j\in T$, the number of edges in $\mu$ linked to $t_j$ is, at most, one.
Given a $b$-matching $\mu$, the vertex $a_i \in A$ is \emph{saturated} with respect to $\mu$ if the number of edges in $\mu$ linked to $a_i$ is exactly $b_i$.
Otherwise, the vertex $a_i$ is \emph{unsaturated} with respect to $\mu$. 
We denote with $\qq=(q_1,q_2,\dots,q_m)$ the vector containing the values of the tasks, in our framework, we have $q_j>0$ for every $j$. 
The value of a matching $\mu$ is 
\[
w(\mu):=\sum_{t_j\in T_{\mu}}q_{j},
\]
where $T_{\mu}\subset T$ is the set of tasks matched by $\mu$.
Given a bipartite graph and a vector $\bb$, the MVbM problem consists in finding a $b$-matching $\mu$ that maximizes  $w(\mu)$. 
Finally, given two sets of edges $\mu_1$ and $\mu_2$, denote with $\mu_1\oplus\mu_2=(\mu_1\backslash\mu_2)\cup(\mu_2\backslash\mu_1)$ their \emph{symmetric difference}.
A path $P = \{ (t_{j_1},a_{i_1}), (a_{i_1},t_{j_2}), (t_{j_2},a_{i_2}), \cdots , (t_{j_L}, a_{i_L}) \}$ in $G$ is a sequence of edges that joins a sequence of vertices. 
We say that $P$ has a length equal to $L$ if it contains $L$ edges.
Given a path $P$ and a $b$-matching $\mu$, $P$ is an \emph{augmenting path} with respect to $\mu$ if every vertex $a_{i_\ell}$, for $\ell=1,\dots,L-1$, is saturated with respect to $\mu$, $a_{i_L}\in A$ is unsaturated with respect to $\mu$, and the edges in the path alternatively do not belong to $\mu$ and belong to $\mu$. 
That is, $(t_{j_\ell},a_{i_\ell})\notin \mu$,
$\ell\in[L]$ and $(a_{i_\ell}, t_{j_{\ell+1}})\in \mu$, $\ell\in[L-1]$, where $[L]$ is the set containing the first $L$ natural numbers, i.e. $[L]:=\{1,2,\dots,L\}$.

{\bf The Algorithm Outline.}
In this paper, we consider two well-known algorithms from a mechanism design perspective.
The first algorithm is presented in \cite{spencer1984node} and its routine consists in defining a sequence of matchings, namely $\{\mu_j\}_{j=0,1,\dots,m}$, of increasing cardinality.
Indeed, the first matching is set as $\mu_0=\emptyset$ and, given $\mu_{j-1}$, $\mu_j$ is defined as follows: if there exists $P_j$ an augmenting path (with respect to $\mu_{j-1}$) that starts from $t_j$, then $\mu_j=\mu_{j-1}\oplus P_j$, otherwise $\mu_j=\mu_{j-1}$. 
In Algorithm \ref{alg:Maxvalue}, we sketch the routine of the algorithm.
In \cite{spencer1984node} it has been shown that Algorithm \ref{alg:Maxvalue} returns a solution to the MVbM problem in $O(NM)$ time, regardless of whether we find the augmenting path using the \emph{Breadth-First Search} (BFS) or the \emph{Depth-First Search} (DFS). 
Both the DFS and the BFS when they traverse the graph in search of an augmenting path implicitly assume that there is an ordering of the agents.
We say that agent $a_i$ has a higher \emph{priority} than agent $a_j$ if $a_i$ is explored before $a_j$, so that agent $a_1$ has the highest priority.
We say that Algorithm \ref{alg:Maxvalue} is endowed with the BFS if we use the BFS to find an augmenting path.
Similarly, we say that the algorithm is endowed with the DFS if we use the DFS to find an augmenting path.
The second algorithm is introduced in \cite{dobrian20192} and \cite{al20222} and is an approximation version of Algorithm \ref{alg:Maxvalue} that searches only among the augmenting paths whose length is $1$.
The authors showed that this approximation algorithm finds a matching whose weight is, in the worst case, half the weight of the MVbM.
Notice that for this approximation algorithm, using the BFS or the DFS does not change the outcome thus we omit which traversing graph method is used.

\begin{algorithm}[tb]
\caption{Algorithm for MVbM}
\label{alg:Maxvalue}
\textbf{Input}: A bipartite graph $G=(A \cup T,E)$; agents' capacities $b_{i}, i=1,\cdots,n$; task weights $q_j, j=1,\cdots,m$.\\
\textbf{Output}:  An MVbM $\mu$.

\begin{algorithmic}[1] 
\STATE $\mu_0\leftarrow\emptyset$;
\STATE Sort $q_j$, $j=1,\cdots,m$, in decreasing order;
\FOR{each $j\in T$}
\IF {there is augmenting path $P$ starting from $j$ with respect to $\mu_{j-1}$}
\STATE $\mu_j=\mu_{j-1}\oplus P$;
\ELSE
\STATE $\mu_{j}=\mu_{j-1}$;
\ENDIF
\ENDFOR
\STATE \textbf{return} $\mu=\mu_{m}$;
\end{algorithmic}
\end{algorithm}

\section{The MVbM Problem with Strategic Agents} 
\label{sec:agentmanipulation}
In this section, we introduce and study a game-theoretical framework for the MVbM problem.
We focus on the case in which the agents are strategic entities able to hide part of their connections to the set of tasks and/or part of their own capacity.

\subsection{The Game-Theoretical framework for agents' manipulation}
\label{subsec:gtfwagent}

Let us now assume that every agent in $A$ reports its private information to a mechanism that, after gathering all the reports, returns a $b$-matching.
In this subsection, we formally define the space of the agents' strategies and the mechanisms we are studying.

{\bf The Strategy Space of the Agents.}
Given a bipartite graph $G=(A\cup T, E)$, a capacity vector $\bb$, a value vector $\qq$, and a $b$-matching $\mu$ over $G$, we define the social welfare achieved by $\mu$ as the total value of the tasks assigned to the agents.
Since the social welfare achieved by $\mu$ is equal to the total weight of the matching, we use $w(\mu)$ to denote it.
We then define the utility of agent $a_i$ as 
\[
w_i(\mu):=\sum_{t_j\in T_{\mu,i}} q_{j},
\]
where $T_{\mu,i}$ is the set of tasks that $\mu$ matches with agent $a_i$.
Notice that $w(\mu)=\sum_{i=1}^{n}w_i(\mu)$.
In this section, we consider two settings depending on what the private information of the agents is: 
\begin{enumerate}[label=(\roman*)]
    \item The private information of each agent consists of the set of edges that connect the agent to the tasks' set. We call this setting Edge Manipulation Setting (EMS).
    \item The private information of each agent consists of the set of edges and its own capacity. We call this setting Edge and Capacity Manipulation Setting (ECMS).
\end{enumerate}
In both cases, we assume that every agent is bounded by its statement, thus every agent is able to report incomplete information, but they are not allowed to report false information.
In EMS, this means that an agent can hide some of the edges that connect it to the tasks, but it cannot report an edge that does not exist.
The set of strategies of agent $a_i$, namely $\mathcal{S}_i$, is, therefore, the set of all the possible non-empty subsets of $T_i$, where $T_i$ is the set containing all the existing edges that connect $a_i$ to the tasks.
In ECMS, this means that an agent can report only a capacity that is lower than its real one and that it cannot report an edge that does not exist.
In this case, the set of strategies of agent $a_i$ is then $\mathcal{S}_i\times [b_i]$, where $b_i$ is the real capacity of the agent and $[b_i]:=\{1,\dots,b_i\}$.
For the sake of simplicity, \textbf{from now on all the results and definitions we introduce are for the EMS}, unless we specify otherwise.
We generalize our results to the ECMS in subsection 3.5.
{\bf The Mechanisms.}
A mechanism for the MVbM problem is a function $\mathbb{M}$ that takes as input the private information of the agents and returns a $\bb$-matching.
We denote with $\mathcal{I}_{\mathbb{M}}$ the set of  possible inputs for $\mathbb{M}$ in the EMS.
In our paper, we consider three mechanisms:
\begin{enumerate}[label=(\roman*)]
    \item $\MB$, which takes in input the edges of the agents and uses Algorithm \ref{alg:Maxvalue} endowed with the BFS to select a matching\footnote{Throughout the paper, we use $\MB$ to denote both the mechanism that takes as input the edges and both the edges and the capacity of every agent. It will be clear from the context which is the input of the mechanism. The same goes for the other two mechanisms.}.
    \item $\MD$, which takes in input the edges of the agents and uses Algorithm \ref{alg:Maxvalue} endowed with the DFS to select a matching.
    \item $\MG$, which takes in input the edges of the agents and uses the approximated version of Algorithm \ref{alg:Maxvalue} to select a matching.
\end{enumerate}
Notice that $\MB$ and $\MD$ are optimal since they are both induced by Algorithm \ref{alg:Maxvalue}.
In Figure \ref{fig:test}, we report an example of truthful input and the different outputs returned by $\MB$, $\MD$, and $\MG$.
Given a mechanism $\mathbb{M}$, every element $I\in \mathcal{I}_{\mathbb{M}}$ is composed by the reports of $n$ self-interested agents, so that $\mathcal{I}_{\mathbb{M}}=\otimes_{i=1}^n\mathcal{I}_{i}$, where $\mathcal{I}_i$ is the set of feasible inputs for agent $a_i$.
We say that a mechanism $\mathbb{M}$ is \textit{strategyproof} (or, equivalently, \textit{truthful})  for the EMS if no agent can get a higher payoff by hiding edges.
More formally, if $I_i$ is the true type of agent $a_i$, it holds true that 
\[w_i(\mathbb{M}(I'_i,I_{-i}))\le w_i(\mathbb{M}(I_i,I_{-i}))
\]
for every $I'_i\in \mathcal{S}_i$.
Another important property for mechanisms is the group strategyproofness.
A mechanism is \textit{group strategyproof} for agent manipulations if no group of agents can collude to hide some of their edges in such a way that
\begin{enumerate}[label=(\roman*)]
    \item the utility obtained by every agent of the group after hiding the edges is greater than or equal to the one they get by reporting truthfully,
    \item at least one agent gets a better payoff after the group hides the edges.
\end{enumerate}

\begin{figure}[t]
\centering
\begin{subfigure}{.5\textwidth}
  \centering
  \includegraphics[width=.45\linewidth]{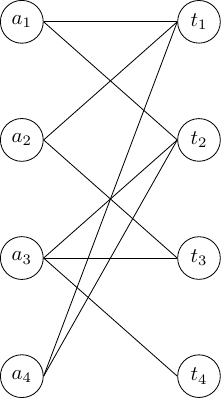}
  \caption{The truthful input.}
  \label{fig:sub1}
\end{subfigure}%
\begin{subfigure}{.5\textwidth}
  \centering
  \includegraphics[width=.45\linewidth]{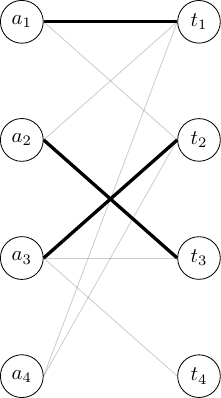}
  \caption{The output of $\MG$.}
  \label{fig:sub2}
\end{subfigure}
\begin{subfigure}{.5\textwidth}
  \centering
  \includegraphics[width=.45\linewidth]{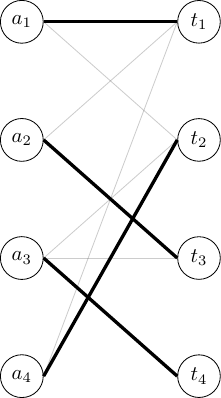}
  \caption{The output of $\MB$.}
  \label{fig:sub12}
\end{subfigure}%
\begin{subfigure}{.5\textwidth}
  \centering
  \includegraphics[width=.45\linewidth]{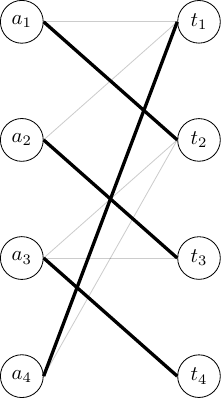}
  \caption{The output of $\MD$.}
  \label{fig:sub22}
\end{subfigure}
\caption{A truthful input and the different outputs provided by $\MB$, $\MD$, and $\MG$. The thicker edges are the one composing the output of the considered mechanism.}
\label{fig:test}
\end{figure}

To evaluate the performances of the mechanisms, we use the Price of Anarchy (PoA), Price of Stability (PoS), and approximation ratio (ar), which we briefly recall in the following.
\textbf{Price of Anarchy.}
The Price of Anarchy (PoA) of mechanism $\mathbb{M}$ is defined as the maximum ratio between the optimal social welfare and the  welfare in the worst Nash Equilibrium, hence
\[
PoA(\mathbb{M}):=\sup_{I\in \mathcal{I}}\frac{w(\mu(I))}{w(\mu_{wNE}(I))},
\]
where $\mu_{wNE}(I)$ is the output of $\mathbb{M}$ when the agents act according to the worst Nash Equilibrium, i.e. the Nash Equilibrium that achieves the worst social welfare.

\textbf{Price of Stability.}
The Price of Stability (PoS) of a mechanism $\mathbb{M}$ is defined as the maximum ratio between the optimal social welfare and the welfare in the best Nash Equilibrium, hence
\[
PoS(\mathbb{M}):=\sup_{I\in \mathcal{I}}\frac{w(\mu(I))}{w(\mu_{bNE}(I))},
\]
where $\mu_{bNE}(I)$ is the output of $\mathbb{M}$ when the agents act according to the best Nash Equilibrium, i.e. the Nash Equilibrium that achieves the maximum social welfare.
Notice that, by definition, we have $PoS(\mathbb{M})\le PoA(\mathbb{M})$.

\textbf{Approximation Ratio.} 
The approximation ratio of a truthful mechanism $\mathbb{M}$ is defined as the maximum ratio between the optimal social welfare and the welfare returned by $\mathbb{M}$.
Hence, we have 
\[
ar(\mathbb{M}):=\sup_{I\in \mathcal{I}}\frac{w(\mu(I))}{w(\mu_{\mathbb{M}}(I))},
\]
where $\mu_{\mathbb{M}}(I)$ is the output of $\mathbb{M}$ when the input is $I$.
%


\subsection{The Truthfulness of the Mechanisms}
\label{sec:game_theoretical_setting}

In this section, we study the truthfulness of the three mechanisms induced by Algorithm \ref{alg:Maxvalue} and its approximation version. 
We show that, although $\MB$ and $\MD$ are optimal, they are not truthful due to an impossibility result.
Furthermore, we show that the manipulability of a mechanism is related to the length of the augmenting paths found during the routine of Algorithm \ref{alg:Maxvalue} and use this characterization to prove that $\MG$ is truthful.

\begin{theorem}
\label{th:no_strategyproof_agent}
There is no deterministic truthful mechanism that always returns an MVbM for agent manipulation.
\end{theorem}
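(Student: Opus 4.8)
The plan is to exhibit a single small MVbM instance on which \emph{every} deterministic mechanism that always outputs an optimal $b$-matching is forced to admit a profitable deviation by one of the agents. Concretely, I would take $n=2$ agents with unit capacities $b_1=b_2=1$ and $m=2$ tasks with distinct values $q_1>q_2>0$ (for instance $q_1=2$, $q_2=1$), and let the truthful edge set be the complete bipartite graph $E=\{(a_1,t_1),(a_1,t_2),(a_2,t_1),(a_2,t_2)\}$. On this input there are exactly two $b$-matchings of weight $q_1+q_2$, namely $\mu^A=\{(a_1,t_1),(a_2,t_2)\}$ and $\mu^B=\{(a_1,t_2),(a_2,t_1)\}$, while any $b$-matching leaving a task unmatched has weight at most $q_1<q_1+q_2$. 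Hence a mechanism $\mathbb{M}$ that always returns an MVbM must output either $\mu^A$ or $\mu^B$ on $E$, and the proof splits into these two cases.

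Suppose first $\mathbb{M}(E)=\mu^A$, so $a_2$ is matched to the low-value task $t_2$ and $w_2(\mathbb{M}(E))=q_2$. Let $a_2$ deviate to the report $E'_2=\{(a_2,t_1)\}$ (hiding the edge to $t_2$), while $a_1$ keeps reporting truthfully; write $E'$ for the resulting profile. Enumerating the feasible $b$-matchings of $E'$ shows that $\{(a_1,t_2),(a_2,t_1)\}$ is the \emph{unique} optimum, of weight $q_1+q_2$, every other feasible matching having weight at most $q_1$; so $\mathbb{M}(E')$ must equal it, since $\mathbb{M}$ always returns an MVbM and here there is no tie to break. Then $w_2(\mathbb{M}(E'))=q_1>q_2=w_2(\mathbb{M}(E))$, contradicting truthfulness.

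The case $\mathbb{M}(E)=\mu^B$ is handled symmetrically with the roles of $a_1$ and $a_2$ interchanged: now $a_1$ hides the edge $(a_1,t_2)$, the unique MVbM of the deviated profile is $\{(a_1,t_1),(a_2,t_2)\}$, and $a_1$'s utility rises from $q_2$ to $q_1$. Since one of the two cases necessarily occurs, no deterministic mechanism that always returns an MVbM can be truthful; in particular this applies to $\MB$ and $\MD$.

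I do not expect a genuine obstacle here; the one point that needs care is ensuring the post-deviation instance has a \emph{unique} optimal $b$-matching, so that the mechanism's output is pinned down independently of how it would break ties — this is exactly why the deviation takes the form of \emph{hiding} an edge, which destroys the symmetry of the complete graph. I would also note that the counterexample already works with two agents and two tasks, and that the same instance foreshadows the tight Price of Anarchy and Price of Stability bounds of $2$ established later in this section.
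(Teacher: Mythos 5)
Your proof is correct and follows essentially the same strategy as the paper's: exhibit an instance with two tied optimal matchings, split on which one the mechanism returns, and show that the agent stuck with the low-value task can hide an edge to make the other optimum unique and thereby profit. The only difference is cosmetic — you use a complete $2\times 2$ instance (the same one the paper later uses for the approximation-ratio lower bound) while the paper uses an asymmetric instance with three tasks — and your explicit check that the post-deviation optimum is unique is exactly the point that makes the argument go through.
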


\begin{proof}[Proof of Theorem \ref{th:no_strategyproof_agent}]
We show this using a counterexample.
Consider two agents $a_1$ and $a_2$ and three tasks $t_1$, $t_2$, $t_3$.
The edge set is $E=\{(a_1,t_1), (a_1,t_2),(a_2,t_1), (a_2,t_3)\}$. 
The values of the three tasks are $q_1 = 1, q_2 = 0.1$, and $q_3=0.1$, respectively, while the capacity of both agents is $b_1=b_2=1$. 
It is easy to see that the optimal matching is not unique. 
In particular, both $\{(a_1,t_1)$, $(a_2,t_3)\}$ and $\{(a_1,t_2),(a_2,t_1)\}$ are feasible solutions whose total weight is $1.1$.
Let us assume that the mechanism returns $\{(a_1,t_1)$, $(a_2,t_3)\}$. 
In this case, if agent $a_2$ does not report edge $(a_2,t_3)$, the maximum matching becomes $\{(a_1,t_2),  (a_2,t_1)\}$.
According to this, agent $a_2$'s utility increases from $0.1$ to $1$.
Similarly, if the mechanism returns $\{(a_1,t_2),(a_2,t_1)\}$, agent $a_1$ can manipulate the result by hiding the edge $(a_1,t_2)$.
Therefore, there is no deterministic truthful mechanism that always returns a maximum matching.
\end{proof}

From Theorem \ref{th:no_strategyproof_agent}, we infer that both $\MB$ and $\MD$ are not truthful and, thus, not group strategyproof.
In the following, we characterize a sufficient condition under which agents' best strategy is to report truthfully.
This characterization allows us to deduce that $\MG$ is truthful and to present sets of instances on which $\MB$ and $\MD$ are truthful.

\begin{lemma}
\label{thm:truthfulness_cases}
Let us consider an instance in which Algorithm \ref{alg:Maxvalue} completes its routine using only augmenting paths of length equal to $1$. 
If an agent cannot misreport edges in such a way that the algorithm finds an augmenting path of length greater than $1$ in its implementation, then the agent's best strategy is to report truthfully.
\end{lemma}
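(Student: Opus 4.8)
The plan is to argue directly from the structure of Algorithm \ref{alg:Maxvalue} and the fact that, when only augmenting paths of length $1$ are used, the matching produced is essentially a greedy assignment: for each task $t_j$ in decreasing order of value, $t_j$ is matched to some currently unsaturated neighbour if one exists, and otherwise $t_j$ is left unmatched. I would first make this observation precise: under the hypothesis on the truthful input, running through the tasks $t_1,\dots,t_m$ (sorted so that $q_1\ge q_2\ge\cdots\ge q_m$), the mechanism matches $t_j$ iff at step $j$ at least one neighbour of $t_j$ still has spare capacity, and in that case the particular neighbour chosen does not affect the \emph{set} of tasks matched, only their distribution among agents. The key quantity for a fixed agent $a_i$ is therefore which high-value tasks get ``claimed'' before $a_i$'s capacity is exhausted.

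Next I would set up the comparison. Fix agent $a_i$ with true type $T_i$ and a misreport $T_i'\subseteq T_i$, and suppose — as the hypothesis of the lemma permits us to assume — that on the misreported profile the algorithm still completes using only length-$1$ augmenting paths. Let $\mu$ be the output on the truthful profile and $\mu'$ the output on the misreported one. I want to show $w_i(\mu')\le w_i(\mu)$. The heart of the argument is a step-by-step coupling of the two runs: I would process the tasks in the same sorted order in both runs and track, inductively, the set of saturated agents and, crucially, the residual capacities. The claim to prove by induction on $j$ is that after processing $t_1,\dots,t_j$, every task that $\mu'$ has assigned to $a_i$ has also been assigned to $a_i$ by $\mu$ (equivalently, the set of $a_i$'s matched tasks in the misreported run is a subset of those in the truthful run, when restricted to $\{t_1,\dots,t_j\}$). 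Since hiding edges only removes options for $a_i$, any task $a_i$ receives in $\mu'$ it was also a candidate for in $\mu$; the subtlety is that in $\mu$ that task might be given to a different agent because $a_i$ was already saturated by even higher-value tasks — but then $a_i$'s utility in $\mu$ only went up relative to that scenario. I would phrase this as: whenever $a_i$ gets $t_j$ in $\mu'$, either $a_i$ also gets $t_j$ in $\mu$, or $a_i$ is already saturated in $\mu$ at step $j$ by a set of tasks all of value $\ge q_j$, and a counting/exchange argument then shows $w_i(\mu)\ge w_i(\mu')$.

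The main obstacle I anticipate is making the coupling robust to the non-uniqueness of the length-$1$ matching: when several unsaturated neighbours of $t_j$ are available, BFS/DFS pick one according to the priority order, and the two runs may diverge in \emph{which} agent gets $t_j$ even though the set of matched tasks stays in lock-step. To handle this cleanly I would not try to couple the exact assignments but rather maintain the weaker invariant on residual capacities — specifically, that at every step $j$ the residual capacity of $a_i$ in the truthful run is at least its residual capacity in the misreported run, and that every agent saturated in the misreported run at a given step was ``filled'' by tasks of value at least as large as the corresponding tasks in the truthful run. Once that invariant is in place, summing task values over the steps at which $a_i$ is assigned a task yields $w_i(\mu')\le w_i(\mu)$ directly. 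I would close by noting this holds for every $T_i'\in\mathcal S_i$ satisfying the hypothesis, which is exactly the statement that reporting $T_i$ is a best response, i.e.\ truthful reporting is the agent's best strategy.
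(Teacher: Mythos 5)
Your high-level route is the same as the paper's: under the hypothesis, both the truthful and the misreported runs reduce to a greedy, task-by-task assignment (each task is matched, at its own step, to a currently unsaturated reported neighbour or left unmatched), and one then argues the manipulator can only lose. The paper's own proof stops essentially at that observation. However, the concrete scaffolding you propose to make the comparison rigorous does not hold up as stated. The inductive claim that the set of tasks matched to $a_i$ in the misreported run is a subset of those matched to it in the truthful run is false: take two agents of capacity $1$, both connected to two tasks of values $2$ and $1$; the truthful run gives $a_1$ the value-$2$ task using only length-$1$ paths, every misreport also yields only length-$1$ paths, yet if $a_1$ reports only the value-$1$ task it ends up matched to a task it does \emph{not} receive truthfully. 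The same instance shows your residual-capacity invariant is stated in the wrong direction: hiding edges makes $a_i$ weakly \emph{less} loaded in the misreported run at every prefix, so its residual capacity there is at least, not at most, the truthful one.

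The real crux is the dichotomy you assert near the end: ``whenever $a_i$ gets $t_j$ in $\mu'$ but not in $\mu$, $a_i$ is already saturated in $\mu$ at step $j$ by tasks of value $\ge q_j$.'' This is true, but it is precisely the statement that needs proof, and it does not follow from anything you establish: a priori, $a_i$ could be unsaturated in $\mu$ at step $j$ and lose $t_j$ to a higher-priority agent who is saturated only in the misreported run because of the cascade triggered by $a_i$'s earlier hidden edges. Ruling this out requires tracking how each task dropped by $a_i$ is re-assigned (one shows the ``displaced'' load only ever sits with agents of lower priority than $a_i$, so at any step where $a_i$ gains a task in $\mu'$ it must be saturated in $\mu$), and only then does the counting/exchange step go through (each gained task has value at most $\min_{t\in S}q_t$ where $S$ is $a_i$'s truthful allocation, and $|S'|\le|S|=b_i$, so pairing gains with distinct lost tasks gives $w_i(\mu')\le w_i(\mu)$). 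As written, the proposal replaces the lemma's key step with an assertion and two incorrect invariants, so it has a genuine gap, although the plan is repairable along the lines just sketched.
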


\begin{proof}[Proof of Lemma \ref{thm:truthfulness_cases}]

Let us denote with $\mu_k$ the matching found at the $k$-th step.
By hypothesis, Algorithm \ref{alg:Maxvalue} terminates by using only augmenting paths of length equal to $1$, thus the sequence of matching that it finds is monotone increasing. 
That is, $\mu_k\subset \mu_{k+1}$ for every $k$.
Since the sequence $\{\mu_k\}_k$ is increasing, it means that the algorithm is able to define $\mu_{k+1}$ by simply adding an edge connected to $t^{k+1}$ to $\mu_k$. 
Let us assume that an agent hides a set of edges and that the matching sequence found by the algorithm is still monotone increasing. 
Since the sequence of matching found in the manipulated instance is monotone increasing, it means that the final matching is obtained by adding an edge at every iteration. 
By the definition of $\MB$ and $\MD$ and since the length of the augmenting path is $1$, the edge added at step $j$, is taken from the ones that are connected to the task $t_j$.
Therefore, the agent manipulating cannot benefit from the manipulation, since it is only reducing the set of tasks that it can receive. 
\end{proof}

Since $\MG$ uses only augmenting paths of length at most equal to $1$, regardless of the input, Lemma \ref{thm:truthfulness_cases} allows us to conclude that $\MG$ is truthful.
Moreover, due to the results proven in \cite{dobrian20192}, we are able to characterize the approximation ration of $\MG$.

\begin{theorem}
\label{thm:approx_truth}
The mechanism $\MG$ is truthful with respect to agent manipulation. 
Moreover, its approximation ratio is $2$.
\end{theorem}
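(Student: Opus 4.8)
\textbf{Proof plan for Theorem \ref{thm:approx_truth}.}
The statement splits naturally into two parts: truthfulness of $\MG$, and the computation of its approximation ratio. The plan is to dispatch truthfulness immediately by invoking the machinery already in place, and then to handle the approximation ratio by combining the upper bound from \cite{dobrian20192} with a matching lower-bound construction.

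For truthfulness, the key observation is that $\MG$ is, by definition, the mechanism induced by the approximated version of Algorithm \ref{alg:Maxvalue}, which searches only among augmenting paths of length $1$. Hence on \emph{every} input $\MG$ completes its routine using only augmenting paths of length equal to $1$, and no misreport by an agent can ever cause it to use a longer augmenting path (there are no longer paths in its search at all). Therefore the hypotheses of Lemma \ref{thm:truthfulness_cases} are vacuously satisfied for every agent and every instance, and we conclude that reporting truthfully is a best strategy for each agent, i.e. $\MG$ is truthful with respect to agent manipulation. This is essentially a one-line appeal to Lemma \ref{thm:truthfulness_cases}; I expect no obstacle here.

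For the approximation ratio, the plan has two halves. First, the upper bound $ar(\MG)\le 2$: since $\MG$ is truthful, its approximation ratio equals $\sup_{I}\, w(\mu(I))/w(\mu_{\MG}(I))$, where $\mu(I)$ is an optimal MVbM for the reported instance $I$; the result of \cite{dobrian20192} states exactly that the matching produced by the length-$1$ approximation algorithm has weight at least half that of the optimum, giving $w(\mu_{\MG}(I))\ge \tfrac12 w(\mu(I))$ for all $I$, hence $ar(\MG)\le 2$. (One should note that truthfulness means we may evaluate the ratio on the true input, since agents report truthfully, but the bound of \cite{dobrian20192} is instance-wise anyway, so this subtlety is harmless.) Second, the lower bound: I would exhibit a family of instances on which the ratio approaches $2$. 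A natural candidate uses two tasks $t_1,t_2$ with $q_1=1$ and $q_2=\varepsilon$, and one agent $a_1$ with $b_1=1$ adjacent to both, plus a second agent $a_2$ with $b_1=1$ adjacent only to $t_1$, ordered so that $\MG$'s greedy length-$1$ pass assigns $t_1$ (the heavier task, processed first) to $a_1$ and then cannot place $t_2$, while the true optimum matches $t_1$ to $a_2$ and $t_2$ to $a_1$ — wait, that gives ratio $(1+\varepsilon)/1\to 1$, so this particular instance is too weak; the correct construction must force the greedy to ``waste'' a high-capacity agent on a low-value task. The standard tight example for vertex-weighted greedy uses values $1$ and $1$ (or $1$ and $1-\varepsilon$) on two tasks and an agent adjacent to both that the algorithm saturates with the ``wrong'' one, blocking an edge needed to collect the second unit; scaling this shows the supremum is $2$ and is not attained.

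The main obstacle is pinning down the lower-bound instance so that it is genuinely tight, i.e. a sequence of instances with ratio $\to 2$, while respecting the fixed processing order and the decreasing-value sort that Algorithm \ref{alg:Maxvalue} performs in line 2. Since the paper later claims (in the contributions) that the output of $\MG$ coincides with the worst Nash equilibrium of $\MB$ and $\MD$, and that the $PoA$ of those mechanisms is $2$ with a tight example, I would reuse precisely that tight instance: it simultaneously certifies $PoA(\MB)=PoA(\MD)=2$ and $ar(\MG)=2$, so no separate construction is needed. Thus the proof reduces to (i) the Lemma \ref{thm:truthfulness_cases} citation, (ii) the \cite{dobrian20192} half-optimality bound for the upper estimate, and (iii) a pointer to the tight $PoA$ instance (developed in the sequel) for the lower estimate.
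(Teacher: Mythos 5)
Your proposal is correct and follows essentially the same route as the paper: truthfulness via Lemma \ref{thm:truthfulness_cases} (since $\MG$ only ever uses length-$1$ augmenting paths), the upper bound $ar(\MG)\le 2$ via the half-optimality result of \cite{dobrian20192}, and tightness via the instance with two unit-capacity agents, tasks of value $1+\epsilon$ and $1$, and edge set $\{(a_1,t_1),(a_1,t_2),(a_2,t_1)\}$ — which is exactly the instance the paper constructs inside this proof and later reuses for the PoA bound. Your first candidate instance (values $1$ and $\epsilon$) indeed fails, as you noted, but the corrected choice of near-equal task values matches the paper's construction, so the plan is sound.
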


\begin{proof}[Proof of Theorem \ref{thm:approx_truth}]
The first part of the statement follows directly from Lemma \ref{thm:truthfulness_cases}.
The second part of the statement has been proven in \cite{dobrian20192}.
Indeed, in \cite{dobrian20192}, it has been shown that the weight of the matching returned by the approximation algorithm that uses augmenting paths of length equal to $1$ is, in the worst case, half of the maximal weight.
In their paper, Dobrian et al study the case in which both sides of the bipartite graph have a non-negative value.
However, if we set the value of each agent in $A$ to be $0$, we have that the weight of the matching studied in \cite{dobrian20192} is the same as the social welfare (from the agents' viewpoint) we defined in the present paper.
Similarly, the maximum weight of the matching is the maximum social welfare achievable.
We therefore conclude that the ratio between the social welfare returned by $\MG$ and the maximum one is greater than $0.5$.
Since the approximation ratio is the ratio between the maximum social welfare and the one achieved by $\MG$, we infer that it has to be smaller than $2$.
To prove that the approximation ratio of the mechanism is $2$, consider the following instances.
There are two agents, namely $a_1$ and $a_2$, and two tasks $t_1$ and $t_2$, whose values are $1+\epsilon$ and $1$, respectively, where $\epsilon>0$.
Let $E=\{(a_1,t_1),(a_1,t_2),(a_2,t_1)\}$ be the truthful set of edges.
It is easy to see that $\MG$ returns the matching $\mu_{\MG}=\{(a_1,t_1)\}$, while the optimal one is $\{(a_2,t_1),(a_1,t_2)\}$.
Hence the ratio between the maximum social welfare and the social welfare achieved by $\MG$ on this instance is $\frac{2+\epsilon}{1+\epsilon}$.
By taking the limit for $\epsilon\to 0$, we conclude the proof.
\end{proof}

The approximation ratio achieved by $\MG$ is actually the best possible ratio achievable, as the next result shows.

\begin{theorem}
\label{thm:tight_bound_general}
  There does not exist a truthful and deterministic mechanism for the MVbM problem that achieves an approximation ratio better than $2$.  
\end{theorem}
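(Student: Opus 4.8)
The plan is to exhibit a single family of instances on which *every* deterministic truthful mechanism must have welfare ratio approaching $2$. The natural candidate is the instance already used in the proof of Theorem~\ref{th:no_strategyproof_agent}, suitably reparametrised: two agents $a_1,a_2$ with capacities $b_1=b_2=1$, three tasks $t_1,t_2,t_3$ with values $q_1=1$, $q_2=q_3=\epsilon$ for a small $\epsilon>0$, and truthful edge set $E=\{(a_1,t_1),(a_1,t_2),(a_2,t_1),(a_2,t_3)\}$. The optimal matching has weight $1+\epsilon$ (match one agent to $t_1$ and the other to its private low-value task). The key observation is that the two optimal matchings $\{(a_1,t_1),(a_2,t_3)\}$ and $\{(a_1,t_2),(a_2,t_1)\}$ are \emph{symmetric} between the two agents, so a deterministic mechanism on input $E$ must commit to one of them (or to a non-optimal matching, which only makes the ratio worse).

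First I would argue by contradiction: suppose a deterministic truthful mechanism $\mathbb{M}$ has $ar(\mathbb{M})<2$. Apply $\mathbb{M}$ to the truthful instance $E$. If $\mathbb{M}(E)$ does not match $t_1$ at all, its welfare is at most $2\epsilon$, giving ratio $\ge (1+\epsilon)/(2\epsilon)\to\infty$, contradiction for small $\epsilon$. So $t_1$ is matched; by symmetry assume $\mathbb{M}(E)$ matches $t_1$ to $a_1$. Then $a_2$ receives at most $t_3$, so $w_2(\mathbb{M}(E))\le\epsilon$. Now let $a_2$ misreport by hiding the edge $(a_2,t_3)$, reporting $E'_2=\{(a_2,t_1)\}$. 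In the manipulated instance the only way to obtain welfare close to optimal is to match $t_1$ to $a_2$ (so that $a_1$ can take $t_2$); concretely, any matching in the manipulated instance that does not assign $t_1$ to $a_2$ has weight at most $1$ (either $\{(a_1,t_1)\}$ alone, weight $1$, or $\{(a_1,t_2)\}$, weight $\epsilon$), whereas $\{(a_2,t_1),(a_1,t_2)\}$ has weight $1+\epsilon$. Hence if $\mathbb{M}$ still assigns $t_1$ to $a_1$ (or leaves it unmatched) on the manipulated input, its ratio on \emph{that} input is $\ge(1+\epsilon)/1 = 1+\epsilon$; pushing this requires a bit more care, so instead I would combine the two conditions: since $ar(\mathbb{M})<2$ forces the welfare on the manipulated input to exceed $(1+\epsilon)/2$, and the only matching achieving welfare $>(1+\epsilon)/2$ other than the ones giving $a_2$ task $t_1$ is $\{(a_1,t_1)\}$ of weight exactly $1$ — which for $\epsilon<1$ satisfies $1>(1+\epsilon)/2$ — this case is not immediately excluded and must be handled separately.

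To close that gap I would tighten the value gadget: take $q_2=q_3$ but also a symmetric third configuration, or more simply observe that truthfulness already pins things down. Indeed, truthfulness applied to $a_2$ between $E$ and $(E'_2,E_{-2})$ gives $w_2(\mathbb{M}(E'_2,E_{-2}))\le w_2(\mathbb{M}(E))\le\epsilon$, so on the manipulated input $a_2$ is \emph{not} matched to $t_1$; therefore $t_1$ is matched to $a_1$ or unmatched, and the welfare on the manipulated input is at most $\max\{q_1, q_1+q_?\}$ where $a_1$ can hold only one task, i.e. at most $1$ (agent $a_1$ gets $t_1$, nothing else is available to raise it, since $t_3$ is now hidden and $t_2$ conflicts with $a_1$'s single slot). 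Thus the ratio on the manipulated input is at least $(1+\epsilon)/1$. Wait—this gives only $1+\epsilon$, not $2$, so the single-layer argument is insufficient; the actual proof must iterate the manipulation or use a larger instance. The main obstacle, and the step I expect to require the real work, is therefore constructing an instance where truthfulness forces a \emph{cascade}: each agent, taking as given that the mechanism will favour an earlier agent, is driven to hide its high-value-conflicting edges, and the composition of these forced responses collapses the welfare to half. I would build this as a chain $a_1,\dots,a_k$ where $a_i$ competes with $a_{i+1}$ over a valuable task while holding a private task of value $\approx 1$ versus the shared value $\approx 1$, so that in equilibrium-under-truthfulness only half the total available value is realised; taking the length and $\epsilon$ to their limits yields the bound $2$, matching Theorem~\ref{thm:approx_truth} and proving optimality of $\MG$.
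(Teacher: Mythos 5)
There is a genuine gap, and you identified it yourself: with your value gadget ($q_1=1$, $q_2=q_3=\epsilon$) the truthfulness constraint on $a_2$ only forces the mechanism's welfare on the restricted input down to $1$, against an optimum of $1+\epsilon$, so the ratio you extract tends to $1$, not $2$. The cascade/chain construction you then sketch is left entirely unbuilt, and it is not the way the bound is actually obtained; no iteration over many agents is needed.

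The missing idea is simply to choose the value gadget so that the ``fallback'' task of the constrained agent is worth roughly as much as the contested task, not negligibly little. The paper uses two agents of capacity $1$ and two tasks with values $1+\epsilon$ and $1$, with the truthful input being the complete bipartite graph. Any mechanism with ratio $2-\delta$ must allocate both tasks there (otherwise its welfare is at most $1+\epsilon$ against $2+\epsilon$, already a contradiction for $\epsilon<\delta/(1-\delta)$); say WLOG it gives $t_1$ to $a_2$ and $t_2$ to $a_1$. Now consider the truthful instance $E'=\{(a_1,t_1),(a_2,t_1),(a_2,t_2)\}$, whose optimum is still $2+\epsilon$. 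Truthfulness forbids the mechanism from giving $t_1$ to $a_1$ on $E'$: if it did, then in the complete instance $a_1$, who receives only $t_2$ of value $1$, could hide $(a_1,t_2)$ and gain $t_1$ of value $1+\epsilon$. Hence on $E'$ agent $a_1$ receives nothing, the mechanism's welfare is at most $1+\epsilon$, and the ratio is at least $(2+\epsilon)/(1+\epsilon)\to 2$. Note two features your gadget lacked: the contested task is the \emph{slightly more} valuable one, so the agent stuck with the other task has a strict incentive that truthfulness must block; and the blocked agent's alternative is worth $\approx 1$, so excluding that agent costs the mechanism half the optimal welfare rather than an $O(\epsilon)$ amount.
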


\begin{proof}[Proof of Theorem \ref{thm:tight_bound_general}]
    Toward a contradiction, assume that there exists a mechanism $\mathbb{M}$ whose approximation ratio is equal to $2-\delta$, where $\delta$ is a positive value.
    Let us now consider the following instance.
    We have two agents, namely $a_1$ and $a_2$, and two tasks, namely $t_1$ and $t_2$.
    The capacity of both agents is set to be equal to $1$.
    The value $q_1$ of $t_1$ is $1+\epsilon$ and the value $q_2$ of $t_2$ is equal to $1$.
    Finally, we assume that, according to their truthful inputs, both agents are connected to both tasks, so that $E=\{(a_1,t_1),(a_1,t_2),(a_2,t_1),(a_2,t_2)\}$.
    It is easy to see that the maximum value that a matching can achieve is $2+\epsilon$. 
    If the mechanism $\mathbb{M}$ does not allocate both the tasks, we have that the value achieved by the mechanism is, at most $1+\epsilon$.
    Therefore, we have that the approximated ratio of $\mathbb{M}$ is at least equal to $\frac{2+\epsilon}{1+\epsilon}=1+\frac{1}{1+\epsilon}$.
    If we take $\epsilon <\frac{\delta}{1-\delta}$, we get that the approximated ratio of $\mathbb{M}$ should be greater than $2-\delta$, which is a contradiction.
    Hence $\mathbb{M}$ allocates both the tasks in the previously described instance.
    Without loss of generality, let us assume that $\mathbb{M}$ allocates $t_1$ to $a_2$ and $t_2$ to $a_1$.
    Let us now consider the instance in which $a_1$ is not connected with the task $t_2$, so that $E'=\{(a_1,t_1),(a_2,t_1),(a_2,t_2)\}$.
    The maximal value that a matching can achieve is still $2+\epsilon$.
    However, since the mechanism is truthful, the agent $a_1$ cannot receive any task.
    Indeed, the only task that $\mathbb{M}$ can assign to $a_1$ is $t_1$, however, if $\mathbb{M}$ assigns $t_1$ to $a_1$, it means that agent $a_1$ can manipulate $\mathbb{M}$ by reporting the set of edges $\{(a_1,t_1)\}$ over $\{(a_1,t_1),(a_1,t_2)\}$ in the instance when the truthful input is $E$.
    Since the first agent cannot receive any tasks from $\mathbb{M}$, we have that the maximum matching value achieved by $\mathbb{M}$ when the input is $E'$ is, at most, $1+\epsilon$, so that the approximation ratio of $\mathbb{M}$ is, at least $\frac{2+\epsilon}{1+\epsilon}$.
    Again, by taking $\epsilon<\frac{\delta}{1-\delta}$, we conclude a contradiction.
    \end{proof}

From Theorem \ref{thm:tight_bound_general}, we then infer that $\MG$ is the best possible deterministic and truthful mechanism for our game theoretical setting.
To conclude, we show that, $\MG$ is also  group strategyproof if all the tasks have different values.

\begin{theorem}
\label{thm:strategyproof}
    If all the tasks in $T$ have different values, then $\MG$ is group strategyproof.
\end{theorem}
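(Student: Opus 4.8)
The plan is to analyze the structure of $\MG$'s output when all task values are distinct and show that it is essentially canonical, leaving no room for a profitable collusion. First I would recall that, since $\MG$ processes tasks in decreasing order of value and only uses augmenting paths of length $1$, its execution is a greedy procedure: at step $j$ it assigns task $t_j$ to the highest-priority available agent adjacent to $t_j$ (one whose capacity is not yet exhausted and who is reachable via a length-$1$ path), and never reassigns. Because all values are distinct, the ordering of tasks is strict and independent of the reports, so the "shape" of the greedy process depends only on which edges are reported. I would first establish a monotonicity-type lemma: if a single agent $a_i$ hides some of its edges, then for every agent $a_k$, the set of tasks $a_k$ receives can only ``improve or stay the same'' for $k \neq i$ in a controlled way, while $a_i$'s own bundle cannot improve — this last part is exactly Lemma~\ref{thm:truthfulness_cases}, which already gives truthfulness for individual deviations.

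The core of the argument is to upgrade single-agent truthfulness to group strategyproofness. Suppose a coalition $C \subseteq A$ jointly hides edges, moving from the truthful profile to a manipulated one. I would compare the greedy runs step by step (task $t_1$, then $t_2$, etc.) and track, for each agent in $C$, the task it holds. The key claim is: at the first step $j$ where the two runs differ in what happens to $t_j$, the difference is caused by some coalition member $a_i \in C$ having hidden the edge $(a_i,t_j)$ that it would have been assigned in the truthful run (it cannot be that a coalition member is assigned $t_j$ in the manipulated run but not the truthful run, since hiding edges only shrinks availability up to this point). That member $a_i$ thus \emph{loses} $t_j$, a task of value $q_j$, and — here distinctness of values is essential — every task $a_i$ could possibly pick up later has value strictly less than $q_j$, and $a_i$ has at most the same remaining capacity. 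A careful accounting, processing the differing steps in order and charging each ``downgrade'' of a coalition member against the value it forfeits, shows that the total utility lost by coalition members at these critical steps is never recovered. Hence at least one member of $C$ is strictly worse off (or all are weakly worse off with none strictly better), contradicting the definition of a successful group manipulation.

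More precisely, I would formalize this via an exchange/charging argument: define a potential equal to the multiset of (agent, task-value) pairs in $\MG$'s output restricted to $C$, and show that passing from the truthful to the manipulated run can be decomposed into a sequence of elementary moves, each of which either leaves a coalition member's held value unchanged or strictly decreases it, with at least one strict decrease whenever the outputs genuinely differ on $C$. The strictness comes from the fact that distinct values rule out ``lateral'' swaps of equal-value tasks within the coalition: every reallocation among coalition members triggered by hidden edges strictly lowers someone's value because the task that vanished from a member's reach had strictly larger value than any replacement reachable afterward (greedy having already committed all strictly-larger tasks).

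The main obstacle I anticipate is controlling the \emph{indirect} effects of the manipulation: when a coalition member hides an edge, the freed task $t_j$ may be grabbed by a non-coalition agent, which in turn may displace (via the greedy reassignment at later steps — although $\MG$ never reassigns, the \emph{availability} pattern shifts) a chain of subsequent assignments, eventually possibly benefiting a different coalition member on some \emph{lower-valued} task. Showing that such a chain can never net-increase the coalition's total utility — equivalently, that the coalition cannot ``trade down'' one member to ``trade up'' another, because all the up-moves are blocked by already-committed higher-value tasks — is the delicate part, and it is precisely where the hypothesis of pairwise-distinct task values is used in full force. I would handle it by an induction on the number of tasks (or on the number of steps at which the two runs differ), peeling off the first point of divergence and invoking the inductive hypothesis on the residual instance with the forfeited task and the adjusted capacities removed.
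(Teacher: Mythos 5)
Your opening move is the right one and matches the paper's: at the first step $j^*$ where the truthful and manipulated runs of $\MG$ diverge, the truthful assignee of $t_{j^*}$ must be a coalition member who hid that very edge, and distinctness of values is what should turn this forfeit into a strict loss. The gap is that the step you yourself flag as ``the delicate part'' is never actually closed, and the devices you propose for it do not work as described. First, the value bound alone is not enough: after forfeiting $t_{j^*}$ the agent has one \emph{more} free slot than in the truthful run, so ``every later task has value $<q_{j^*}$'' does not preclude recovering more than $q_{j^*}$ with two or more smaller tasks; ruling this out needs a dynamic argument that any extra pickup by this agent requires a higher-priority agent to be blocked at that step, and that (since the freed task $t_{j^*}$ can only flow to \emph{lower}-priority agents) such blocking can only be produced by yet another coalition member forfeiting a task it was truthfully assigned, pushing the loss upward along the priority order. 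You never set up this cascade. Second, your induction ``on the residual instance with the forfeited task and the adjusted capacities removed'' is not well defined: after step $j^*$ the two runs are in \emph{different} states (in one, $a_i$ holds $t_{j^*}$; in the other, some lower-priority agent or nobody does), so there is no common residual instance to which a group-strategyproofness hypothesis can be applied. Third, the per-move monotonicity of your potential is false as stated: a coalition member can strictly gain an individual task freed by another member's sacrifice (this is exactly what happens in the paper's Example~\ref{ex:collusion}), so elementary moves need not weakly decrease held values; only a global accounting can work, and that is precisely what is missing.

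The clean way to finish, using a tool the paper already provides, is Theorem~\ref{thm:NE_and_approx}: on \emph{any} reported edge set, the output of $\MG$ is $\cup_{a_i\in A}\SP_i$, i.e.\ a serial dictatorship in which agent $a_i$ receives its top $b_i$ reported tasks among those left by $a_1,\dots,a_{i-1}$. Now induct over agents in priority order: assuming all earlier agents receive exactly their truthful bundles, the set of available tasks facing $a_i$ is unchanged; if $a_i\notin C$ it reports truthfully and gets its truthful bundle, while if $a_i\in C$ its truthful bundle is the \emph{unique} value-maximizing bundle available to it (here distinct, positive values are used), so any change caused by its hidden edges makes it strictly worse off, contradicting the definition of a successful coalition; hence its bundle, and therefore the residual task set, is unchanged. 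The induction shows the whole output is unchanged, so no member strictly gains. This is essentially the argument the paper compresses into the remark that each task is allocated ``independently'' and that with distinct values the member hiding an assigned edge is strictly hurt; your first-divergence framing can be salvaged, but only by adding the priority-cascade analysis above rather than the charging scheme you sketch.
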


\begin{proof}[Proof of Theorem \ref{thm:strategyproof}]
    Toward a contradiction, let us assume that there exists a coalition of agents $C=\{a_{i_1},\dots,a_{i_\ell}\}$ that is able to collude.
    Since hiding edges that are not returned by $\MG$ does not alter the outcome of the mechanism, we assume, without loss of generality, that at least agent $a_{i_1}$, hides one of the edges that are in the matching found by $\MG$ when all the agents report truthfully.
    Let us denote with $t_l$ the task connected to $a_{i_1}$ through the hidden edge.
    After misreporting agent $a_{i_1}$ cannot be allocated with $t_l$.
    Furthermore, due to the routine of $\MG$, each task is allocated independently from the others, hence $a_{i_1}$ cannot be allocated with a better task.
    Finally, since there are no tasks with the same value, $a_{i_1}$'s payoff is necessarily lowered by misreporting, even if in a coalition, which is a contradiction.
    We therefore conclude the proof.
\end{proof}

The condition of Theorem \ref{thm:strategyproof} are tight.
Indeed, as we show in the next example, even if just two tasks have the same value, the mechanism is no longer group strategyproof.

\begin{example}
\label{ex:collusion}
Let us consider the following instance. 
The set of agents is composed of three elements, namely $a_1$, $a_2$, and $a_3$.
The capacity of each agent is set to $1$, so that $b_1=b_2=b_3=1$.
The set of tasks is composed of two elements, namely $t_1$ and $t_2$.
The value of both tasks is equal to $1$.
Finally, let $E=\{(a_1,t_1),(a_1,t_2),(a_2,t_2),(a_3,t_1)\}$ be the truthful input.
Then, $\MG(E)=\{(a_1,t_1),(a_2,t_2)\}$.
However $a_1$ and $a_3$ can collude: if agent $a_1$ hides the edge $(a_1,t_1)$, the $\MG$ returns $\mu'=\{(a_1,t_2),(a_3,t_1)\}$. 
\end{example}

\subsection{Strategy Characterization and Worst-case Equilibrium}

\label{sect:strategy}

In this subsection, we study the Nash Equilibria induced by the mechanisms $\MB$ or $\MD$. 
First, we show that an agent who is not matched to any task when reporting truthfully cannot improve its utility by manipulation.

\begin{lemma}
\label{lemma:nopayoff}
Given a truthful input, if an agent receives a null utility from $\MB$, then its utility cannot be improved by hiding edges.
The same holds for the mechanism $\MD$.
\end{lemma}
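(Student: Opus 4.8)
The plan is to prove something slightly stronger than the statement: if $a_i$ is unmatched in the matching returned by $\MB$ on the truthful input $E$, then for \emph{every} report $E_i'\subseteq T_i$ of $a_i$ (with all other agents reporting truthfully) the mechanism returns exactly the same matching, so in particular $a_i$ is still unmatched and its utility remains $0$. The same argument applies verbatim to $\MD$. For the lemma as stated it would be enough to track only that $a_i$ stays unmatched, but the full equality of outputs comes for free from the same induction.

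First I would record two monotonicity properties of Algorithm \ref{alg:Maxvalue}: along the sequence $\mu_0,\mu_1,\dots,\mu_m$ it produces, (i) once a task becomes matched it stays matched, and (ii) the number of edges incident to any fixed agent is non-decreasing. Both follow by inspecting a single augmentation $\mu_j=\mu_{j-1}\oplus P_j$: the starting task of $P_j$ becomes matched and stays so, the internal vertices keep their degree, and the terminal (unsaturated) agent gains exactly one edge. Consequently, if $a_i$ receives null utility from $\MB$ on $E$, then $a_i$ has degree $0$ in $\mu_j$ for every $j$, hence $a_i$ is unsaturated with respect to $\mu_{j-1}$ at every step $j$. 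An unsaturated vertex can never be an internal vertex of an augmenting path, and if it were the terminal vertex it would become matched; therefore $a_i$ lies on no augmenting path selected during the truthful run of Algorithm \ref{alg:Maxvalue} on $E$.

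Next I would run the algorithm on $E$ and on $E'=(E_i',E_{-i})$ in parallel and prove by induction on $j$ that they produce identical matchings $\mu_j=\mu_j'$, and that this common matching contains no edge of $a_i$. The base case $j=0$ is trivial. For the inductive step, set $\nu:=\mu_{j-1}=\mu_{j-1}'$; by the hypothesis $\nu$ has no $a_i$-edge, so $a_i$ is still unsaturated with respect to $\nu$ in both instances. The two searches for an augmenting path from $t_j$ with respect to $\nu$ differ only in that the edges of $a_i$ hidden in $E_i'$ are unavailable on the $E'$-side. The key point is that the search on $E$ never traverses an edge of $a_i$: for the DFS this is immediate, since the first time the DFS reached $a_i$ it would halt and output an augmenting path terminating at $a_i$, contradicting the previous paragraph; for the BFS, $a_i$ is necessarily a leaf of the alternating search tree (it has no matching edge through which the alternating path could continue), and since $a_i$ is never the endpoint the BFS selects, the hidden edges---all incident to this leaf---affect neither the set of reachable task vertices nor the selected shortest augmenting path. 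In either case the same $P_j$ (or the same ``no augmenting path'' verdict) is obtained in both instances, so $\mu_j'=\nu\oplus P_j=\mu_j$; moreover $P_j$ has no $a_i$-edge, which re-establishes the invariant. Taking $j=m$ yields $\MB(E')=\MB(E)$, hence $w_i(\MB(E'))=w_i(\MB(E))=0$, and the identical reasoning gives $\MD(E')=\MD(E)$.

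The monotonicity facts and the DFS case are routine bookkeeping. The step that needs genuine care is the BFS case of the induction: one must argue that although the BFS may \emph{discover} $a_i$ while still committing to a different terminal vertex, it can never actually select $a_i$ (otherwise $a_i$ would be matched), and that deleting a discovered-but-unselected leaf together with some of its incident edges perturbs neither the layered structure of the search nor the mechanism's final choice. Making this robust to whatever tie-breaking rule the BFS implementation uses is the only delicate point; everything else is routine.
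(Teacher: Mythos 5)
Your proposal is correct and follows essentially the same route as the paper's own proof: both compare the run of Algorithm \ref{alg:Maxvalue} on the truthful and the manipulated inputs, localize the first step at which the executions could differ, and conclude that any discrepancy would force an edge of the unmatched agent onto a selected augmenting path, which is impossible since (by the monotonicity of agents' matched degrees) that agent would then end up matched. Your version is simply more careful than the paper's — you prove the slightly stronger claim that the two outputs coincide, treat $\MD$ explicitly, and spell out the BFS ``discovered-but-unselected leaf'' tie-breaking subtlety that the paper's argument glosses over — but the underlying idea is the same.
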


\begin{proof}[Proof of Lemma \ref{lemma:nopayoff}]
For the sake of simplicity, we present the proof only for the mechanism $\MB$.
Let us denote with $G$ the truthful graph and with $G'$ the graph manipulated by agent $a_k$. 
Let $\mu$ and $\mu'$ be the matching returned by $\MB$ when $G$ and $G'$ are given as input, respectively. 
By contradiction, suppose $\mu'\neq\mu$. In this case, there exists $t_j$ such that $P_j\neq P_j'$ and $P_r=P_r'$ for any $r<j$, where $P_l$ and $P_l'$ are the augmenting paths returned by the BFS at the $l$-th step.
Since $G'\subset G$ and $\mu_{l-1}=\mu'_{l-1}$, we must have that the last edge of $P_l$ is in $G$ and not in $G'$.
However, since $G'$ is only missing edges connected to $a_k$,  we must have that $P_l$ contains an edge that connects agent $a_k$ to a task. 
In particular, at the $j$-th step, the augmenting path returned by the BFS contains an edge connected to $a_k$, which is not possible since $a_k$ is left unmatched by $\mu$.
\end{proof}

For this reason, starting now, we make the assumption that if an agent can manipulate either $\MB$ or $\MD$, then that agent is assigned at least one task when reporting truthfully.

We ow demonstrate that the first agent processed by either $\MB$ or $\MD$, denoted as $a_1$, is always capable of obtaining its highest possible payoff by misreporting.

\begin{theorem}
\label{thm:max_match_agent_one}
For both $\MB$ and $\MD$, agent $a_1$'s best strategy is to report only the top $b_1$-valued tasks to which it is connected.
\end{theorem}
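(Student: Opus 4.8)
The plan is to show that if agent $a_1$ reports exactly $S^* = \{t_{j_1},\dots,t_{j_{b_1}}\}$, the set of the $b_1$ highest-valued tasks among those it is truly connected to (or all of $T_1$ if $|T_1| \le b_1$), then the matching produced by $\MB$ (resp. $\MD$) assigns all of $S^*$ to $a_1$, giving $a_1$ utility $\sum_{k=1}^{b_1} q_{j_k}$, which is trivially an upper bound on what any strategy can yield. So the only thing to prove is that this particular misreport saturates $a_1$ with precisely $S^*$.

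First I would recall how Algorithm~\ref{alg:Maxvalue} processes the tasks: it sorts them by decreasing value and, at step $j$, tries to find an augmenting path from $t_j$ with respect to the current matching $\mu_{j-1}$. The key structural fact I would use is that once a task is matched it stays matched (augmenting paths only reroute, never unmatch a task), and once $a_1$ is saturated it stays saturated — the only way an augmenting path can pass ``through'' $a_1$ is by using one of $a_1$'s edges, but after $a_1$ reports only $S^*$, every edge of $a_1$ goes to a task in $S^*$. I would then argue by induction on the sorted order restricted to $S^*$: when the algorithm reaches the highest-valued task in $S^*$, call it $t_{j_1}$, the single edge $(a_1,t_{j_1})$ is itself an augmenting path of length $1$ (since $a_1$ is still unsaturated and $t_{j_1}$ unmatched), so the chosen augmenting path — whatever BFS or DFS picks — results in $t_{j_1}$ being matched; moreover $a_1$ is either matched to $t_{j_1}$ directly, or some earlier reasoning forces... here I need to be a little careful. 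The cleaner route: show that after the algorithm has processed all tasks, $a_1$ must be saturated, and that $T_{\mu,1} \subseteq S^*$ (immediate, since $a_1$ only reports $S^*$), hence $|T_{\mu,1}| = b_1$ forces $T_{\mu,1} = S^*$.

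To show $a_1$ is saturated: suppose not, so $a_1$ has fewer than $b_1$ matched edges at termination, meaning at least one task $t_{j_k} \in S^*$ was never matched to $a_1$. Consider the step $j_k$ at which the algorithm processed $t_{j_k}$. At that moment $a_1$ was unsaturated (it is unsaturated at the end, and saturation is monotone). But then $(a_1, t_{j_k})$ is an augmenting path of length $1$ from $t_{j_k}$ with respect to $\mu_{j_k - 1}$ (if $t_{j_k}$ was already matched — impossible, since at step $j_k$ task $t_{j_k}$ has not yet been processed and only processed tasks get matched). Hence the algorithm finds *some* augmenting path from $t_{j_k}$ and $t_{j_k}$ becomes matched and stays matched. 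So every task of $S^*$ ends up matched. Now if some $t_{j_k} \in S^*$ is matched but not to $a_1$, it is matched to some other agent via a path that — since $a_1$'s only edges go to $S^*$ — could have been shortened/rerouted; this is where I would invoke an exchange argument: consider the final matching $\mu$ and the alternating path in $\mu \oplus \{\text{all edges }(a_1,t)\,:\,t\in S^*\}$ starting at the unsaturated vertex $a_1$; following it reaches a task in $S^*$ that is matched, and swapping along it strictly increases $|\mu \cap \delta(a_1)|$ while keeping $\mu$ a valid $b$-matching of the same weight, contradicting optimality of $\mu$ only if... actually it keeps the weight the same, so I instead argue this contradicts that $\MB$/$\MD$ actually *found* an augmenting path at step $j_k$ (it would have routed through $a_1$). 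I would formalize the latter: at step $j_k$, since $a_1$ is unsaturated and adjacent to $t_{j_k}$, BFS/DFS explores $(a_1,t_{j_k})$ and the length-$1$ path is available, so after step $j_k$, $t_{j_k}$ is matched; then I track forward that $a_1$ cannot later be ``bumped off'' $t_{j_k}$ because re-routing through $a_1$ requires $a_1$ to already be saturated by $S^*$-tasks, which only helps.

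The main obstacle is the bookkeeping around augmenting paths that pass through $a_1$: I must rule out the scenario where, at some later step $j' > j_k$, an augmenting path from $t_{j'}$ uses the edge $(a_1, t_{j_k})$ in $\mu$ and the edge $(a_1, t)$ not in $\mu$ for some other $t \in S^*$, thereby moving $a_1$ from $t_{j_k}$ to $t$ — but this does not decrease $|T_{\mu,1}|$, it just permutes which task of $S^*$ $a_1$ holds, so it is harmless; the genuinely dangerous case is an augmenting path ending at $a_1$ (i.e., $a_1$ being the unsaturated endpoint and *gaining* an edge) which only increases $|T_{\mu,1}|$, also harmless. So in fact $|T_{\mu,1}|$ is non-decreasing across the run once it becomes positive, and the length-$1$ availability argument at each step $j_k$ forces it to reach $b_1$. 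I would also need the small observation that handles the $|T_1| < b_1$ case (then $S^* = T_1$ and the same argument gives $a_1$ all of $T_1$), and note the argument is identical for $\MD$ since it only relies on the existence, not the specific choice, of the length-$1$ augmenting path. Finally, I would remark that this establishes $a_1$'s payoff equals $\sum_{k=1}^{b_1} q_{j_k}$, the maximum conceivable utility, so this is a best response — completing the proof.
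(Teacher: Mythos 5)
There is a genuine gap, and it sits exactly at the point you flagged with ``here I need to be a little careful'' and then never resolved: your argument never uses the fact that $a_1$ is the \emph{first} agent in the processing/exploration order of $\MB$ and $\MD$. From the availability of the length-$1$ augmenting path $(t_{j_k},a_1)$ at step $j_k$ you only conclude that the algorithm finds \emph{some} augmenting path, hence that $t_{j_k}$ gets matched to \emph{someone}; the final claim that ``the length-$1$ availability argument at each step $j_k$ forces $|T_{\mu,1}|$ to reach $b_1$'' is a non sequitur, because at step $j_k$ the algorithm is free to return a different augmenting path, e.g.\ a length-$1$ path to another unsaturated agent adjacent to $t_{j_k}$, in which case $a_1$ gains nothing at that step and may end the run unsaturated (your fallback ``it would have routed through $a_1$'' is not justified by anything you proved, and the exchange argument, as you yourself note, fails since the swap preserves the weight). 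Indeed you assert that the proof ``only relies on the existence, not the specific choice, of the length-$1$ augmenting path,'' but the specific choice is precisely what matters: the statement is false for an agent that plays this strategy without having top priority, as the paper's Example~\ref{eq:social_welfarealter} shows, so no correct proof can avoid using $a_1$'s position in the order.

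The paper closes this hole in one line: since $a_1$ has the highest priority, both BFS and DFS explore $a_1$ first when searching from $t_{j_r}$, so the augmenting path $P_{j_r}=\{(t_{j_r},a_1)\}$ is the first one found and is the one returned by Algorithm~\ref{alg:Maxvalue}; moreover $a_1$ is unsaturated at each step $j_r$ because at most $r-1<b_1$ of its reported edges can be matched by then. Once all $b_1$ reported edges of $a_1$ lie in the matching, no augmenting path can enter $a_1$ (there is no non-matching edge incident to it), so its allocation is never altered in later iterations. Your surrounding bookkeeping (matched tasks stay matched, $|T_{\mu,1}|$ cannot decrease, $T_{\mu,1}\subseteq S^*$, and the $|T_1|<b_1$ case) is fine, but without the priority argument the central claim that $a_1$ actually receives each $t_{j_r}$ does not follow.
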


\begin{proof}[Proof of Theorem \ref{thm:max_match_agent_one}]
Let us denote with $t_{j_1},t_{j_2},\dots,t_{j_{b_1}}$ the top $b_1$ tasks agent $a_1$ is connected to.
For every $t_{j_r}$, at the $j_r$-th step of Algorithm \ref{alg:Maxvalue}, agent $a_1$ will not be saturated; therefore the path $P_{j_r}=\{(t_{j_r},a_1)\}$ is augmenting with respect to matching $\mu_{j_r-1}$.
Moreover, since the BFS searches among the vertices in lexicographical order, the path $P_{j_r}$ will always be the first one being explored and, since it is augmenting, it will be the one returned. 
To conclude, we notice that, after the $j_r$-th iteration, there are no augmenting paths that pass by $a_1$ as all the edges connected to $a_1$ are already in the matching, hence the set of tasks allocated to $a_1$ will not change in later iterations of the algorithm.
By a similar argument, we get to the same conclusion for $\MD$.
\end{proof}

In the next example, we show that the advantage described in Theorem \ref{thm:max_match_agent_one} is only due to the fact that agent $a_1$ is self-aware of its position in the processing process.
Indeed, if the same strategy is played by an agent that has the same truthful input but a different processing priority, the outcome of the manipulation might be detrimental to the agent.

\begin{example}
\label{eq:social_welfarealter}
Let us consider the following instance.
There are $3$ agents, namely $\alpha$, $\beta$, and $\gamma$ whose capacities are $b_\alpha=2$ and $b_\beta=b_\gamma=1$. 
Let us consider a set of $4$ tasks, namely $t_j$ with $j\in [4]$ whose respective values are $q_j=2^{-j}$.
In the truthful input, agent $\alpha$ is connected to all $4$ tasks, while agent $\beta$ is connected only to task $t_1$ and agent $\gamma$ is connected only to task $t_2$.
Let us consider the mechanism $\MB$: the maximum matching for the truthful input allocates $t_1$ to agent $\beta$, $t_2$ to agent $\gamma$, and the other two tasks to agent $\alpha$. 
Let us now assume that the processing order of the agents is $\alpha$, $\beta$, and $\gamma$. That is, $a_1=\alpha$, $a_2=\beta$, and $a_3=\gamma$.
Then, if agent $\alpha$ applies the strategy highlighted in Theorem \ref{thm:max_match_agent_one} and reports only the first two edges, it improves its utility from $q_3+q_4$ to $q_1+q_2$.
However, in a different order of agents in which agent $\alpha$ is the second, i.e., $a_2=\alpha$, if it applies the same strategy, then agent $\alpha$ gets only one of the two tasks (depending on who is the agent going first) while if it goes as the third, it receives no tasks.
We also note that, if agent $\alpha$ is the second, its best strategy is to report the edges connecting it to tasks $t_1$ and $t_3$ if agent $\gamma$ goes first or tasks $t_2$ and $t_3$ if $\beta$ goes first.
Hence, the priority of agent $\alpha$ and the priority of the other two agents determines what the best strategy for agent $\alpha$ is.
Similarly, the same conclusion can be drawn for the mechanism $\MD$.
\end{example}

If the agents' order is changed and the same agent does not have the highest priority, then its best strategy depends on its priority and the agents before it.  
Indeed, the agents' order determines the Nash Equilibrium of both $\MB$ and $\MD$.
For every $i=0,1,\dots,n$, let us define the sets $T^{(i)}$ and $B^{(i)}$ in the following iterative way:
 \begin{enumerate}[label=(\roman*)]
    \item $T^{(0)}=T$ where $T$ is the set of all the tasks and $B^{(0)}=\emptyset$;
    \item $T^{(i)}=T^{(i-1)}\backslash B^{(i-1)}_{i}$, where $B^{(i-1)}_{i}$ is the set containing the top $b_i$-valued tasks among the ones in $T^{(i-1)}$ that agent $i$ is connected to. If there are less than $b_i$ tasks among the ones that the agent can take, it takes all the tasks in $T^{(i-1)}$ that it is connected to.
\end{enumerate}

We then define the \emph{\safepolicypp} (FCFS policy) of agent $a_i$ as $\SP_i=\{(a_i,t_j)\}_{t_j\in B^{(i-1)}_i}$.
Notice that $\SP_i\in \mathcal{S}_i$ for every $i\in [n]$, so that it is a feasible strategy for every agent.

\begin{theorem}
\label{thm:Nash_Equlibrium}
Given an MVbM problem, the \safepolicies\ constitute a Nash Equilibrium that achieves the lowest social welfare in both mechanisms $\MB$ and $\MD$.
Moreover, we have 
\[
\MB(\cup_{a_i\in A}\SP_i)=\MD(\cup_{a_i\in A}\SP_i)=\cup_{a_i\in A}\SP_i.
\]
\end{theorem}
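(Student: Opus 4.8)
The statement has two halves: that the profile in which every agent $a_i$ reports its FCFS policy $\SP_i$ is a Nash Equilibrium of $\MB$ and of $\MD$ with the smallest possible social welfare, and that $\MB$ and $\MD$ both reproduce the reported edge set $\cup_{a_i\in A}\SP_i$ as their output. I would prove the second, ``fixed point'' half first, since it underlies the first. By construction the sets $B^{(0)}_1,B^{(1)}_2,\dots$ are pairwise disjoint ($T^{(i)}=T^{(i-1)}\setminus B^{(i-1)}_i$) and $|B^{(i-1)}_i|\le b_i$, so in $\cup_{a_i}\SP_i$ every task carries at most one edge and every agent $a_i$ carries exactly $|B^{(i-1)}_i|\le b_i$ edges; thus $\cup_{a_i}\SP_i$ is itself a $b$-matching. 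Running Algorithm \ref{alg:Maxvalue} on it and processing the tasks in decreasing value: a task $t_j\notin\bigcup_i B^{(i-1)}_i$ has no incident edge and is never matched; a task $t_j\in B^{(i-1)}_i$ has the single incident edge $(a_i,t_j)$, and when it is processed $a_i$ has so far been matched to at most $|B^{(i-1)}_i|-1<b_i$ tasks (only the earlier-processed tasks of $B^{(i-1)}_i$), hence is unsaturated, so the length-$1$ path $\{(t_j,a_i)\}$ is augmenting and --- being $t_j$'s only incident edge --- is the one returned by both the BFS and the DFS. Since only length-$1$ augmenting paths are ever used, $\MB$ and $\MD$ coincide and output exactly $\cup_{a_i}\SP_i$; in particular, under this profile agent $a_i$ receives precisely $B^{(i-1)}_i$.

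Next I would verify the profile is a Nash Equilibrium. Fix $a_i$, let every $a_k$ with $k\neq i$ keep reporting $\SP_k$, and let $a_i$ deviate to an arbitrary $S_i\in\mathcal{S}_i$. The key lemma is that such a deviation does not disturb any higher-priority agent: for $k<i$, in the run of $\MB$ (resp.\ $\MD$) on $(S_i,\SP_{-i})$ agent $a_k$ is still matched to exactly $B^{(k-1)}_k$. Indeed, in the reported graph a task $t_j\in B^{(k-1)}_k$ is incident only to $a_k$ --- no other higher-priority agent reports it, their $B$-sets being disjoint from $B^{(k-1)}_k$, and no $a_{k'}$ with $k'>i$ reports it, since $B^{(k'-1)}_{k'}\subseteq T^{(k'-1)}\subseteq T\setminus B^{(k-1)}_k$ --- and possibly to $a_i$ (only if $t_j\in S_i$). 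By induction on the processing order, when $t_j$ is reached $a_k$ has been matched to at most $b_k-1$ tasks (only earlier tasks of $B^{(k-1)}_k$), hence is unsaturated, and since it precedes $a_i$ it wins $t_j$ through $\{(t_j,a_k)\}$; and no later augmenting path can re-route $t_j$, for such a path would have to enter $a_k$ along a non-matching edge to a task of $B^{(k-1)}_k$ while $a_k$ is saturated, which is impossible because then all of $a_k$'s neighbours $B^{(k-1)}_k$ would already be matched to $a_k$. Therefore every task of $\bigcup_{k<i}B^{(k-1)}_k$ goes to a higher-priority agent, so $a_i$ can be matched only within $T^{(i-1)}$; being matched to at most $b_i$ tasks, all truthful neighbours, its utility is at most $w(B^{(i-1)}_i)$, which it already attains by playing $\SP_i$. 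Hence $\SP_i$ is a best response for every $a_i$, and the profile is a Nash Equilibrium of $\MB$ and of $\MD$.

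Finally I would show this equilibrium minimises the social welfare, i.e.\ $w(\MB(\sigma))\ge w(\cup_{a_i}\SP_i)$ for every Nash Equilibrium $\sigma$, by induction on the priority order: in any equilibrium, agents $a_1,\dots,a_i$ jointly obtain value at least $\sum_{k\le i}w(B^{(k-1)}_k)$. For $i=1$, Theorem \ref{thm:max_match_agent_one} gives that $a_1$ can secure $w(B^{(0)}_1)$ against any reports of the others and cannot do better (as $B^{(0)}_1$ is a maximum-value set of $b_1$ of its neighbours), so in equilibrium it receives value exactly $w(B^{(0)}_1)$, attained on the set $B^{(0)}_1$ itself. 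The inductive step passes to the sub-instance obtained by deleting $a_1$ and the tasks assigned to it: the remaining agents' strategies still form an equilibrium of the sub-instance, the induction hypothesis bounds their joint value by $\sum_{2\le k\le i}w(B^{(k-1)}_k)$, and adding $w(B^{(0)}_1)$ yields the claim; taking $i=n$ completes the argument.

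\textbf{Main obstacle.} The delicate step is this last one: one must argue carefully that an equilibrium of the full instance restricts to an equilibrium of the instance with $a_1$ and its assigned tasks removed, and that a higher-priority agent's equilibrium allocation is pinned to $B^{(k-1)}_k$ and not merely to some set of the same total value --- this is where distinctness of the task values (or a tie-breaking convention consistent with the processing order of Algorithm \ref{alg:Maxvalue}) is used; without such a convention an agent can be indifferent between several maximum-value sets and a lower-welfare equilibrium than the FCFS one can arise. The secondary difficulty is the ``no re-routing'' step of the equilibrium lemma, i.e.\ ruling out that a longer augmenting path found later steals a claimed task from a higher-priority agent; it relies precisely on the fact that under $(S_i,\SP_{-i})$ each task in $B^{(k-1)}_k$ is incident to $a_k$ and, at most, to the deviator $a_i$.
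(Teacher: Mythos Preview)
Your treatment of the fixed-point identity and of the Nash Equilibrium part tracks the paper's proof closely, and your ``no re-routing'' lemma is a welcome sharpening of what the paper leaves implicit. The genuine divergence is in the minimality argument, and that is also where your proposal has a gap.

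For the claim that FCFS is a \emph{worst} equilibrium, the paper does not pass to a sub-instance. It argues by contradiction entirely inside the full game: take any equilibrium with strictly smaller welfare, pick the first agent $a_k$ (in priority order) whose payoff is strictly below $w(B^{(k-1)}_k)$, and show that $a_k$ could unilaterally deviate to $\SP_k$ --- still against the \emph{original} equilibrium strategies $S_{a_1},\dots,S_{a_{k-1}},S_{a_{k+1}},\dots$ --- and thereby secure at least $w(B^{(k-1)}_k)$, contradicting best response. The peeling is only used to pin down that the agents with priority above $a_k$ already occupy (under distinctness) exactly their FCFS sets in the \emph{same} run, so that none of $B^{(k-1)}_k$ is blocked for $a_k$ after the deviation.

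Your route instead deletes $a_1$ together with its allocated tasks and asserts that the restricted profile is an equilibrium of the smaller instance, then invokes the induction hypothesis there. This restriction step is exactly the obstacle you flag, and it is not benign: there is no reason, in general, that $\MB$ (or $\MD$) on $(S_2,\dots,S_n)$ over $T\setminus B^{(0)}_1$ reproduces the allocations that $a_2,\dots,a_n$ received when $a_1$ was present and reporting $S_1$ --- $S_1$ may contain edges outside $B^{(0)}_1$ that alter augmenting-path searches for the other agents --- nor that best-responding in the full game implies best-responding in the truncated one. Without both facts you cannot transport the induction hypothesis back. The paper's direct deviation argument sidesteps this entirely: it never needs the mechanism to ``factor'' through a sub-instance, only that $a_k$'s unilateral switch to $\SP_k$ in the full instance is profitable. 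Your observation about ties is correct and applies equally to the paper's argument, which silently assumes that earning value $w(B^{(k-1)}_k)$ forces occupying the set $B^{(k-1)}_k$.
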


\begin{proof}[Proof of Theorem \ref{thm:Nash_Equlibrium}]
We prove the first part of the theorem in two steps.
First, we prove that the \safepolicies\ constitute a Nash Equilibrium.
Second, we show that the Nash Equilibrium they form is the one with the lowest possible social welfare.
Let us then consider agent $a_i$ and, toward a contradiction, let us assume that, when all the other agents are using their \safepoliciesp, reporting the set of edges $S_{a_i}$ gives $a_i$ a bigger payoff than what it would get from reporting $\SP_i$. 
Let us set $s_i=\min_{(a_i,t_j)\in \SP_{i}} q_j$.
Let us assume that $S_{a_i}$ contains an edge that connects $a_i$ to a task that has a higher value than $s_i$, namely $t_l$.
We now show that, since the other agents are applying their \safepoliciesp, the task $t_l$ is allocated to an agent with higher priority unless the edge already belongs to $\SP_i$. 
Indeed, since $|\SP_j|\le b_j$ for every $a_j\in A$, there cannot be augmenting paths that pass by any of the agents playing their FCFS policy.
Indeed, since the union of the FCFS policies is a $b$-matching, either $(a_i,t_l)\in \SP_i$ or there exists another agent whose FCFS policy connects it to $t_l$.
If $t_l$ is connected to an agent $a_k$, $(a_k,t_l)\in FCFSP_k$, and agent $a_k$'s priority is higher than agent $a_i$'s priority, the final output of the mechanism assigns $t_l$ to $a_k$.
We can then assume that $S_{a_i}$ does not contain edges that connect agent $a_i$ to tasks with values higher than $s_i$ and that do not belong to $\SP_i$.
To conclude, we notice that if $S_{a_i}$ contains an edge connecting agent $a_i$ to a task that has a value lower than $s_i$, then the payoff of agent $a_i$ can only be lowered. 
Indeed, if $|\SP_i|=b_i$, then $a_i$ cannot improve its payoff by reporting edges that connect $a_i$ to tasks that have a value lower than $s_i$.
This follows from the fact that all the other players are using their \safepolicies\ and therefore agent $a_i$ is allocated the set $B^{(i-1)}_i=\{t_j\in T \;\text{s.t.}\; (a_i,t_j)\in \SP_i\}$ if it uses its \safepolicyp.
If $|\SP_i|<b_i$, by definition, it means that there are no tasks that agent $a_i$ can be connected to and that have a value lower than $s_i$.
Therefore $S_{a_i}$ does not contain edges connecting $a_i$ to a task with a value lower than $s_i$ nor edges connecting it with tasks that have a value greater than $s_i$ and that are not included in $\SP_i$.
Since reporting a subset of $\SP_i$ would result in a lower payoff, we deduce that $S_{a_i}=\SP_i$, which is a contradiction since we assumed $\SP_{i}\neq S_{a_i}$.
We now prove that the Nash Equilibrium induced by the \safepolicies\ is one of the worst equilibria.
Toward a contradiction, let us consider another set of strategies, namely $\{S_{a_i}\}_{a_i\in A}$, such that the social welfare achieved by this equilibrium is strictly lower than the one obtained if every agent uses its \safepolicy.
By the definition of social welfare, there must exist at least one agent that, according to the equilibrium defined by $\{S_{a_i}\}_{a_i\in A}$, receives a payoff that is strictly lower than the one it would obtain by using its \safepolicyp.
Let us denote with $a_k$ the first agent that, according to the priority order of the mechanism, receives a lower value.
Agent $a_k$ cannot be the first agent, as it otherwise could apply its \safepolicy\ and get a better payoff.
Then, agent $a_k$ is among the remaining agents and it is not getting any of the tasks that are given to the first agent according to its \safepolicyp, as otherwise, agent $a_1$ could increase its payoff by manipulating and the set of strategies $\{S_i\}_{a_i\in A}$ would not be a Nash Equilibrium.
From a similar argument, we infer that agent $a_k$ cannot be the second agent, as otherwise, it could get a better payoff by using its \safepolicyp.
Moreover, the second agent is allocated the tasks that are granted to it from its \safepolicyp.
Both of which we have already proved cannot be.
By applying the same argument to the other agents, we get a contradiction, since no agent can be agent $a_k$.
We, therefore, conclude that the set of strategies given by the \safepolicies\ is one of the worst Nash Equilibrium.
The last part of the Theorem follows from the fact that $\cup_{a_i\in A}\SP_i$ is itself a $b$-matching, hence it is an $MVbM$ for the edge set $E=\cup_{a_i\in A}\SP_i$. 
\end{proof}

Following the same argument presented in the proof of Theorem \ref{thm:Nash_Equlibrium}, we infer that any set of strategies $\{S_i\}_{a_i\in A}$ for which it holds $\SP_i\subset S_i$ and 
\[
\min_{t_j, \;(a_i,t_j)\in \SP_i} q_j=\min_{t_j, \; (a_i,t_j)\in S_i}q_j
\]
for every $i$, defines a Nash Equilibrium.
Among this class of Nash Equilibria, $\{\SP_i\}_{a_i\in A}$ is the only one for which it holds 
\[
\MB(\cup_{a_i\in A}\SP_i)=\MD(\cup_{a_i\in A}\SP_i)=\cup_{a_i\in A}\SP_i. 
\]
Moreover, this equilibrium achieves the same social welfare of the matching returned by $\MG$.

\begin{theorem}
\label{thm:NE_and_approx}
Given an MVbM problem, let $\mu_E$ be the matching returned by $\MG$.
Then, it holds that $\mu_E=\cup_{a_i\in A}\SP_i$. 
That is, the output of $\MG$ on any given instance is equal to the union of the agents' \safepoliciesp.
Hence, the social welfare achieved by $\MG$ is 
equal to the social welfare achieved by $\MB$ and $\MD$ in one of their worst Nash Equilibria.
\end{theorem}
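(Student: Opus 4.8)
The plan is to prove the equality $\mu_E=\cup_{a_i\in A}\SP_i$ by induction on the number of agents $n$, and then to read off the last sentence of the statement directly from Theorem~\ref{thm:Nash_Equlibrium}. The first preparatory step is to record a convenient description of the behaviour of $\MG$: since the approximation algorithm only looks for augmenting paths of length $1$, when a task $t_j$ is examined (tasks being processed in non-increasing order of value, with a fixed tie-breaking order) $\MG$ matches $t_j$ to the highest-priority agent that is connected to $t_j$ and not yet saturated, and leaves $t_j$ unmatched if no such agent exists. In particular the matching $\MG$ builds is monotone increasing, so once an agent is saturated it remains saturated, and an agent's allocated set never shrinks during the execution. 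For the argument to be clean I would also fix, once and for all, that "top $b_i$-valued tasks" in the definition of $B^{(i-1)}_i$ is interpreted with respect to the same tie-breaking order used by Algorithm~\ref{alg:Maxvalue}, so that all the sets $B^{(i)}$ are well defined.

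For the base case $n=1$, agent $a_1$ has the highest priority for every task, hence it is assigned exactly the first $b_1$ tasks, in processing order, to which it is connected; these are precisely the tasks in $B^{(0)}_1$, i.e. the edges of $\SP_1$, so $\mu_E=\SP_1$. For the inductive step, I would first show that in the full instance $a_1$ receives exactly $B^{(0)}_1=\SP_1$: every time a task connected to $a_1$ is examined while $a_1$ is unsaturated, $a_1$ wins it, so $a_1$ collects the first $b_1$ such tasks — the top $b_1$-valued ones it is connected to — and is saturated ever after. Consequently every task of $B^{(0)}_1$ is allocated to $a_1$ and is irrelevant to the remaining agents, whereas every task in $T^{(1)}:=T\setminus B^{(0)}_1$ is examined (in the same relative order) at a moment when $a_1$ is either already saturated or not connected to it, hence unavailable. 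It follows that the execution of $\MG$ on the original instance, restricted to agents $a_2,\dots,a_n$, coincides edge-for-edge with the execution of $\MG$ on the sub-instance obtained by deleting $a_1$ together with the tasks $B^{(0)}_1$. By the inductive hypothesis this restricted output equals $\cup_{i=2}^{n}\widetilde{\SP}_i$, where $\widetilde{\SP}_i$ denotes the FCFS policy of the sub-instance; but by construction the sets $T^{(i)},B^{(i)}$ of the sub-instance coincide with those of the original instance for $i\ge 2$, so $\cup_{i=2}^{n}\widetilde{\SP}_i=\cup_{i=2}^{n}\SP_i$. Combining this with $a_1$'s allocation yields $\mu_E=\cup_{i=1}^{n}\SP_i$, as desired.

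Finally, for the "hence" part: Theorem~\ref{thm:Nash_Equlibrium} states that the \safepolicies\ form a Nash Equilibrium of least social welfare for both $\MB$ and $\MD$, and that $\MB(\cup_{a_i\in A}\SP_i)=\MD(\cup_{a_i\in A}\SP_i)=\cup_{a_i\in A}\SP_i$. Hence the social welfare attained in this worst equilibrium equals $w\!\left(\cup_{a_i\in A}\SP_i\right)$, which by the first part equals $w(\mu_E)$, the social welfare of the matching returned by $\MG$.

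I expect the main obstacle to be the bookkeeping in the inductive step, namely making fully rigorous the claim that "peeling off $a_1$" turns the run of $\MG$ into the run of $\MG$ on the reduced instance. This requires two things to be checked carefully: that no task of $B^{(0)}_1$ is ever contested by a lower-priority agent, and that every task outside $B^{(0)}_1$ sees $a_1$ as unavailable at the time it is processed — the latter hinging on $a_1$ being saturated by the time the processing order reaches tasks of value below its $b_1$-th best, which is exactly where the fixed tie-breaking convention is needed so that $B^{(0)}_1$ coincides with the set $a_1$ actually collects.
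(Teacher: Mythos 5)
Your proposal is correct and follows essentially the same route as the paper: the paper also proceeds by induction, showing first that $a_1$ receives exactly $\SP_1$ (as the highest-priority agent it wins every task it is connected to while unsaturated) and then that, once agents $a_1,\dots,a_k$ receive their FCFS allocations, agent $a_{k+1}$ gets the top $b_{k+1}$ tasks in $T^{(k)}$, with the final claim read off from Theorem~\ref{thm:Nash_Equlibrium} exactly as you do. Your only variations -- inducting on the number of agents by peeling off $a_1$ into a sub-instance rather than inducting on the agent index within one run, and making the tie-breaking convention explicit -- are cosmetic refinements of the same argument.
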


\begin{proof}[Proof of Theorem \ref{thm:NE_and_approx}]
We prove this Theorem by induction.
First, we prove that $\MG$ allocates $a_1$ with the set of tasks $\SP_1$.
Second, we show that if all the agents $a_1,a_2,\dots,a_k$ receive $\SP_1,\SP_2,\dots,\SP_k$ from $\MG$, then also agent $a_{k+1}$ receives $\SP_{k+1}$.
Let us consider $a_1$. 
Since $\MG$ checks the agents following their orders, $a_1$ is always the first agent checked.
Hence, a task $t_j$ is not allocated to $a_1$ if and only if there are $b_1$ other tasks that have a higher value than $t_j$ and that are all connected to $a_1$.
This means that the set of tasks allocated to $a_1$ is $\SP_1$.
Let us now assume that agents $a_1,a_2,\dots,a_k$ are given the tasks contained in $\SP_1,\SP_2,\dots,\SP_k$ respectively, and let us consider $a_{k+1}$.
We have that $\MG$ allocates a task $t_j$ to $a_{k+1}$ if, at step $j$ of Algorithm \ref{alg:Maxvalue}, all the agents with priorities higher than $a_{k+1}$ that are connected to $t_j$ are already saturated and $a_{k+1}$ is not saturated.
However, by assumption, agents with a higher priority than $a_{k+1}$ are getting the tasks that they would get from their \safepoliciesp.
We, therefore, conclude that the tasks allocated to agent $a_{k+1}$ from $\MG$ consist of a subset of $T^{(k)}$.
From an argument similar to the one used for agent $a_1$, we infer that $a_{k+1}$ receives the top $b_{k+1}$ higher valued tasks among the ones in $T^{(k)}$, which coincides with the set $\SP_{k+1}$.
We conclude the proof by induction.
\end{proof}

Finally, we show that Theorem \ref{thm:NE_and_approx} along with Theorem \ref{thm:approx_truth} allows us to compute the PoA of both $\MB$ and $\MD$.

\begin{theorem}
\label{thm:PoA}
The PoA of $\MB$ and $\MD$ is equal to $2$.
\end{theorem}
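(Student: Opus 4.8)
The plan is to prove the two inequalities $PoA(\MB)\le 2$ and $PoA(\MB)\ge 2$ separately, and note that the identical argument applies verbatim to $\MD$ since the two mechanisms share the same worst Nash Equilibrium by Theorem~\ref{thm:Nash_Equlibrium}.

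For the upper bound $PoA(\MB)\le 2$, I would combine Theorem~\ref{thm:NE_and_approx} with Theorem~\ref{thm:approx_truth}. Fix an arbitrary instance $I$. By Theorem~\ref{thm:NE_and_approx}, the worst Nash Equilibrium of $\MB$ (namely the profile of FCFS policies, shown to be a worst NE in Theorem~\ref{thm:Nash_Equlibrium}) produces exactly the matching $\cup_{a_i\in A}\SP_i$, which has the same social welfare as the output $\mu_E$ of $\MG$ on $I$. By Theorem~\ref{thm:approx_truth}, $w(\mu_E)\ge \tfrac12 w(\mu^*)$, where $\mu^*$ is an optimal MVbM on the truthful input. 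Therefore $w(\mu_{wNE}(I))=w(\mu_E)\ge \tfrac12 w(\mu(I))$, i.e.\ $\frac{w(\mu(I))}{w(\mu_{wNE}(I))}\le 2$. Taking the supremum over $I$ gives $PoA(\MB)\le 2$. One subtlety I would spell out: the definition of $PoA$ divides by the welfare of the \emph{worst} NE, so I must be sure that the FCFS profile is genuinely a worst NE — this is exactly the content of Theorem~\ref{thm:Nash_Equlibrium}, so there is nothing new to do, but it is worth stating that the argument yields the bound against the true worst equilibrium and not merely against the FCFS one.

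For the lower bound $PoA(\MB)\ge 2$, I would exhibit a family of instances on which the FCFS equilibrium is forced and its welfare is arbitrarily close to half the optimum. The natural candidate is the instance already used in the proof of Theorem~\ref{thm:approx_truth} (or the one in Theorem~\ref{thm:tight_bound_general}): two agents $a_1,a_2$ with $b_1=b_2=1$, two tasks $t_1,t_2$ of values $1+\epsilon$ and $1$, and edge set $E=\{(a_1,t_1),(a_1,t_2),(a_2,t_1)\}$. Here the optimum is $\{(a_2,t_1),(a_1,t_2)\}$ with value $2+\epsilon$, while $\SP_1=\{(a_1,t_1)\}$ and $\SP_2=\emptyset$ (since $t_1$ is already taken), so the FCFS equilibrium has welfare $1+\epsilon$; by Theorem~\ref{thm:Nash_Equlibrium} this is a worst NE of $\MB$ on this instance, hence $w(\mu_{wNE})\le 1+\epsilon$ and the ratio is at least $\frac{2+\epsilon}{1+\epsilon}$. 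Letting $\epsilon\to 0$ shows $PoA(\MB)\ge 2$. Combining the two bounds gives $PoA(\MB)=2$, and the same instance and the same equilibrium-characterization theorems give $PoA(\MD)=2$.

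The only real obstacle is conceptual bookkeeping rather than computation: making sure the chain ``worst NE welfare $=$ FCFS welfare $=$ $\MG$ welfare $\ge \tfrac12\,\mathrm{OPT}$'' is justified at each link by the correct earlier theorem, and being careful that in the lower-bound instance the FCFS profile is indeed an equilibrium whose welfare is as claimed (in particular that $a_1$ cannot do better, and that $a_2$ is simply shut out). Both points are immediate from Theorems~\ref{thm:Nash_Equlibrium}, \ref{thm:NE_and_approx}, and \ref{thm:approx_truth}, so the proof is short.
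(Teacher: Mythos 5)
Your proposal is correct and follows essentially the same route as the paper: the upper bound is obtained by combining Theorem \ref{thm:NE_and_approx} with the approximation ratio of $\MG$ from Theorem \ref{thm:approx_truth}, and the lower bound uses the same two-agent, two-task instance with values $1+\epsilon$ and $1$ and edge set $\{(a_1,t_1),(a_1,t_2),(a_2,t_1)\}$. (One cosmetic remark: since strategies are non-empty edge sets, in the worst equilibrium $a_2$ reports its single edge $(a_2,t_1)$ rather than playing $\SP_2=\emptyset$, but it is left unmatched either way, so the equilibrium welfare $1+\epsilon$ and the limit argument are unaffected.)
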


\begin{proof}[Proof of Theorem \ref{thm:PoA}]
From Theorem \ref{thm:NE_and_approx}, we know that $\MG$ returns a matching whose social welfare is equal to the social welfare of one of the worst Nash Equilibrium of $\MB$.
Since $\MB$ returns an MVbM, we have
\begin{equation}
    \label{eq:app_socialcompare}
    {\displaystyle PoA(\MB) = \sup_{I\in \mathcal{I}}\frac{w(\mu(I))}{w(\mu_{wNE}(I))}= \sup_{I\in \mathcal{I}} \frac{w(\MB(I))}{w(\MG(I))}.}
\end{equation}
Since the matching found by $\MB$ achieves the maximum social welfare, the last term in equation \eqref{eq:app_socialcompare} is bounded from above by the $a.r.(\MG)$, so that $PoA(\MB) \le ar(\MG)=2$.
To conclude $PoA(\MB)=2$ we show a lower bound of $2$. 
The set of agents is composed of two agents, namely $a_1$ and $a_2$, we assume the agents to be ordered according to the algorithm priority.
The capacity of both agents is equal to $1$. 
The set of tasks contains two tasks, namely $t_1$ and $t_2$, whose values are $1+\epsilon$ and $1$, respectively.
Finally, let us assume that the truthful input is given by $E=\{(a_1,t_1),(a_1,t_2),(a_2,t_1)\}$.
The social welfare is then equal to $2+\epsilon$. 
However, in the worst Nash Equilibrium, the welfare is $1+\epsilon$.
By taking the limit for $\epsilon\to 0$, we conclude that the PoA is equal to $2$.
By a similar argument, we infer $PoA(\MD)=2$.
\end{proof}

The previous bound is tight, indeed, there does not exist a deterministic mechanism for the $MVbM$ problem that has a $PoA$ lower than $2$.

\begin{theorem}
\label{thm:tight_PoA}
    For every deterministic mechanism $\mathbb{M}$, we have $PoA(\mathbb{M})\ge 2$ with respect to agent manipulations.
\end{theorem}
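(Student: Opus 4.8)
The plan is to reuse the instance that already appeared in the lower-bound part of the proof of Theorem~\ref{thm:PoA}, and to show that \emph{every} deterministic mechanism $\mathbb{M}$ must have a Nash Equilibrium on that instance whose social welfare is essentially half the optimum. Concretely, take two agents $a_1,a_2$ with $b_1=b_2=1$ and two tasks $t_1,t_2$ with $q_1=1+\epsilon$, $q_2=1$, and let the truthful input be $E=\{(a_1,t_1),(a_1,t_2),(a_2,t_1)\}$. The optimal matching here is $\{(a_1,t_2),(a_2,t_1)\}$ with value $2+\epsilon$, while the "bad" $b$-matching $\{(a_1,t_1)\}$ has value only $1+\epsilon$. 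So it suffices to exhibit, for an arbitrary deterministic $\mathbb{M}$, a Nash Equilibrium on this instance in which $a_2$ receives nothing and $a_1$ receives only $t_1$ (or some outcome of comparable low welfare).

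First I would note that agent $a_2$'s only feasible strategy is to report $\{(a_2,t_1)\}$ (its strategy space is the single non-empty subset of $T_2=\{t_1\}$), so $a_2$ is never a source of deviation. Agent $a_1$ has three feasible strategies: $S_1=\{(a_1,t_1),(a_1,t_2)\}$ (truthful), $\{(a_1,t_1)\}$, and $\{(a_1,t_2)\}$. I would then argue by a short case analysis on what $\mathbb{M}$ outputs under the strategy profile $(\{(a_1,t_1)\},\{(a_2,t_1)\})$: the reported edge set is $\{(a_1,t_1),(a_2,t_1)\}$, so any $b$-matching assigns $t_1$ to exactly one of $a_1,a_2$ and leaves $t_2$ unmatched; hence $\mathbb{M}$'s output on this profile gives welfare $1+\epsilon$. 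If $\mathbb{M}$ assigns $t_1$ to $a_1$ here, I claim this profile is a Nash Equilibrium of value $1+\epsilon$: $a_1$ already gets its best possible task $t_1$ (value $1+\epsilon > 1 = q_2$), so it has no profitable deviation, and $a_2$ has no other strategy. That already yields the lower bound $2+\epsilon$ over $1+\epsilon$, which tends to $2$.

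The remaining case is that $\mathbb{M}$ assigns $t_1$ to $a_2$ under the profile $(\{(a_1,t_1)\},\{(a_2,t_1)\})$. Then I would pivot to the profile $(\{(a_1,t_2)\},\{(a_2,t_1)\})$: the reported edges are $\{(a_1,t_2),(a_2,t_1)\}$, which \emph{forces} the matching $\{(a_1,t_2),(a_2,t_1)\}$, so $a_1$ receives $t_2$ (value $1$) and $a_2$ receives $t_1$. I claim this is a Nash Equilibrium. Agent $a_2$ cannot deviate. Agent $a_1$'s alternatives are: reporting $\{(a_1,t_1)\}$, which by the case assumption makes $\mathbb{M}$ give $t_1$ to $a_2$ and leaves $a_1$ with nothing (utility $0<1$); or reporting truthfully $\{(a_1,t_1),(a_1,t_2)\}$ — here one must examine $\mathbb{M}$'s output on the truthful profile, and if that output already gave $a_1$ utility $\ge 1$ we would instead be done because the truthful profile itself might not be an equilibrium, but the key point is that $a_1$'s best achievable utility against $a_2$'s fixed report is $1+\epsilon$ only if $\mathbb{M}$ ever hands it $t_1$, and we can chase which report achieves that. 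The main obstacle is precisely this bookkeeping: handling the truthful-profile branch cleanly so that in \emph{every} subcase we land on a genuine equilibrium of welfare $\le 1+\epsilon$. I expect the clean way out is to observe that whichever of $a_1$'s reports is a best response to $\{(a_2,t_1)\}$, the resulting profile is a Nash Equilibrium, and in at least one of the two sub-branches above that best-response welfare is $1+\epsilon$; taking $\epsilon\to 0$ then gives $PoA(\mathbb{M})\ge 2$ in all cases.
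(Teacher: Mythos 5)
Your Case 1 is sound and is essentially the core of the paper's argument: if $\mathbb{M}$ awards $t_1$ to $a_1$ on the reported edge set $\{(a_1,t_1),(a_2,t_1)\}$, then the profile in which $a_1$ hides $(a_1,t_2)$ is a Nash Equilibrium of welfare $1+\epsilon$ against an optimum of $2+\epsilon$. The genuine gap is Case 2, and it cannot be repaired inside the single instance $E=\{(a_1,t_1),(a_1,t_2),(a_2,t_1)\}$. Consider a mechanism that gives $t_1$ to $a_2$ on the reports $(\{(a_1,t_1)\},\{(a_2,t_1)\})$, returns the optimal matching $\{(a_1,t_2),(a_2,t_1)\}$ on the truthful profile, and matches both edges on the profile $(\{(a_1,t_2)\},\{(a_2,t_1)\})$. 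Then $a_1$'s utilities across its three admissible reports are $1$, $0$, $1$ respectively, so the Nash Equilibria of this instance are exactly the two profiles of welfare $2+\epsilon$: the worst equilibrium is optimal and the instance certifies nothing for such a mechanism. Hence your proposed ``clean way out'' (that in at least one sub-branch the best-response profile is an equilibrium of welfare $1+\epsilon$) is false. A smaller issue in the same case: the reports $\{(a_1,t_2),(a_2,t_1)\}$ do not \emph{force} the matching $\{(a_1,t_2),(a_2,t_1)\}$, since a mechanism is an arbitrary function returning a feasible $b$-matching and need not output a maximal one (nor must it assign $t_1$ at all under $(\{(a_1,t_1)\},\{(a_2,t_1)\})$; excluding that requires its own argument).

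What is missing is a second instance, and this is precisely how the paper proceeds. It first considers the truthful instance $E=\{(a_1,t_1),(a_2,t_1)\}$, in which each agent has a unique feasible report, so the truthful profile is the only (hence worst) equilibrium; finiteness of the PoA then forces $\mathbb{M}$ to assign $t_1$ to one of the two agents on that reported input, and by relabelling one assumes it is $a_1$. The richer instance is only introduced afterwards, chosen so that the agent who wins $t_1$ under the reports $\{(a_1,t_1),(a_2,t_1)\}$ is the one carrying the extra edge to $t_2$; in other words, after this normalisation your Case 1 is the only case. Equivalently, you could keep your case split but, in Case 2, move to the mirrored instance $E'=\{(a_1,t_1),(a_2,t_1),(a_2,t_2)\}$, where $a_2$ hiding $(a_2,t_2)$ yields a Nash Equilibrium of welfare $1+\epsilon$. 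Either fix closes the gap; as written, your argument does not go through.
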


\begin{proof}[Proof of Theorem \ref{thm:tight_PoA}]
    Toward a contradiction, let $\mathbb{M}$ be a deterministic mechanism  whose $PoA$ is lower than $2$.
    Let us consider the following instance. 
    We have two tasks, namely $t_1$ and $t_2$, whose values are $1+\epsilon$ and $1$, respectively.
    We then have two agents, namely $a_1$ and $a_2$ and both have a capacity equal to $1$.
    Let us now consider the instance in which both the agents are only connected to task $t_1$, hence the truthful input is $E=\{(a_1,t_1),(a_2,t_1)\}$.
    Since $PoA(\mathbb{M})$ is finite, we have that $\mathbb{M}$ allocates $t_1$ to one agent.
    Indeed, if no agent receives a task, no one can improve its own payoff by hiding its only edge \footnote{we recall that every agent is bounded by its statement and that every agent has to report at least one edge according to how we defined the strategy set}.
    Hence, the truthful instance is already a Nash Equilibrium.
    Furthermore, since the social welfare of this Nash Equilibrium is $0$, this is also one of the worst Nash Equilibria, thus we find a contradiction since we assumed that $\mathbb{M}$ has a finite PoA.
    Let us then assume that one agent gets $t_1$.
    Without loss of generality, let us assume that $\mathbb{M}$ allocates $t_1$ to $a_1$, the other case is completely symmetric with respect to the one we are about to present.
    Let us now consider the instance whose truthful input is $E=\{(a_1,t_1),(a_1,t_2),(a_2,t_1)\}$.
    If $\mathbb{M}$ allocates $t_1$ to $a_2$, we have that a Nash Equilibrium is obtained when agent $a_1$ hides arc $(a_1,t_2)$.
    Indeed, if agent $a_1$ hides $(a_1,t_2)$, the input of the mechanism is $E=\{(a_1,t_1),(a_2,t_1)\}$ which gives the first task to $a_1$. 
    Since $a_2$ is bounded by its statements and its only alternative is to report no edges, it has no better strategy to play.
    Similarly, since $a_1$ is getting its best possible payoff, it has no better strategy to play.
    We then conclude that when $a_1$ hides the edge $(a_1,t_2)$, we have a Nash Equilibrium.
    Finally, we observe that, by taking $\epsilon$ small enough, we get a contradiction with the assumption $PoA(\mathbb{M})<2$, since the maximum social welfare is $2+\epsilon$, while the social welfare returned by the mechanism in the worst Nash Equilibrium is at most $1+\epsilon$.
    Notice that there might be another Nash Equilibrium in which the social welfare is even lower, however, it suffice to notice that the social welfare achieved in the worst Nash Equilibrium is lower than $1+\epsilon$ to conclude the proof.
    Similarly, if the mechanism does not allocate $t_1$ to $a_2$, the instance is already a Nash Equilibrium.
    Indeed, by the same argument used before, $a_2$ cannot improve its own payoff, since it is getting no tasks.
    If $a_1$ is allocated with the task, it cannot improve its payoff  either, since it is getting the maximum payoff it can get.
    Finally, if $a_1$ is not getting $t_1$, it can hide $(a_1,t_2)$ and return to the instance we considered before.
    Again, by taking $\epsilon$ small enough, we retrieve that the $PoA(\mathbb{M})$ cannot be less than $2$.
\end{proof}

We close the section by studying the PoS of $\MB$ and $\MD$.
We recall that the PoA of every mechanism is greater than its PoS, thus we infer that both $\MB$ and $\MD$ have a PoS at most equal to $2$.
Moreover, since the Nash Equilibrium in the example we used in the proof of Theorem \ref{thm:tight_PoA} is unique, the best and worst Nash Equilibria achieve the same social welfare. In particular, this allows us to prove that the PoS of both $\MB$ and $\MD$ is equal to $2$.
Furthermore, this value is tight for the class of deterministic mechanisms.

\begin{theorem}
\label{thm:PoS}
    The PoS of $\MB$ and $\MD$ is equal to $2$.
    Moreover, there does not exist a deterministic mechanism whose PoS is lower than $2$.
\end{theorem}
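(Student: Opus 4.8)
The plan is to establish the two claims separately, reusing what is already in place. For the upper bound, recall that $PoS(\mathbb{M})\le PoA(\mathbb{M})$ for every mechanism simply by definition of the two quantities; combined with Theorem \ref{thm:PoA} this gives $PoS(\MB)\le 2$ and $PoS(\MD)\le 2$ with no further work.

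For the matching lower bound on $\MB$ and $\MD$ I would reuse the two-agent, two-task instance from the proof of Theorem \ref{thm:PoA}: capacities $b_1=b_2=1$, tasks $t_1,t_2$ of values $1+\epsilon$ and $1$, and truthful edge set $E=\{(a_1,t_1),(a_1,t_2),(a_2,t_1)\}$ with $a_1$ processed first. The point is that this instance has a unique Nash Equilibrium. Indeed, by Theorem \ref{thm:max_match_agent_one} the best strategy of $a_1$ guarantees it $t_1$, whose value $1+\epsilon$ is also the largest payoff $a_1$ could ever obtain since $b_1=1$; hence in \emph{every} Nash Equilibrium $a_1$ is matched to $t_1$, and then $a_2$, whose only incident edge is $(a_2,t_1)$, is left unmatched. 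So the social welfare of every Nash Equilibrium — best and worst alike — equals $1+\epsilon$, whereas the optimum is $2+\epsilon$. This yields $PoS(\MB)\ge\frac{2+\epsilon}{1+\epsilon}$, and letting $\epsilon\to 0$ gives $PoS(\MB)=2$; running the same argument with the DFS gives $PoS(\MD)=2$.

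For the general impossibility I would revisit the instances built in the proof of Theorem \ref{thm:tight_PoA} and argue that the Nash Equilibrium exhibited there is in fact the only one, so that it is simultaneously the best and the worst equilibrium of the instance. The reason is structural: in those instances agent $a_2$ is connected to a single task and, being bounded by its statement and required to report at least one edge, has exactly one feasible strategy; given this, the short case analysis of the proof of Theorem \ref{thm:tight_PoA} on the fixed behaviour of the deterministic mechanism $\mathbb{M}$ (whether on the relevant subinstance it assigns $t_1$ to $a_1$, to $a_2$, or leaves it unmatched) shows that $a_1$'s unique best response is to report only $(a_1,t_1)$, giving welfare $1+\epsilon$ against an optimum of $2+\epsilon$. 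Hence $PoS(\mathbb{M})\ge\frac{2+\epsilon}{1+\epsilon}$ for every deterministic $\mathbb{M}$, and $\epsilon\to 0$ finishes the proof.

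I expect the only delicate point to be this last step: one must verify that no additional, higher-welfare Nash Equilibrium hides in the instance, i.e. that the profile used for the PoA lower bound is genuinely the \emph{best} equilibrium and not merely some equilibrium — which is exactly why the argument is pinned down by the fact that one agent has a degenerate (singleton) strategy space while the other's best response is forced. Everything else is a direct translation of bounds and constructions established earlier in the section.
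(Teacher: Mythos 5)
Your proposal is correct and takes essentially the same route as the paper: the upper bound via $PoS \le PoA = 2$ (Theorem \ref{thm:PoA}), the lower bound for $\MB$ and $\MD$ via the instance from the proof of Theorem \ref{thm:PoA}, and the general impossibility via the construction from Theorem \ref{thm:tight_PoA}, with you supplying the detail the paper only asserts (that every equilibrium of these instances has welfare $1+\epsilon$). The only slight overstatement is calling $a_1$'s best response ``unique'' in the general case -- if the mechanism assigns $t_1$ to $a_1$ on the truthful profile, truthful reporting is also an equilibrium -- but since in every equilibrium $a_1$ secures $t_1$ and $t_2$ then cannot be allocated, all equilibria still have welfare $1+\epsilon$ and your conclusion stands.
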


\begin{proof}[Proof of Theorem \ref{thm:PoS}]
    Since the Price of Anarchy of a mechanism is always greater than the Price of Stability, we have that $PoS(\MB)\le 2$.
    We now show that $PoS(\MB)= 2$ by showing an instance on which the PoS is equal to $2$.
    Let us consider the example built in the proof of Theorem \ref{thm:PoA}.
    Since there is only one Nash Equilibrium, it is both the worst and best Nash Equilibrium.
    Therefore, following the argument used in the proof of Theorem \ref{thm:PoA}, we conclude $PoS(\MB)\ge 2$, hence $PoS(\MB)=2$.
    From a similar argument, we infer $PoS(\MD)=2$.
    Finally, to prove that the PoS of every deterministic mechanism is greater than $2$, consider the instance build in the proof of Theorem \ref{thm:tight_PoA}.
    Since the Nash Equilibrium of this instance is unique, we conclude that $PoS(\mathbb{M})\ge 2$ for every deterministic and optimal mechanism.
\end{proof}

In particular, $\MB$ and $\MD$ are the best possible mechanisms in terms of PoA and PoS, while $\MG$ is the best possible truthful mechanism in terms of  approximation ratio.
%


\subsection{The Input Space on which \texorpdfstring{$\MB$}{} and \texorpdfstring{$\MD$}{} are truthful.}

\label{subsec:truthinstances}

We now describe some sets of inputs in which $\MB$ and $\MD$ are truthful.
In particular, we consider the three following settings.
In Theorem \ref{thm:truthful_degree}, we consider the case in which there is a shortage of tasks, so that the capacity of each agent exceeds the number of tasks to which it is connected.
In particular, no agent can be saturated.
In Theorem \ref{thm:truth_BFS}, instead, we describe what happens when every task can be contended by another agent.
This means that, for every task $t_j$ that is allocated to an agent, there exists at least another agent that is unsaturated and connected to $t_j$.
Finally, in Theorem \ref{thm:agent_classes}, we study the case in which the private information of the agents can be clustered, that is different agents are connected to the same set of tasks and the same capacity.
Going back to the worker-project example at the beginning of the paper, this means that the connection between the worker and the project depends, for example, on the field of expertise of the worker or their background formation.
In this framework, it is reasonable to assume that there are several different agents that share the same expertise or background.

\begin{theorem}
\label{thm:truthful_degree}
Let us consider the set of inputs such that, according to $E$, all the agents' degrees are less than or equal to their capacity, that is $\sum_{t_j\in T_i}e_{i,j}\le b_i$ for every $i\in [n]$, then $\MB$ and $\MD$ are truthful on this set of inputs.
\end{theorem}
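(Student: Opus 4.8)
The plan is to reduce the statement to Lemma~\ref{thm:truthfulness_cases}. The crucial structural fact is that, under the hypothesis that the number of edges incident to each $a_i$ is at most $b_i$, Algorithm~\ref{alg:Maxvalue} completes its routine using only augmenting paths of length $1$ --- both on the truthful input and on every input obtained from it by hiding edges.

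First I would establish this fact for a fixed input in the given set. Consider the $j$-th step, in which task $t_j$ is processed with current matching $\mu_{j-1}$. If $t_j$ is connected to no agent in the reported graph, then $\mu_j=\mu_{j-1}$ and there is nothing to check. Otherwise, let $a_i$ be any agent connected to $t_j$. The edge $(a_i,t_j)$ does not belong to $\mu_{j-1}$, since $t_j$ is unmatched before step $j$; hence at most $|T_i|-1\le b_i-1<b_i$ edges incident to $a_i$ lie in $\mu_{j-1}$, so $a_i$ is unsaturated with respect to $\mu_{j-1}$. Consequently the single-edge path $\{(t_j,a_i)\}$ is augmenting with respect to $\mu_{j-1}$, and both the BFS and the DFS return an augmenting path of length $1$ at this step. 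This shows that Algorithm~\ref{alg:Maxvalue} terminates using only length-$1$ augmenting paths, which is exactly the first hypothesis of Lemma~\ref{thm:truthfulness_cases}.

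Next I would observe that this property is robust under manipulation: if an agent hides some of its edges, the number of edges incident to each agent can only decrease while no capacity changes, so the degree condition still holds in the reported graph and the argument above applies verbatim. In particular, no agent can misreport in a way that makes Algorithm~\ref{alg:Maxvalue} find an augmenting path of length greater than $1$ --- precisely the second hypothesis of Lemma~\ref{thm:truthfulness_cases}. Invoking that lemma for each agent in turn, we conclude that reporting truthfully is a best response for every agent, hence $\MB$ and $\MD$ are truthful on this set of inputs. Since the argument relies only on the length-$1$ structure of the augmenting paths and not on the graph-traversal rule, it is identical for $\MB$ and $\MD$.

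I do not anticipate a serious obstacle here: the whole content is the observation that the degree condition, being monotone under edge deletion, forces every augmenting path encountered --- in the truthful run and after any manipulation --- to have length $1$, after which Lemma~\ref{thm:truthfulness_cases} finishes the job. The only point requiring a little care is the elementary counting argument showing that any agent connected to the currently processed task is necessarily unsaturated, which holds because the corresponding edge has not yet been added to the matching.
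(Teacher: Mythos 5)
Your proposal is correct and follows essentially the same route as the paper: the degree condition forces every augmenting path to have length $1$, it is preserved under edge hiding, and Lemma~\ref{thm:truthfulness_cases} then yields truthfulness for both $\MB$ and $\MD$. Your explicit counting argument (that an agent adjacent to the currently processed task has at most $|T_i|-1\le b_i-1$ matched edges, hence is unsaturated at that step) is in fact a slightly more careful justification than the paper's brief claim that no agent is ever saturated.
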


\begin{proof}[Proof of Theorem \ref{thm:truthful_degree}]
By hypothesis, no agent will be saturated during the routine of Algorithm \ref{alg:Maxvalue}. Hence, any augmenting path found during the routine has a length equal to $1$, regardless of the graph traversing method. 
Since hiding edges cannot lead an agent to be saturated, Lemma \ref{thm:truthfulness_cases} allows us to conclude the proof.
\end{proof}

This is the only class of inputs we are considering on which $\MB$ and $\MD$ behave in the same way.
In the other two frameworks, $\MD$ is not truthful, while its counterpart $\MB$ is.
This is due to the fact that BFS searches for the shortest possible augmenting path in its execution.

\begin{theorem}
\label{thm:truth_BFS}
Let $\mu$ be the matching returned by $\MB$.
If for every task $t_j$ there exists an edge $e\notin \mu$ that connects $t_j$ to an unsaturated agent, then no agents can increase their utility by hiding only one edge. 
In particular, if the truthful input is a complete bipartite graph and the vector of the capacities $\bb$ is such that 
\[
m\le \sum_{i=1}^{n} b_i-\max_{i\in [n]}{b_i},
\]
then the best strategy for every agent is to report truthfully.
\end{theorem}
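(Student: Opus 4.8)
\emph{Overview.} The statement has two parts: a general structural guarantee against hiding a single edge, and a stronger conclusion in the complete-bipartite regime. I would treat the second part as an application of Lemma~\ref{thm:truthfulness_cases} and prove the first part by tracking the execution of Algorithm~\ref{alg:Maxvalue}.

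\emph{Structural preliminaries.} First I would record two facts valid for \emph{any} run of Algorithm~\ref{alg:Maxvalue}: (i) the degree of every agent in the running matching is non-decreasing, because along an augmenting path every internal agent swaps one matched edge for another and only the terminal agent gains an edge; and (ii) a task is matched in the final output if and only if it is already matched at its own step, because an internal task-vertex of an augmenting path is already matched and a task is never again the starting vertex after its step. Feeding in the hypothesis on $\mu=\MB(E)$: for every task $t_j$ the promised unsaturated agent $a_{h_j}$ (unsaturated with respect to $\mu$, connected to $t_j$) is, by (i), unsaturated with respect to $\mu_{j-1}$ as well, while by (ii) $t_j$ is unmatched in $\mu_{j-1}$; hence $\{(t_j,a_{h_j})\}$ is an augmenting path at step $j$, so the BFS returns an augmenting path of length $1$. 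Therefore, under the hypothesis the truthful run uses only length-$1$ augmenting paths, which identifies $\mu$ with the \emph{greedy matching}: the tasks, taken in decreasing value order, are assigned one by one, each to the highest-priority agent that is connected to it and still unsaturated.

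\emph{Single-edge hiding.} Let agent $a_k$ hide one edge $(a_k,t_l)$ and write $\mu'=\MB(E\setminus\{(a_k,t_l)\})$. If $(a_k,t_l)\notin\mu$ then, since the truthful run is greedy, at step $l$ the task $t_l$ was given to some other agent and the edge $(a_k,t_l)$ was never used, so deleting it changes nothing and $\mu'=\mu$. Hence assume $(a_k,t_l)\in\mu$: then $t_l$ is assigned to $a_k$ and $a_k$ is the highest-priority unsaturated agent connected to $t_l$ at step $l$. The hypothesis continues to provide an alternative unsaturated agent for every task; in particular the spare $a_{h_l}$ for $t_l$ is distinct from $a_k$ and survives the deletion, from which, by the same step-by-step comparison as in the preliminaries, one argues that the run on $E\setminus\{(a_k,t_l)\}$ also uses only length-$1$ paths, so $\mu'$ is the greedy matching of the reduced graph. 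It then remains to compare two greedy matchings that differ only by the edge $(a_k,t_l)$: removing it reassigns $t_l$ to a lower-priority agent and sets off an alternating chain $a_k\to a^{(1)}\to a^{(2)}\to\cdots$, in which $a_k$ gives up $t_l$, each $a^{(i)}$ absorbs the task released by its predecessor and, if thereby saturated, releases in turn a task processed at a later step (hence of value at most $q_l$), and the chain terminates either at an agent unsaturated in $\mu$ or back at $a_k$, which can soak up at most one unit of the shifted slack. Since no agent of higher priority than $a_k$ is affected, in $\mu'$ agent $a_k$ loses $t_l$ and gains at most one task of value $\le q_l$, so $w_k(\mu')\le w_k(\mu)$ and the deviation is not profitable; for an $a_k$ with null truthful utility this is also covered by Lemma~\ref{lemma:nopayoff}.

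\emph{The complete-bipartite case.} Here I would apply Lemma~\ref{thm:truthfulness_cases}. On the truthful input, before step $j\le m$ at most $j-1<\sum_i b_i$ edges are matched, so some agent is unsaturated, and being in a complete graph it is connected to $t_j$; thus the truthful run uses only length-$1$ augmenting paths. Moreover, whatever (non-empty) set of edges an agent $a_k$ reports, the other $n-1$ agents remain connected to every task and have total capacity $\sum_{i\ne k}b_i\ge\sum_i b_i-\max_i b_i\ge m$, so at every step at least one of them is unsaturated and yields a length-$1$ augmenting path for the current task; consequently no report of $a_k$ can force Algorithm~\ref{alg:Maxvalue} to find an augmenting path of length greater than $1$. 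Lemma~\ref{thm:truthfulness_cases} then yields that truthful reporting is a best strategy for every agent. The principal difficulty of the theorem lies in the first part: verifying that a single-edge deletion keeps the execution within the length-$1$ regime, and carrying out the book-keeping of the alternating reassignment chain that shows the manipulator cannot gain; the complete-bipartite part is then an essentially routine consequence of Lemma~\ref{thm:truthfulness_cases}.
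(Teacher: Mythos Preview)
Your proposal is correct in outline, and for the complete-bipartite part it matches the paper's argument via Lemma~\ref{thm:truthfulness_cases} essentially verbatim. The difference lies in how you treat the first part.

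The paper's proof of the single-edge claim is much shorter than yours: it simply observes that the hypothesis places a spare unsaturated agent next to every task, so BFS has a length-$1$ augmenting path available at every step of both the truthful run and the run with $(a_i,t_j)$ removed, and then invokes Lemma~\ref{thm:truthfulness_cases} to conclude. In other words, the paper uses the \emph{same} device for both halves of the theorem; there is no separate alternating-chain bookkeeping. Your chain argument is therefore surplus to requirements: once you know that both the truthful and the one-edge-deleted runs are greedy (length~$1$ throughout), Lemma~\ref{thm:truthfulness_cases} already gives that the deviating agent cannot gain, because it is merely shrinking the edge pool fed into a process that hands each task to its first unsaturated neighbour. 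Your chain analysis re-derives this conclusion by hand and is correct, but it buys nothing extra. What it does buy is a somewhat more explicit picture of \emph{how} the deviation propagates, which the paper does not provide.

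Conversely, the step you identify as ``the principal difficulty'' --- showing that the one-edge-deleted run stays in the length-$1$ regime --- is exactly the step the paper treats most tersely (it argues only about the single step at which the hidden edge would have been used). Your remark that the spare agent $a_{h_l}$ survives the deletion, combined with monotonicity of saturation, is the right starting point; but your justification ``by the same step-by-step comparison as in the preliminaries'' is also a hand-wave, since the preliminaries appeal to unsaturation in $\mu$, not in $\mu'$. A fully rigorous version would need the cascade you sketch later (each reassignment pushes the single unit of ``extra saturation'' forward to a later-processed task, so at every step some neighbour of the current task remains unsaturated). Neither your write-up nor the paper's spells this out in full.
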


\begin{proof}[Proof of Theorem \ref{thm:truth_BFS}]
Assume, toward a contradiction, that an agent, namely $a_i$ gets a benefit by hiding an edge, namely $e=(a_i,t_j)$.
By hypothesis, we have that every task is connected to an unsaturated agent.
Let $a_k$ be one of the unsaturated agents to which $t_j$ is connected to.
Then BFS will always find an augmenting path whose length is $1$ when it is asked to allocate $t_j$, since there exists the augmenting path $(a_k,t_j)$.
Therefore, by Lemma \ref{thm:truthfulness_cases} we infer a contradiction.
\end{proof}

As the next example shows, Theorem \ref{thm:truth_BFS} cannot be extended to mechanism $\MD$. 

\begin{example}
\label{ex:bip_DFS_vs_BFS}
Let us consider the following instance.
We have three agents, namely $a_i$ for $i=1,2,3$, ordered according to their indexes.
The capacity of each agent is equal to one. 
Let the set of tasks be composed of two tasks $t_1$ and $t_2$, whose values are $1$ and $0.5$, respectively. 
Assume that the edge set of the truthful graph is the complete bipartite one, that is $E:=\{ (a_i,t_j) \}_{i=1,2,3;j=1,2}$. 
For this input $\MB$ returns $\mu_{BFS}=\{(a_1,t_1),(a_2,t_2)\}$, while $\MD$ returns $\mu_{DFS}=\{(a_1,t_2),(a_2,t_1)\}$.
Since agent $1$ does not receive its best pick, by Theorem \ref{thm:max_match_agent_one}, we conclude that the DFS algorithm is susceptible to be manipulated by agent $1$ while, according to Theorem \ref{thm:truth_BFS}, the BFS counterpart is not manipulable by any agent. 
Indeed, Lemma \ref{lemma:nopayoff} ensures us that agent $3$ cannot manipulate $\MB$, since the algorithm leaves it unmatched. 
Similarly, agent $1$ cannot manipulate $\MB$ since it is already receiving its best possible allocation. 
Finally, also $a_2$ cannot manipulate $\MB$, since hiding the only edge used during the implementation of Algorithm \ref{alg:Maxvalue} endowed with the BFS would result in either assigning task $t_2$ to agent $3$ or would not change the allocation at all.
\end{example}

Finally, let $\AA=\{A^{(1)},\dots,A^{(r)}\}$ be a partition of $A$.
We say that $A^{(\ell)}$ is the $\ell$-th class of the agents.
Since $\AA$ is a partition, every agent $a_i\in A$ belongs to only one class.
Finally, let us assume that the capacity $b_i$ and the set of edges $T_i$ of every agent $a_i\in A$ depends only on the class $A^{(\ell)}$ to which $a_i$ belongs, so that $b_i=b^{(\ell)}$ and $T_i=T^{(\ell)}$.
Then, we have that $\MB$ is truthful if every class contains enough agents.

\begin{theorem}
\label{thm:agent_classes}
If $|A^{(\ell)}|>\big\lceil \frac{|T^{(\ell)}|}{b^{(\ell)}} \big\rceil+1$, then no agents belonging to the $\ell$-th class can benefit by misreporting.
\end{theorem}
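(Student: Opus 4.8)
The plan is to reduce Theorem~\ref{thm:agent_classes} to Lemma~\ref{thm:truthfulness_cases} by showing that, whenever a class $A^{(\ell)}$ is large enough, no agent $a_i\in A^{(\ell)}$ can ever force Algorithm~\ref{alg:Maxvalue} to use an augmenting path of length greater than $1$ that passes through $a_i$, even after $a_i$ hides some of its edges. First I would recall that, by Lemma~\ref{lemma:nopayoff}, we may assume $a_i$ is matched to at least one task when reporting truthfully, so $a_i$ has some incentive to consider. The key structural observation is that all agents in $A^{(\ell)}$ share the same neighbourhood $T^{(\ell)}$ and the same capacity $b^{(\ell)}$; hence the tasks in $T^{(\ell)}$ can absorb up to $|A^{(\ell)}|\cdot b^{(\ell)}$ units of capacity from this class alone, and by hypothesis $|A^{(\ell)}| > \lceil |T^{(\ell)}|/b^{(\ell)}\rceil + 1$, so even if $a_i$ is removed from consideration there remain at least $\lceil |T^{(\ell)}|/b^{(\ell)}\rceil + 1$ agents in the class, whose total capacity is strictly more than $|T^{(\ell)}|$. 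Thus after any manipulation by $a_i$, at every step of the algorithm in which a task $t_j\in T^{(\ell)}$ is being processed, there is always an unsaturated agent in $A^{(\ell)}\setminus\{a_i\}$ connected to $t_j$ — because the other class members alone cannot all be saturated before $|T^{(\ell)}|$ tasks of the class have been allocated, and each of them is connected to $t_j$.

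Next I would turn this counting observation into the length-one claim. When the algorithm processes a task $t_j$: if $t_j\notin T^{(\ell)}$ then no edge of $a_i$ is involved and $a_i$'s allocation is unaffected; if $t_j\in T^{(\ell)}$, then by the previous paragraph there is an unsaturated agent $a_k\in A^{(\ell)}\setminus\{a_i\}$ adjacent to $t_j$, so the single-edge path $\{(t_j,a_k)\}$ is augmenting with respect to the current matching. For $\MB$ (BFS), which searches for a shortest augmenting path, this means the augmenting path used at step $j$ has length $1$, hence it never traverses $a_i$ together with an incident matched edge in a way that would change $a_i$'s received set beyond what a length-one path does. In particular, the manipulation by $a_i$ cannot cause the algorithm to find a length-$>1$ augmenting path through $a_i$ in the manipulated instance, because the very task $a_i$ would need to ``reach'' along such a path is instead allocated via a length-one path to another class member. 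Combining this with Lemma~\ref{thm:truthfulness_cases} — noting that in both the truthful and the manipulated runs the relevant tasks of the class are allocated via length-one augmenting paths — yields that $a_i$'s best response is to report truthfully.

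The main obstacle I anticipate is making the ``always an unsaturated class member available'' argument fully rigorous across the entire run of the algorithm, not just at a single step: one must argue that the $\lceil |T^{(\ell)}|/b^{(\ell)}\rceil+1$ other class members cannot all become saturated before every task of $T^{(\ell)}$ that will ever be allocated has been allocated, and that the ``$+1$'' slack precisely covers the removal of $a_i$ from the pool. Here the ceiling is essential: $\lceil |T^{(\ell)}|/b^{(\ell)}\rceil$ agents of capacity $b^{(\ell)}$ suffice to cover all of $T^{(\ell)}$, so with one extra agent beyond that (after discarding $a_i$) there is always residual capacity within the class adjacent to any not-yet-fully-contested task. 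A secondary subtlety is that after $a_i$ hides edges, $a_i$ might appear to belong to a ``smaller'' neighbourhood, but since we only use the other agents' (unchanged) edges to build the length-one augmenting paths, the manipulation of $a_i$ does not weaken the availability argument. Once these counting details are pinned down, the reduction to Lemma~\ref{thm:truthfulness_cases} is immediate and the proof closes. I would also remark, as the paragraph preceding the theorem already hints, that the analogous statement for $\MD$ need not hold, since DFS does not select shortest augmenting paths; the argument above genuinely uses the BFS property.
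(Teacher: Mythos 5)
Your overall strategy coincides with the paper's: a capacity-counting argument showing that the class members other than the would-be manipulator $a_i$ can never be simultaneously saturated, hence every task of $T^{(\ell)}$ admits, at every step, a length-one augmenting path ending at a truthful classmate; since BFS returns shortest augmenting paths these tasks are always allocated via length-one paths, and the conclusion is obtained by invoking Lemma~\ref{thm:truthfulness_cases}. Your bookkeeping of the ``$+1$'' slack (after deleting $a_i$, at least $\big\lceil |T^{(\ell)}|/b^{(\ell)}\big\rceil+1$ classmates remain, whose total capacity strictly exceeds $|T^{(\ell)}|$) is in fact tidier than the paper's closing count, and you correctly bypass the paper's auxiliary claim about the processing order of agents with identical edge sets, which the paper only uses to exhibit an unmatched agent in each class.

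The gap is in how you verify the hypotheses of Lemma~\ref{thm:truthfulness_cases}. The lemma requires that the \emph{entire} truthful run use only length-one augmenting paths, and that the manipulator be unable to force any path of length greater than one anywhere in the manipulated run --- not only at the steps processing tasks of $T^{(\ell)}$. Your dismissal of tasks $t_j\notin T^{(\ell)}$ (``no edge of $a_i$ is involved and $a_i$'s allocation is unaffected'') is not valid: an augmenting path launched from a task outside $T^{(\ell)}$ can traverse $a_i$, entering through an unmatched edge $(t',a_i)$ with $t'\in T^{(\ell)}$ and leaving through a matched edge, thereby swapping tasks in $a_i$'s bundle; such steps neither trivially leave $a_i$ untouched nor trivially satisfy the length-one requirement. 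The paper closes this by applying the same ``there is always an unsaturated (indeed unmatched) agent adjacent to every task'' argument to \emph{every} class --- its proof tacitly assumes the cardinality condition class-wise for all classes --- so that the whole execution, truthful or manipulated, only ever uses length-one paths. If you wish to prove the statement assuming the bound only for the manipulator's class $\ell$, you need an additional argument (for instance, that a shortest augmenting path can never pass through $a_i$ as an intermediate vertex, because rerouting its prefix to the ever-available unsaturated classmate yields a strictly shorter augmenting path), which your write-up does not contain; as it stands, the reduction to Lemma~\ref{thm:truthfulness_cases} is not justified.
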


\begin{proof}[Proof of Theorem \ref{thm:agent_classes}]
First, we prove that if there are $k$ agents that report the same edge sets (but that do not have necessarily the same capacity), namely $a_1$, $a_2$, $\dots$, $a_k$, then, if agent $a_l$ is allocated at least a task, all the agents $a_1$, $a_2$,$\dots$, $a_{l-1}$ are saturated.
Moreover, if agent $a_l$ is unsaturated, then all the agents $a_{l+1}$, $\dots$, $a_k$ do not receive any task.
Toward a contradiction, assume that agent $a_k$ is allocated a task and that $a_{k-1}$ is not saturated. 
Since agent $a_k$ receives a task, it means that during an iteration of the Algorithm, the augmenting path ends in $a_k$ with an edge $(t_j,a_k)$. 
However, since $a_{k-1}$ was not saturated and it has the same edges as $a_k$, the BFS should have explored the edge $(t_k,a_{k-1})$ before, which is a contradiction.
Through a similar argument, it is possible to show the second part of the statement.
We are now ready to prove the claim of the Theorem.
Indeed, since every class of agent has at least $\alpha_i+1$ agents, for every class, there is at least one agent that is left unmatched.
Indeed, every agent from the $i$-th class can be allocated at most $b_i$ tasks, hence, if every agent is allocated at least one task, from the statement we proved before, we infer that at least $\alpha_i$ agents are saturated. 
Since we have
\begin{equation}
    \label{eq:app}
    |T_i|\le \alpha_i b_i, 
\end{equation}
we conclude that at least one agent is left unmatched.
We now show that the routine of Algorithm \ref{alg:Maxvalue} over the truthful input terminates without using augmenting paths of length greater than $1$. 
Assume, toward a contradiction, that the routine uses an augmenting path of length greater than $1$ to allocate a task $t$. Since there exists at least one agent that is unmatched and connected to $t$, this cannot be. 
If we show that no agents can force the algorithm to use an augmenting path whose length is greater than $1$, we conclude the proof using Lemma \ref{thm:truthfulness_cases}.
Toward a contradiction, assume that an agent can force an augmenting path of length greater than $1$. 
This means that, during the implementation of $\MB$ of the manipulated input, there exists a task, namely $t_j\in T_i$, that is connected only to saturated agents at step $j$.
Since there are at least $\alpha_i$ agents in the $i$-th category (the manipulating agent is no longer in this class), we conclude the contradiction from equation \ref{eq:app}.
\end{proof}

Again, Theorem \ref{thm:agent_classes} cannot be extended to $\MD$, as the next example shows.

\begin{example}
\label{ex:class_BFS_vs_DFS}
Let us consider the following instance.
We have three tasks, namely $t_j$ with $j=1,2,3$ and $5$ agents, namely $a_i$ with $i=1,\dots, 5$. 
Finally, let us assume that the value of the task $t_j$ is equal to $q_j=3^{-j}$.
Furthermore, let us assume that there are two classes of agents.
The first class has a capacity equal to $2$ and it is connected to all three tasks. 
The second class of agents also has a capacity equal to $2$ and it is connected to only the last $2$ tasks, which are $t_2$ and $t_3$.
Let us now assume that the agents $a_1$, $a_3$, and $a_4$ belong to the first category, while $a_2$ and $a_5$ belong to the second one.
It is easy to see that this problem satisfies the conditions of Theorem \ref{thm:agent_classes}.
We, therefore, deduce that no agent can manipulate $\MB$.
It is easy to see that the output of $\MB$ is $\mu_{BFS}=\{(a_1,t_1),(a_1,t_2),(a_2,t_3)\}$ while $\MD$ returns $\mu_{DFS}=\{(a_2,t_1),(a_1,t_2),(a_1,t_3)\}$.
Since agent $1$ does not receive the top task to which it is connected, we conclude that agent one can manipulate $\MD$ in its favor. 
On the contrary, we notice that no agent can manipulate $\MB$ since $3$ agents receive no payoff, hence they cannot manipulate and the first agent gets its best possible payoff, hence it cannot improve it.
Finally, we notice that if agent $a_2$ manipulates, the only change that it could achieve is that task $t_3$ is allocated to agent $a_3$ rather than to $a_2$.
\end{example}

\subsection{Agents Manipulating their Edges and Capacity}
\label{subsect:ECMS}

In this section, we extend our study on the truthfulness of $\MB$, $\MD$, and $\MG$ to the ECMS, hence the agents self-report their capacity along with their edges.\footnote{With a slighty abuse of notation, we use $\MB$, $\MD$, and $\MG$ to denote the mechanisms obtained from Algorithm \ref{alg:Maxvalue} and its approximation version when applied to the ECMS case.}
As for the EMS, we assume the agents to be bounded by their statements, thus they can manipulate only by hiding edges or by reporting a lower capacity than their real one.
In this setting, a mechanism $\mathbb{M}$ is truthful if, for every $i\in [n]$, it holds $w_i((I'_i,b_i'),J_{-i})\le w_i((I_i,b_i),J_{-i})$, for every $(I_i',b'_i)\in\mathcal{S}_i\times [b_i]$, where $J_{-i}$ are the reports of the other agents.
Once we fix the set of strategies of each agent, we can define the PoA, PoS, and approximation ratio of a mechanism $\mathbb{M}$ as for the EMS by carefully changing the set of strategies to fit the ECMS case.
As we show, neither the truthfulness nor the efficiency guarantees of $\MB$, $\MD$, and $\MG$ change from EMS to ECMS. 
Furthermore, all the bounds are still tight.

\begin{theorem}
\label{thm:PoA_PoS_edge_cap}
    In the ECMS, $\MB$ and $\MD$ are both not truthful.
    The PoA and the PoS of both $\MB$ and $\MD$ are equal to $2$.
    Moreover, these bounds are tight, hence there is no other deterministic mechanism whose PoA or PoS is lower.
\end{theorem}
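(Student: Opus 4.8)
\textbf{Proof proposal for Theorem \ref{thm:PoA_PoS_edge_cap}.}

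The plan is to reduce everything to the EMS results already established. First, for non-truthfulness, the simplest route is to observe that the counterexample used in Theorem \ref{th:no_strategyproof_agent} already works: in that instance the capacities are $b_1=b_2=1$, so the agents cannot lower their capacities any further, and the EMS manipulation (hiding edge $(a_2,t_3)$ or $(a_1,t_2)$) remains a valid ECMS manipulation. Hence $\MB$ and $\MD$, being induced by Algorithm \ref{alg:Maxvalue}, which always returns an MVbM, are not truthful in the ECMS. Alternatively, since $\mathcal{S}_i\subset \mathcal{S}_i\times[b_i]$ (identifying the EMS strategy $I_i$ with $(I_i,b_i)$), every EMS manipulation is also an ECMS manipulation, so non-truthfulness is inherited directly.

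For the efficiency bounds, I would argue by sandwiching. The upper bounds $PoA(\MB),PoA(\MD)\le 2$ follow from the same argument as Theorem \ref{thm:PoA}: we still have $PoS\le PoA$, so it suffices to bound the PoA. The key point is that $\MG$ is truthful even in the ECMS (this is noted in the paper's contribution summary and follows from Lemma \ref{thm:truthfulness_cases}, since $\MG$ uses only length-$1$ augmenting paths regardless of input, and reducing one's own capacity only shrinks the set of tasks one can obtain). Moreover, the analogue of Theorem \ref{thm:NE_and_approx} should still hold: the FCFS profile $\{\SP_i\}$ — now with each agent additionally reporting its true capacity $b_i$ — is a worst Nash Equilibrium of $\MB$ and $\MD$ in the ECMS, and its social welfare equals $w(\MG)$. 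Granting this, equation \eqref{eq:app_socialcompare} transfers verbatim and gives $PoA(\MB)\le ar(\MG)=2$, and likewise for $\MD$.

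The matching lower bounds come for free: the instance in the proof of Theorem \ref{thm:PoA} has all capacities equal to $1$, so no capacity manipulation is possible, and the EMS worst (and only) Nash Equilibrium is still a Nash Equilibrium in the ECMS with social welfare $1+\epsilon$ against optimum $2+\epsilon$; letting $\epsilon\to 0$ yields $PoA(\MB),PoA(\MB)=2$ and, since that equilibrium is unique, $PoS(\MB)=PoS(\MD)=2$. The tightness across all deterministic mechanisms follows from Theorems \ref{thm:tight_PoA} and \ref{thm:PoS}: the instances used there also have unit capacities, so a deterministic ECMS mechanism restricted to those instances behaves exactly like a deterministic EMS mechanism, and the same Nash Equilibrium analysis forces $PoA\ge 2$ and $PoS\ge 2$.

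The main obstacle is the second paragraph: one must verify that the Nash Equilibrium characterization of Section \ref{sect:strategy} — in particular Theorems \ref{thm:max_match_agent_one}, \ref{thm:Nash_Equlibrium}, and \ref{thm:NE_and_approx} — survives the enlargement of the strategy space from $\mathcal{S}_i$ to $\mathcal{S}_i\times[b_i]$. I expect this to be routine: lowering one's reported capacity can only reduce the number of tasks an agent is ever assigned by Algorithm \ref{alg:Maxvalue} (an agent with smaller capacity is saturated sooner, so no augmenting path can deliver it extra tasks), so a capacity cut is never profitable on top of, or in place of, an edge cut. Consequently the best responses, and hence the equilibrium structure, are unchanged, and the reduction to the EMS results goes through.
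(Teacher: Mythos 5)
Your proposal is correct and follows essentially the same route as the paper: non-truthfulness is inherited because every EMS manipulation survives in the enlarged strategy space, the profile $\{(\SP_i,b_i)\}_{i\in[n]}$ is verified to remain a worst Nash Equilibrium (capacity cuts cannot help when the others' FCFS reports lock the high-value tasks), giving $PoA\le ar(\MG)=2$, and the unit-capacity instances from Theorems \ref{thm:PoA}, \ref{thm:tight_PoA}, and \ref{thm:PoS} transfer verbatim for the lower bounds and tightness. The equilibrium-verification step you flag as the "main obstacle" is exactly the argument the paper spells out, so no gap remains.
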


\begin{proof}[Proof of Theorem \ref{thm:PoA_PoS_edge_cap}]

    First, we observe that, since the strategy set of every agent in the ECMS is larger than what it is in the EMS, the mechanisms cannot be truthful in the ECMS.
    Indeed, if the mechanisms are not truthful for the EMS, they cannot be truthful for the ECMS.
    To prove that $PoA(\MB)=PoA(\MD)=2$ it suffices to show that the strategies $\{(FCFS_i,b_i)\}_{i\in [n]}$ forms a Nash Equilibrium for both $\MB$ and $\MD$.
    Without loss of generality, we focus on $\MB$, since the same conclusion can be drawn for $\MD$.
    Toward a contradiction, let us assume that there exists an agent, namely $a_k$, whose best strategy is to play $(S_k,c_k)$ when all the other agents play $(FCFS_i,b_i)$.
    Since $c_k\le b_k$, agent $a_k$ gets at most the same number of tasks that it would get in the truthful case.
    Furthermore, since the tasks that have a higher value than the one $a_k$ would get by playing $(FCFS_k,b_k)$ cannot be de-allocated from the agents already getting them (since every other player is playing $(FCFS_i,b_i)$), agent $a_k$ is getting fewer tasks that have a lower value than the ones it would get by playing $(FCFS_k,b_k)$, which is a contradiction.
    We deduce that this Nash Equilibrium is one of the worst ones, by using the same argument used in the proof of Theorem \ref{thm:Nash_Equlibrium}.
    Indeed, there cannot exist a Nash Equilibrium in which one agent gets a payoff lower than the one it would get by playing $(FCFS_i,b_i)$.
    We therefore conclude that $\{(FCFS_i,b_i)\}_{i\in [n]}$ is one of the worst Nash Equilibria, thus the PoA of $\MB$ (and $\MD$) is equal to $2$.
    To conclude that $PoS(\MB)=PoS(\MD)=2$, consider the instance described in the proof of Theorem \ref{thm:PoS}.
    Indeed, since in this instance, there is only one possible Nash Equilibrium, it is both the worst and best equilibrium possible.
    Moreover, since the PoS of this instance is equal to $2$ and $PoS(\MB)\le PoA(\MB)=2$, we conclude the proof.
    Similarly, we have $PoS(\MD)=2$.
    To show that all the bounds are tight, it suffice to notice that, according to our framework, an agent whose capacity is equal to $1$ cannot manipulate the mechanism by reporting a null capacity.
    Indeed, an agent that reports a null capacity would automatically be excluded from the allocation procedure.
    Since in the instances we used to prove Theorem \ref{thm:tight_PoA} and \ref{thm:PoS} all the agents have a capacity equal to $1$, we can use them to prove the tightness of the bounds in the ECMS.
\end{proof}

Similarly, $\MG$ is still truthful in the ECMS, and its approximation ratio is unchanged. 

\begin{theorem}
\label{thm:last_mg_truth_cap}
In the ECMS, $\MG$ is truthful, its approximation ratio is equal to $2$.
Moreover, there is no deterministic truthful mechanism with a better approximation ratio.
Finally, if the tasks have different values, $\MG$ is group strategyproof.
\end{theorem}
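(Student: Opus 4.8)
The plan is to reduce each of the four assertions of Theorem~\ref{thm:last_mg_truth_cap} to results already established in the EMS, by observing that a capacity report only \emph{shrinks} the set of tasks an agent can ever receive from $\MG$, and so gives the agent no new leverage. Concretely, I would first recall from Lemma~\ref{thm:truthfulness_cases} and Theorem~\ref{thm:approx_truth} that $\MG$ processes tasks one at a time, and that the set of tasks it assigns to agent $a_i$ is exactly $\SP_i$, the top $b_i$-valued tasks among those still available when $a_i$ is reached (Theorem~\ref{thm:NE_and_approx}). The key point for the ECMS is that this characterization is monotone in $b_i$: reporting a capacity $c_i\le b_i$ replaces $\SP_i$ by the top $c_i$ of the same ordered list, which is a subset of $\SP_i$ of weakly smaller weight, while leaving the sets $T^{(j)}$ reached by later agents weakly larger --- hence it cannot help agents processed afterwards benefit $a_i$ either, since $\MG$ allocates each task independently and never revisits a saturated agent.

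For truthfulness, I would argue as follows. Fix $a_i$ with true type $(T_i,b_i)$ and suppose it reports $(S_i,c_i)$ with $S_i\subseteq T_i$, $c_i\le b_i$. Since the edge set it reports is a subset of $T_i$ and its capacity is $c_i\le b_i$, the set of tasks $\MG$ could conceivably assign to $a_i$ is contained in $S_i\subseteq T_i$ and has size at most $c_i\le b_i$. By the independence of the per-task decisions in $\MG$ (each task $t_j$ is assigned, at step $j$, to the highest-priority unsaturated reported neighbour), lowering $c_i$ or hiding edges can only remove tasks from $a_i$'s allocation or swap a task for one of strictly \emph{lower} index, never raise the total value; this is exactly the mechanism of the proof of Lemma~\ref{thm:truthfulness_cases}, and the capacity coordinate changes nothing in that argument because a smaller capacity only saturates $a_i$ sooner. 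Hence $w_i(\mathbb{M}((S_i,c_i),J_{-i}))\le w_i(\mathbb{M}((T_i,b_i),J_{-i}))$, proving truthfulness.

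For the approximation ratio, the upper bound $ar(\MG)\le 2$ in the ECMS is immediate: under truthful capacity reports the ECMS instance \emph{is} an EMS instance, and Theorem~\ref{thm:approx_truth} gives the bound there; since $ar$ is evaluated against truthful play of a truthful mechanism, nothing changes. The lower bound $ar(\MG)\ge 2$ and the impossibility ``no deterministic truthful mechanism beats $2$'' both follow by reusing verbatim the EMS instances from the proofs of Theorem~\ref{thm:approx_truth} and Theorem~\ref{thm:tight_bound_general}: in those instances every agent has capacity $1$, so an agent cannot gainfully lower its capacity (reporting capacity $0$ simply removes it from the allocation, which is never profitable), and the EMS impossibility argument transfers word for word to the larger ECMS strategy space --- a truthful ECMS mechanism is in particular truthful when all agents keep their capacities fixed, i.e.\ truthful in the EMS, so Theorem~\ref{thm:tight_bound_general} applies directly. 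For group strategyproofness when all task values are distinct, I would repeat the proof of Theorem~\ref{thm:strategyproof}: some colluding agent $a_{i_1}$ must hide an edge $(a_{i_1},t_l)\in\MG$ (or lower its capacity below what it actually uses, which is the same thing --- it drops a task it was getting), so $a_{i_1}$ loses $t_l$; since tasks are allocated independently and all values differ, $a_{i_1}$ cannot be compensated by any other task, contradicting the requirement that no coalition member is worse off.

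The main obstacle is making precise and fully rigorous the monotonicity claim that a lower capacity report (rather than an edge deletion) cannot help \emph{through} the other agents --- i.e.\ that it does not trigger a reallocation that eventually routes a high-value task back to $a_i$. This is where one must lean on the structural fact, already proved in Theorem~\ref{thm:NE_and_approx}, that $\MG$ never produces augmenting paths of length greater than $1$: once $a_i$ is saturated it is never touched again, and a task freed up by $a_i$'s reduced capacity is simply taken by a lower-priority neighbour (or left unmatched), never fed back to $a_i$. Beyond that, everything is a routine transcription of the EMS proofs with the capacity coordinate carried along inertly.
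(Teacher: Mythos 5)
Your proposal is correct and follows essentially the same route as the paper: decompose the ECMS deviation into capacity under-reporting (which, by the per-step saturation/priority structure of $\MG$, can only shrink the agent's allocation) plus edge hiding (already covered by Lemma \ref{thm:truthfulness_cases} and Theorem \ref{thm:approx_truth}), reuse the EMS worst-case instances for the ratio and the impossibility bound since all capacities there equal $1$, and rerun the argument of Theorem \ref{thm:strategyproof} for group strategyproofness. Only a minor wording slip: a manipulation can swap a task for one of \emph{higher} index (i.e.\ lower value), not lower index, as tasks are sorted in decreasing value.
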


\begin{proof}[Proof of Theorem \ref{thm:last_mg_truth_cap}]
To show the truthfulness, it suffices to prove that if an agent misreports only its capacity, it cannot get a higher payoff.
Indeed, if for any given set of edges reported by an agent its best strategy is to report the maximum capacity, we conclude that its best strategy, in the ECMS, is to report all the edges and the maximum capacity.
Since every agent is bounded by its statements, this implies that $\MG$ is truthful.
Let us assume that the capacity of agent $a_i$ is $b_i$ and that it gets a higher payoff by reporting $b_i'<b_i$ as its capacity.
Let us denote with $\mu_i$ the set of tasks allocated to $a_i$ from $\MG$ when it reports truthfully and let us denote with $\mu_i'$ the set of tasks allocated to $a_i$ when it reports $b_i'$ as its capacity.
Toward a contradiction, let $t_j\in \mu_i'$ such that $t_j\notin \mu_i$.
Since $t_j$ is allocated to agent $a_i$ by $\MG$, it means that agent $a_i$ is unsaturated at the $j$-th iteration of Algorithm \ref{alg:Maxvalue} and since $b_i'<b_i$, we conclude that agent $a_i$ is unsaturated at the $j$-th step also when it reports $b_i$ as its capacity.
Therefore, if task $t_j$ is allocated to agent $a_i$ when it reports $b_i'$ as its capacity, the same holds when $a_i$ reports a capacity of $b_i$.
Since every task has a positive value and $\mu'_i\subset \mu_i$, we get a contradiction, therefore the mechanism cannot be manipulated by only misreporting the capacity.
Since the mechanism $\MG$ cannot be manipulated by hiding edges or by reporting a lower capacity, we conclude that $\MG$ has to be truthful when the agents report both quantities.
Finally, since $\MG$ is truthful in the ECMS and the agents are bounded by their statements, its approximation ratio is the same as the one of $\MG$ in EMS (since there is no difference in having the capacities publicly known and reporting them), hence $ar(\MG)=2$.
Again, to prove that the approximation ratio guarantee is tight, it suffice to consider the instance used to prove Theorem \ref{thm:tight_bound_general}.
The group strategyproofness follows by the same argument used in the proof of Theorem \ref{thm:strategyproof} and by observing that reporting a lower capacity cannot increase the payoff of the agents.
\end{proof}

\section{The Tasks Manipulation Case}
\label{sec:tasksmanipulation}
We now consider the case in which the tasks' side is behaving strategically.
As for the agents' case, we consider two frameworks.
First, we consider the case in which the tasks' private information consists of only its adjacent edges.
In the second one, the tasks' private information consists of their adjacent edges and their values.
As we will see, all the mechanisms considered so far are truthful (albeit not group strategyproof).
In particular, when the tasks' side behaves strategically, the optimality of a mechanism does not exclude its truthfulness.

\subsection{The Game-Theoretical framework for tasks' manipulation}

First, we formalize the tasks' manipulating framework by adapting the definitions introduced in subsection \ref{subsec:gtfwagent} to this framework.

{\bf The Strategy Space of the Tasks.}
Given a bipartite graph $G=(A\cup T, E)$, a capacity vector $\bb$, a value vector $\qq$, and a $\bb$-matching $\mu$ over $G$.
We define the social welfare achieved by $\mu$ as the total number of tasks assigned to agents.
Therefore, in this setting, the social welfare achieved by $\mu$ is equal to the total number of edges in the matching, we use $v(\mu)$ to denote it.
We then define the utility of task $t_j$ as $v_j(\mu):=1$ if $t_j$ is assigned to any agent and $0$ otherwise, so that $v(\mu)=\sum_{j=1}^{n}v_j(\mu)$.
We focus on two settings: 
\begin{itemize}
    \item The private information of each task consists of the set of edges that connect it to the agent. As for the agents' framework, we call this setting Edge Manipulation Setting (EMS).
    \item The private information of each task consists of the set of edges and its own value. We call this setting Edge and Value Manipulation Setting (EVMS).
\end{itemize}
As for the agents' case, we assume that every task is bounded by its statement, which means that their reports can be incomplete, but they are not allowed to report false information.
For the EMS, this means that a task can hide some of the edges that connect it to the agents, but it cannot report an edge that does not exist.
The set of strategies of task $t_j$, namely $\mathcal{R}_j$, is therefore the set of all the possible non-empty subsets of $G_j$, where $G_j:=\{e\in E\;\text{s.t.}\; e=(a_i,t_j) \;\text{for some} \; a_i\in A\}$ is the set containing all the existing edges that connect $t_j$ to agents.
In EVMS, it means that a task can report only a value that is lower than its real one and that it cannot report an edge that does not exist.
In this case, the set of strategies of task $t_j$ is $\mathcal{R}_j\times (0,q_j]$, where $q_j$ is the real value of $t_j$.
{\bf The Mechanisms.}
A mechanism for the MVbM problem is a function $\mathbb{M}$ that takes as input the private information of the tasks and returns a $\bb$-matching.
For the sake of simplicity, we restrict our description to the EMS case, since the definitions for the EVMS case are similar.
With a slight abuse of notation, we denote with $\MB$ and $\MD$, the mechanisms induced by Algorithm \ref{alg:Maxvalue} endowed with the respective graph traversing method.
Similarly, we denote with $\MG$ the mechanism induced by the approximated version of Algorithm \ref{alg:Maxvalue}.

Given a mechanism $\mathbb{M}$, every element $I\in \mathcal{I}_{\mathbb{M}}$ is composed by the reports of $n$ self-interested tasks, so that $\mathcal{I}_{\mathbb{M}}=\otimes_{j=1}^m\mathcal{I}_{j}$, where $\mathcal{I}_j$ is the set of feasible inputs for task $t_j$.
We say that a mechanism $\mathbb{M}$ is truthful if no task can get a higher payoff by hiding some of the edges.
More formally, if $I_j$ is the true type of task $t_j$, the mechanism is truthful if it holds true that $v_j(\mathbb{M}(I'_j,I_{-j}))\le v_j(\mathbb{M}(I_j,I_{-j}))$ for every $I'_j\in \mathcal{R}_j$.
Notice that, in this framework, this is equivalent to saying that an unmatched task cannot be matched if it hides some edges.
Finally, a mechanism is group strategyproof for task manipulations if no group of tasks can collude to hide some of their edges in such a way that
\begin{enumerate*}[label=(\roman*)]
    \item all the tasks of the group that were matched in the truthful instance are still matched after the manipulation,
    \item at least one task that was unmatched in the truthful input gets matched to an agent.
\end{enumerate*}

\subsection{The Properties of the Mechanisms}

We now study whether $\MB$, $\MD$, and $\MG$ are group strategyproof when the tasks are behaving strategically and are bound by their statements.
The following theorem implies that any mechanism that returns an MVbM is not group strategyproof for task manipulation.

\begin{theorem}
\label{th:nostrgpoff}
There is no group strategyproof mechanism that always returns an MVbM for task manipulation.
\end{theorem}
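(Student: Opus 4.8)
The plan is to exhibit a single small instance in which an optimal mechanism (one that always returns an MVbM) is forced, by optimality, to match some specific task, and yet a coalition of tasks can hide edges so that a previously unmatched task becomes matched while every previously matched task of the coalition stays matched. The structure mirrors the impossibility for agent manipulation (Theorem \ref{th:no_strategyproof_agent}) and its proof via non-uniqueness of the optimum: I would pick task values and an edge set so that there are two distinct maximum-weight $b$-matchings, each of which matches a different task among those involved, and then argue by case analysis on which optimum the mechanism selects.

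Concretely, I would take two agents $a_1,a_2$ with $b_1=b_2=1$ and three tasks $t_1,t_2,t_3$, with $q_1=1$, $q_2=q_3=0.1$, and edge set $E=\{(a_1,t_1),(a_1,t_2),(a_2,t_1),(a_2,t_3)\}$ — the same instance as in Theorem \ref{th:no_strategyproof_agent}. The two optimal $b$-matchings are $\mu=\{(a_1,t_1),(a_2,t_3)\}$ and $\mu'=\{(a_1,t_2),(a_2,t_1)\}$, each of weight $1.1$; note that $t_2$ is matched in $\mu'$ but not in $\mu$, and $t_3$ is matched in $\mu$ but not in $\mu'$. First I would suppose the mechanism returns $\mu$ on the truthful input. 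Then consider the coalition $C=\{t_1,t_3\}$ where $t_3$ hides its edge $(a_2,t_3)$ (and $t_1$ reports truthfully): the only $b$-matching of positive weight uses $(a_2,t_1)$, and the unique MVbM on the reduced graph is $\mu'=\{(a_1,t_2),(a_2,t_1)\}$, so now $t_1$ remains matched and $t_2$ — wait, $t_2\notin C$; I need the newly-matched task to be outside the coalition, which the definition of group strategyproofness does not require: condition (ii) only asks that some task unmatched in the truthful input becomes matched, and that task may lie outside $C$. But I should double-check the exact wording: the definition in the excerpt says ``at least one task that was unmatched in the truthful input gets matched to an agent,'' with no restriction that it be in $C$. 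So the coalition $\{t_1,t_3\}$ benefits: $t_1\in C$ stays matched (condition (i)), and $t_2$, unmatched before, is now matched (condition (ii)). Symmetrically, if the mechanism returns $\mu'$ on the truthful input, I would use the coalition $\{t_1,t_2\}$ with $t_2$ hiding $(a_1,t_2)$, forcing the unique MVbM $\{(a_1,t_1),(a_2,t_3)\}$, so $t_1$ stays matched and $t_3$ becomes matched.

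The only subtlety — and the place where I would be most careful — is matching the manipulation to the precise definition of group strategyproofness given just before the theorem: the manipulating coalition must consist of tasks that hide edges, condition (i) must hold for \emph{every} member of the coalition (not just those matched truthfully — but the excerpt phrases (i) only for the members ``that were matched in the truthful instance''), and condition (ii) requires a strict improvement for some previously-unmatched task. In my construction each coalition has exactly one truthfully-matched member ($t_1$) who stays matched, one member who hides an edge and was unmatched truthfully (losing nothing since its utility was already $0$), and condition (ii) is witnessed by the third task. I would also verify that the reduced instances have \emph{unique} optima, so that the conclusion holds regardless of any tie-breaking rule the mechanism uses; this is immediate here because removing one of the two ``valuable'' edges leaves only one way to cover $t_1$, and then $t_2$ (resp. $t_3$) can be added for free. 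Assembling these two symmetric cases yields the theorem: no mechanism returning an MVbM can be group strategyproof for task manipulation.
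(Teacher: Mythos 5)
There is a genuine gap: the deviations you propose do not witness a violation of the paper's definition of group strategyproofness, and in fact are not even feasible strategies. First, in your Case 1 the hider $t_3$ has exactly one edge, $(a_2,t_3)$, and the tasks' strategy space $\mathcal{R}_j$ consists of the \emph{non-empty} subsets of $G_j$, so "$t_3$ hides $(a_2,t_3)$" is not an admissible report (the same objection applies to $t_2$ hiding $(a_1,t_2)$ in Case 2). Second, and more importantly, even if the deviation were allowed, in Case 1 the mechanism's truthful output is $\mu=\{(a_1,t_1),(a_2,t_3)\}$, so $t_3$ \emph{is} matched in the truthful instance; after the manipulation the unique optimum is $\{(a_1,t_2),(a_2,t_1)\}$ and $t_3$ is unmatched. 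Hence condition (i) of the definition ("all the tasks of the group that were matched in the truthful instance are still matched") fails for the coalition $\{t_1,t_3\}$, and symmetrically for $\{t_1,t_2\}$ in Case 2. Your closing summary ("one member who hides an edge and was unmatched truthfully") misdescribes your own construction: under the assumed mechanism output, the hider is precisely a matched task that sacrifices its match, which is exactly the kind of deviation group strategyproofness does not count.

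The instance itself can be repaired by reversing who hides: the hider must be $t_1$, the task with two edges that is matched in every optimum. In your Case 1 take the coalition $\{t_1,t_2\}$ with $t_1$ hiding $(a_1,t_1)$; the unique MVbM on the reduced graph is $\{(a_2,t_1),(a_1,t_2)\}$, so $t_1$ stays matched (now via $a_2$) and the previously unmatched $t_2$ becomes matched. In Case 2 take $\{t_1,t_3\}$ with $t_1$ hiding $(a_2,t_1)$, forcing $\{(a_1,t_1),(a_2,t_3)\}$. This corrected argument is exactly the paper's mechanism of manipulation: the paper uses a slightly different instance ($q_1=1$, $q_2=0.9$, $q_3=0.1$, $E=\{(a_1,t_1),(a_1,t_3),(a_2,t_1),(a_2,t_2)\}$) whose truthful optimum is \emph{unique}, so no case analysis on tie-breaking is needed; there $t_1$ hides $(a_1,t_1)$, remains matched to $a_2$, and $t_3$ becomes matched. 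As written, though, your proof does not go through.
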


\begin{proof}[Proof of Theorem \ref{th:nostrgpoff}]
We prove this statement using a counterexample. 
Consider two agents $a_1$ and $a_2$, three tasks $t_1$, $t_2$, $t_3$, and let the edge set be $E=\{(a_1,t_1),(a_1,t_3),(a_2,t_1),(a_2,t_2)\}$.
The values of the three tasks are $q_1 = 1, q_2 = 0.9$, and $q_3=0.1$, respectively. 
The capacity of both the agents is equal to $1$, so that $b_1=b_2=1$. 
It is easy to see that the only MVbM is $\{(a_1,t_1)$, $(a_2,t_2)\}$ with a total weight of $1.9$.
However, tasks $t_1$ and $t_3$ can collude. 
In fact, if $t_1$ hides the edge $(a_1,t_1)$, the optimal matching becomes $\{(a_1,t_3)$, $(a_2,t_1)\}$ with the total weight $1.1$. 
Therefore, a mechanism that always returns a maximum weight matching is not group strategyproof. 
\end{proof}

From Theorem \ref{th:nostrgpoff} we immediately infer that both $\MB$ and $\MD$ are not group strategyproof.
Through a counterexample, we are able to prove that also $\MG$ is not group strategyproof.

\begin{theorem}
\label{th:MGnotstrgpoff}
The mechanism $\MG$ is not group strategyproof for task manipulation.
\end{theorem}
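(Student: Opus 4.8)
The plan is to exhibit a concrete instance—just like the counterexamples used for Theorems~\ref{th:no_strategyproof_agent} and \ref{th:nostrgpoff}—in which a coalition of tasks can collude to strictly benefit at least one of its members while not harming any other coalition member. Recall that in the task-manipulation setting a task's utility is binary: $1$ if it is matched and $0$ otherwise, and $\MG$ processes tasks in decreasing order of value, matching each to the highest-priority unsaturated connected agent. So I need an instance where some task $t$ is unmatched by $\MG$ under truthful reporting, but after another task $t'$ (which remains matched) hides an edge, the agent that would have been ``stolen'' by $t'$ is instead freed up to take $t$.

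First I would try the smallest plausible configuration: two agents $a_1, a_2$ with unit capacities and three tasks. The idea is that $\MG$, processing greedily by value, assigns the two highest-value tasks to the two agents and leaves the third unmatched; the unmatched task should share an agent with one of the high-value tasks. For instance, take values $q_1 > q_2 > q_3$ with edge set $E = \{(a_1,t_1),(a_2,t_1),(a_2,t_2),(a_1,t_3)\}$ (I would verify the exact incidence structure so that (i) truthfully $\MG$ matches $t_1\to a_1$, $t_2\to a_2$, leaving $t_3$ unmatched; and (ii) if $t_1$ hides the edge $(a_1,t_1)$, then $\MG$ matches $t_1\to a_2$, and then when it reaches $t_3$, agent $a_1$ is free, so $t_3\to a_1$). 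In this manipulated run $t_1$ is still matched (utility unchanged at $1$) and $t_3$ goes from utility $0$ to $1$, so $\{t_1,t_3\}$ is a successful coalition, witnessing that $\MG$ is not group strategyproof.

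The main step is simply to trace through Algorithm~\ref{alg:Maxvalue}'s approximate (length-$1$ augmenting path) routine on both the truthful and the manipulated inputs, being careful about the priority ordering of agents (lexicographic) and of tasks (decreasing value), and to confirm the coalition members' utilities move in the required direction. I should double-check that the instance is not accidentally covered by Theorem~\ref{thm:strategyproof}'s hypotheses—but that theorem concerns agent manipulation, not task manipulation, so there is no tension; still, I would make the three task values pairwise distinct to pre-empt any objection and to show the failure is genuinely about the task side.

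The only real obstacle is getting the incidence structure exactly right so that the greedy matching behaves as claimed in both runs; this is a finite check with no subtlety beyond bookkeeping. Once the instance is pinned down, the proof is complete by inspection. I would present it as: ``We prove this using a counterexample. Consider \dots'' followed by the explicit graph, the two traces of $\MG$, and the observation that $t_1$ and $t_3$ form a profitable coalition, concluding that $\MG$ is not group strategyproof for task manipulation. \qed
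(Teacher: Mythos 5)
Your counterexample is correct and is essentially the same as the paper's: the identical edge set $E=\{(a_1,t_1),(a_1,t_3),(a_2,t_1),(a_2,t_2)\}$ with unit capacities and $q_1>q_2>q_3$, where the coalition $\{t_1,t_3\}$ profits when $t_1$ hides $(a_1,t_1)$. The traces you sketch (truthful: $t_1\to a_1$, $t_2\to a_2$, $t_3$ unmatched; manipulated: $t_1\to a_2$, $t_3\to a_1$) check out, so the argument goes through exactly as in the paper.
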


\begin{proof}[Proof of Theorem \ref{th:MGnotstrgpoff}]
Let us consider the following instance.
We have three tasks, namely $t_1$, $t_2$, and $t_3$, whose values are $q_1=1$, $q_2=0.5$, and $q_3=0.3$, respectively, and two agents, namely $a_1$ and $a_2$.
Let $E=\{(a_1,t_1),(a_1,t_3),(a_2,t_1),(a_2,t_2)\}$ be the truthful input.
It is easy to check that $\MG$ returns the matching $\{(a_1,t_1),(a_2,t_2)\}$.
However task $t_1$ and $t_3$ can collude; indeed, if $t_1$ hides the edge $(a_1,t_1)$, $\MG$ returns the matching $\{(a_2,t_1),(a_1,t_3)\}$.
\end{proof}

Albeit $\MB$, $\MD$, and $\MG$ are not group strategyproof, they are truthful for task manipulation.
In particular, this means that both $\MB$ and $\MD$ are truthful and optimal with respect to tasks manipulation.

\begin{theorem}
\label{thm:groupsp}
$\MB$, $\MD$, and $\MG$ are truthful for task manipulation.
Moreover, we have that $ar(\MB)= ar(\MD) = 1$ and $ar(\MG)=2$.
\end{theorem}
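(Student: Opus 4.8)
The plan is to prove three things: (i) $\MB$ and $\MD$ are truthful for task manipulation, (ii) $\MG$ is truthful for task manipulation, and (iii) the approximation ratios are $1$, $1$, and $2$ respectively. Claim (iii) is almost immediate once (i) and (ii) are established: $\MB$ and $\MD$ return an MVbM by construction, so if they are truthful the social welfare they achieve (the number of matched tasks is maximized whenever the total weight is, since every weight is positive — actually one must note here that maximizing $w(\mu)$ does not literally maximize $|\mu|$, so the cleaner statement is that the matching returned, being an MVbM, already saturates as many tasks as any matching could, because tasks have strictly positive weight and so an MVbM is also of maximum cardinality) equals the optimum, giving $ar=1$; for $\MG$, the bound $ar(\MG)=2$ is inherited from Theorem \ref{thm:approx_truth} / the analysis of \cite{dobrian20192}, the relevant point being that the tight example from the agent side can be recycled as a task-manipulation instance since no task actually wants to manipulate there.

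For the truthfulness of $\MB$ and $\MD$, the key observation is the monotonicity of Algorithm \ref{alg:Maxvalue} with respect to edge deletion from a \emph{single task's} incidence set, together with the remark already made in the excerpt that, in the task framework, truthfulness is equivalent to: an unmatched task cannot become matched by hiding edges. So I would argue by contrapositive: suppose task $t_j$ hides some edges and thereby becomes matched. Let $G$ be the truthful graph and $G'$ the manipulated one, with outputs $\mu$ and $\mu'$. In $G$, the algorithm processes tasks in decreasing value order; at the step when it processes $t_j$, it fails to find an augmenting path starting at $t_j$ with respect to the current matching $\mu_{j-1}$. I want to show it fails in $G'$ too. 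Since $G' \subset G$ and the processing of all tasks with higher value than $t_j$ is unaffected by deleting edges incident only to $t_j$ (such edges are never used before $t_j$'s own step — an augmenting path for task $t_r$ with $r$ processed before $j$ starts at $t_r$ and, being augmenting, never revisits a task vertex, so it cannot touch $t_j$), we have $\mu_{r} = \mu'_{r}$ for every task $t_r$ processed before $t_j$, hence $\mu_{j-1} = \mu'_{j-1}$ on the common edge set. Now if there were an augmenting path $P'$ starting at $t_j$ in $G'$ with respect to $\mu'_{j-1}$, then $P' \subset G' \subset G$ would also be an augmenting path starting at $t_j$ in $G$ with respect to $\mu_{j-1}$ — a contradiction. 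This argument is identical for BFS and DFS since it only uses the existence, not the specific choice, of an augmenting path. The same reasoning, restricted to augmenting paths of length $1$, handles $\MG$: if $t_j$ is unmatched under truthful reporting then at its step every agent connected to $t_j$ is already saturated, and deleting edges incident to $t_j$ changes nothing before $t_j$'s step, so $t_j$ stays unmatched.

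The main obstacle — though a mild one — is being careful about the claim that "processing of higher-value tasks is unaffected by deleting $t_j$'s edges." This requires knowing that an augmenting path in Algorithm \ref{alg:Maxvalue} is simple (does not repeat vertices) and in particular does not pass through the not-yet-processed task $t_j$; this follows from the definition of augmenting path given in the Preliminaries (it starts at a task $t_{j_1}$ and alternates, ending at an unsaturated agent, and the task vertices appearing are exactly $t_{j_1},\dots,t_{j_L}$, each matched or being the start, so $t_j$ with its zero current degree cannot be an interior task vertex of a path processed before step $j$). Once that is pinned down, the rest is routine. A secondary subtlety is the EVMS case: one must additionally check that a task cannot gain by reporting a lower value; but lowering its reported value only moves $t_j$ later in the processing order, which (by the same monotonicity-in-processing argument) can only shrink the set of tasks that remain available/unsaturated-agent-adjacent by the time $t_j$ is processed, so it cannot help — I would state this as a short corollary of the EMS argument rather than redoing it.
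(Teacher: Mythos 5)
Your truthfulness argument for $\MB$, $\MD$, and $\MG$ is correct and is essentially the paper's own: the paper likewise reduces truthfulness to the claim that an unmatched task cannot become matched by hiding its edges, and argues that the search tree in the manipulated instance is a sub-tree of the truthful one. Your extra care in first establishing $\mu_{j-1}=\mu'_{j-1}$ (intermediate task vertices of an augmenting path are matched, so the as-yet-unmatched $t_j$ is never reached before its own step) only makes explicit what the paper leaves implicit, and your remark that $ar(\MB)=ar(\MD)=1$ uses the positivity of the task values (so that an MVbM is also of maximum cardinality) is a fair sharpening of what the paper simply asserts.

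The genuine gap is the upper bound $ar(\MG)\le 2$. In the task-manipulation setting the social welfare is $v(\mu)$, the number of matched tasks, not the weight $w(\mu)$; Theorem \ref{thm:approx_truth} and the cited result bound the \emph{weight} ratio, and a half-weight guarantee does not by itself imply a half-cardinality guarantee (a single high-value task can carry the weight while many low-value tasks go unmatched), so the bound cannot simply be ``inherited'' as your proposal states. The paper closes this with an auxiliary-instance argument: replace every task value by $1$; since the processing order is unchanged, $\MG$ returns the same matching on both instances, and the half-weight guarantee applied to the unit-value instance is precisely the half-cardinality guarantee for the original one. (A direct alternative also works: every task left unmatched by $\MG$ has all of its neighbouring agents saturated at the end, so charging each task matched by an optimal matching but not by $\MG$ to its saturated partner agent yields $v(\mu^*)\le 2\,v(\MG)$.) Your recycled two-agent, two-task example is fine for tightness, i.e.\ the lower bound, but the upper bound needs one of these arguments, which your proposal omits.
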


\begin{proof}[Proof of Theorem \ref{thm:groupsp}]
To show that $\MB$ is truthful, it suffices to prove that an unmatched task cannot become matched by hiding the edges that connect it with agents in $\MB$.
Suppose that task $t_j$ is not matched by the matching $\mu$ returned by $\MB$.
It is well known that, if $t_j$ is unmatched by $\mu$, it means that it was not matched during the $j$-th loop of Algorithm \ref{alg:Maxvalue}.
If $t_j$ hides some of its edges, there is still no augmenting path starting from $t_j$ since the generated tree is a sub-tree of the one generated during the truthful iteration. 
We then conclude that $t_j$ cannot be matched by misreporting and, hence, $\MB$ is truthful.
By the same argument, we infer that $\MD$ and $\MG$ are both truthful.
Since both $\MB$ and $\MD$ return an MVbM, we have $ar(\MB)= ar(\MD) = 1$.
Lastly, we show that $ar(\MG)=2$.
Let us consider an MVbM problem, namely $\mathcal{P}:=(A\cup T, E, \textbf{b}, \textbf{q})$.
Without loss of generality, we assume that the tasks in $T$ are ordered in a non-increasing fashion with respect to their value, that is $q_i\le q_j$ if $j\le i$.
Let us then consider the auxiliary MVbM problem $\mathcal{P}':=(A'\cup T', E', \textbf{b}', \textbf{q}')$, for which $A'=A$, $T'=T$, $E'=E$, and $\textbf{b}'=\textbf{b}$, so that the only parameter of $\mathcal{P}'$ that may differ from $\mathcal{P}$ is the vector $\textbf{q}'$.
The vector $\textbf{q}'$ is such that $q'_i=1$ for every $i=1,\dots,m$.
Since all the tasks have the same value in $\mathcal{P}'$, we can assume that the tasks are ordered as in $\mathcal{P}$ while it still holds that $q_j'\le q_i'$ whenever $i\le j$.
Since the set of tasks is ordered in the same way for both problems, $\MG$ returns the same matching, namely $\mu_g$, in both cases.
The weight of $\mu_g$ for the problem $\mathcal{P}'$ is equal to the tasks' social welfare of $\mu_g$ for problem $\mathcal{P}$.
Similarly, the weight of an MVbM solution to problem $\mathcal{P}'$ is equal to the maximal tasks' social welfare for problem $\mathcal{P}$.
Finally, we recall that the algorithm defining $\MG$ returns a matching whose weight is at least half of the weight of an MVbM solution (see \cite{dobrian20192}).
Since this argument holds for every MVbM problem, we infer that
\begin{equation}
\label{eq:ar_MG}
    ar(\MG) = \max_{I\in \mathcal{I}} \frac{v(\mu(I))}{v(\MG(I))} \le 2.
\end{equation}
To conclude the thesis, we show that the bound in \eqref{eq:ar_MG} is tight.
Let us consider the following instance.
There are two agents $a_1$ and $a_2$, whose capacities are both equal to $1$, and two tasks $t_1$ and $t_2$, whose values are $1$ and $0.5$, respectively.
Let $E=\{(a_1,t_1),(a_1,t_2),(a_2,t_1)\}$ be the truthful set of edges.
It is easy to see that $\MG(E)=\{(a_1,t_1)\}$, while the MVbM is $\{(a_2,t_1),(a_1,t_2)\}$.
Hence the ratio between the maximum tasks' social welfare and the tasks' social welfare achieved by $\MG$ is $2$, which concludes the proof.
\end{proof}

\subsection{Tasks Manipulating their Value.}

We now consider the case in which the tasks are allowed to self-report their value.
In this case, a mechanism $\mathbb{M}$ is truthful if it satisfies the condition that $v_j(\mathbb{M}((I'_j, q_j'), J_{-j})) \leq v_j(\mathbb{M}((I_j, q_j), J_{-j}))$ for every $(I'_j, q'_j) \in \mathcal{R}_j \times (0, q_j]$, where $J_{-i}$ is the matrix containing the reports of the other tasks.
To demonstrate that the mechanisms are truthful in the EVMS, we initially establish that all of them maintain truthfulness when the set of edges is publicly known, and the only private information pertains to the value of each task.

\begin{lemma}
\label{thm:truthfulaugmented}
If the tasks are bounded by their statements, then both $\MB$ and $\MD$ are truthful if the only input that the tasks have to provide is their value.
\end{lemma}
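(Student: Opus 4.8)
The plan is to show that a task can never gain by \emph{lowering} its reported value when the edge set is fixed and public. Fix a task $t_j$ with true value $q_j$, and suppose it reports $q_j' < q_j$ while all other tasks report truthfully. Recall that Algorithm \ref{alg:Maxvalue} first sorts the tasks in decreasing order of value and then processes them one at a time, trying to build an augmenting path. Lowering $q_j$ can only move $t_j$ \emph{later} in this processing order (it can never move it earlier). So the key structural fact I would isolate is: \textbf{a task processed at position $k$ in the sorted order gets matched if and only if, at the moment it is processed, there is an augmenting path from it with respect to the current matching $\mu_{k-1}$.} The question is whether delaying when $t_j$ is processed can turn ``$t_j$ unmatched'' into ``$t_j$ matched.''

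First I would observe that the tasks processed strictly before $t_j$ are exactly the same set in the truthful run and in the manipulated run — up to the internal ordering among tasks whose value crosses $q_j'$, but those tasks all have value $\ge q_j' $ and in the manipulated run are simply the tasks coming before $t_j$; in the truthful run they are those coming before $t_j$ \emph{plus} possibly some tasks with value in $(q_j', q_j]$. The clean way to argue is: consider the truthful run and the manipulated run, and compare the matching $\mu$ restricted to the tasks $\{t_1,\dots\}$ with value strictly greater than $q_j'$ (equivalently, the tasks processed before $t_j$ in the manipulated run). Since Algorithm~\ref{alg:Maxvalue} processes tasks greedily and the set of these higher-priority tasks and their edges are identical in both runs, the algorithm produces the \emph{same} partial matching just before it would process $t_j$ in the manipulated run; call it $\nu$. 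In the truthful run, by the time $t_j$ is processed, the algorithm has matching $\mu_{k-1} \supseteq$-related to $\nu$ in the sense that $\mu_{k-1}$ is obtained from $\nu$ by augmenting along some further tasks of value in $(q_j',q_j]$ — and augmenting never decreases the set of saturated agents' ``reachability.'' I would then use the standard augmenting-path monotonicity: if no augmenting path from $t_j$ exists with respect to $\nu$, then none exists with respect to any matching obtained from $\nu$ by a sequence of augmentations along other source tasks. Hence if $t_j$ is matched in the manipulated run (augmenting path exists w.r.t.\ $\nu$), then it is already matched w.r.t.\ $\nu$, and therefore also in the truthful run (where the matching just before $t_j$ dominates $\nu$ in reachability). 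Contrapositively, if $t_j$ is unmatched in the truthful run, it is unmatched in the manipulated run.

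The main obstacle is making precise the ``reachability monotonicity'' claim — that augmenting the matching along other source tasks cannot destroy an agent-reachability that would let $t_j$ find an augmenting path, and more importantly cannot create a new augmenting path for $t_j$ that did not exist before. I would handle this by the classical fact about the Hungarian/Spencer routine: once a task is matched, it stays matched, and the set of agents saturated by $\mu_{k}$ only grows, so an augmenting path from $t_j$ existing w.r.t.\ a larger (later-stage) matching implies one existed w.r.t.\ the earlier matching $\nu$ (one can extract it by König/alternating-tree arguments, or simply invoke that Algorithm~\ref{alg:Maxvalue} is correct and order-of-processing among equal-value—or here, suitably separated—tasks does not change which tasks end up matched when the earlier tasks are fixed). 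Once this monotonicity is in hand, truthfulness for $\MB$ and $\MD$ follows identically, since the argument never used which graph-traversal method finds the augmenting path — only its existence — and this completes the proof.
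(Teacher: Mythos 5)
Your proposal is correct in outline, but it takes a genuinely different and much heavier route than the paper's. You reason through the internals of Algorithm \ref{alg:Maxvalue}: lowering $q_j$ only delays the step at which $t_j$ is processed, and the partial matching $\nu$ held at that later step is obtained from the truthful-run matching $\mu_{k-1}$ by further augmentations along the tasks whose values lie between $q_j'$ and $q_j$ (note that your write-up states this containment backwards -- it is $\nu$ that arises from $\mu_{k-1}$ by extra augmentations, not $\mu_{k-1}$ from $\nu$); you then invoke the monotonicity fact that once no augmenting path from a task exists, later augmentations from other source tasks cannot create one. That fact is true and classical (it is the hereditary/exchange property behind the correctness of the greedy augmenting-path routine), but you leave it as a cited obstacle rather than proving it, and you also gloss over tie-breaking among equal-valued tasks; your argument is in spirit the one the paper uses for $\MG$ in Lemma \ref{thm:MAP_task_val_manipulation}, where only length-one paths occur and the monotonicity is immediate. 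The paper's own proof of this lemma never looks at the processing order at all: writing $w$ and $w'$ for the total matched value under the true and the reported value of $t_j$, it observes that both outputs $\mu$ (truthful) and $\mu'$ (manipulated) are feasible in both instances, so if $t_j$ were unmatched in $\mu$ but matched in $\mu'$ with $q_j'<q_j$, then $w(\mu)=w'(\mu)\le w'(\mu')<w(\mu')$, contradicting the optimality of $\mu$ for the true values. That two-line exchange argument uses only that $\MB$ and $\MD$ return a maximum vertex-weighted $b$-matching, so it is independent of the traversal method and of tie-breaking; your approach, once the monotonicity lemma is supplied in the $b$-matching setting, reaches the same conclusion but only for mechanisms implemented by this particular sequential routine, at the cost of substantially more machinery.
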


\begin{proof}[Proof of Lemma \ref{thm:truthfulaugmented}]
We show that if a task, namely $t_j$, is not matched by Algorithm \ref{alg:Maxvalue} when it reports $q_j$, it cannot be matched by reporting a value $q_j'$, where  $q'_j<q_j$.
Let us denote with $\mu$ and $\mu'$ the matchings obtained when $t_j$ reports value $q_j$ and $q'_j$, respectively.
Toward a contradiction, assume that $t_j$ is not matched by $\mu$, but, according to $\mu'$ it is matched with an agent, namely $a_i$.
Since the set of edges does not change, and both $\mu$ and $\mu'$ are feasible $b$-matchings for both the instances (the one in which task $t_j$ reports $q_j$ and the one in which it reports $q_j'$), we must have
\begin{align*}
    w(\mu)\le w(\mu')&=q'_j+\sum_{(i,j)\in \mu\backslash\{(a,t_j)\}}q_j < q_j+\sum_{(i,j)\in \mu\backslash\{(a,t_j)\}}q_j.
\end{align*}
This concludes the contradiction since $\mu$ is an MVbM.
\end{proof}

By using a different argument, we prove that the same result holds for $\MG$.

\begin{lemma}
\label{thm:MAP_task_val_manipulation}
If the tasks are bounded by their statements, then $\MG$ is truthful if the only input that the tasks have to provide is their value.
\end{lemma}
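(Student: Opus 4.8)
\textbf{Proof proposal for Lemma \ref{thm:MAP_task_val_manipulation}.}
The plan is to mirror the structure of the proof of Lemma \ref{thm:truthfulaugmented}: show that a task $t_j$ that is left unmatched by $\MG$ when it reports its true value $q_j$ remains unmatched if it reports any lower value $q_j' < q_j$. Since the edge set is fixed and publicly known, the only thing a task controls is the position of $t_j$ in the sorted order of values used by Algorithm \ref{alg:Maxvalue} (line 2), and reporting $q_j' < q_j$ can only push $t_j$ later in that order (or leave its position unchanged, up to ties).

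First I would recall that $\MG$ processes tasks in non-increasing order of value and, when it reaches a task, matches it greedily to any connected agent that is not yet saturated (an augmenting path of length $1$), and never de-allocates. Hence the set of agents saturated at the moment $\MG$ considers a given task is monotone non-decreasing in ``processing time''. Now suppose $t_j$ is unmatched under truthful reporting: this means that at the step when $\MG$ processes $t_j$, every agent $a_i$ connected to $t_j$ is already saturated. When $t_j$ instead reports $q_j' < q_j$, it is processed no earlier than before — all tasks that were processed before $t_j$ in the truthful run are still processed before $t_j$ (those with strictly larger true value than $q_j$, hence than $q_j'$), and possibly some additional tasks (those with value strictly between $q_j'$ and $q_j$) are now also processed earlier. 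Since $\MG$'s allocation to each of those earlier tasks depends only on the tasks processed before it and never releases an already-matched agent, every agent that was saturated before $t_j$'s turn in the truthful run is still saturated before $t_j$'s turn in the manipulated run. Therefore all neighbours of $t_j$ are still saturated when $\MG$ reaches $t_j$, so $t_j$ stays unmatched.

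The one technical subtlety — and the main thing to argue carefully — is the treatment of ties and the fact that inserting extra tasks earlier in the order could in principle change \emph{which} agent each earlier task grabs, not merely whether it grabs one. The clean way around this is to observe that what matters for $t_j$ is only the \emph{set} of saturated agents among $T_j$'s neighbours at $t_j$'s turn, and to prove by induction on processing order that this saturated set only grows when tasks are inserted earlier and when a task's own value (hence processing position) drops; no de-allocation ever occurs, so saturation is monotone under ``more/earlier competition''. For ties one fixes the deterministic tie-breaking rule of the algorithm and checks that moving $t_j$ to a weakly later slot is consistent with it. Once this monotonicity is in hand, the contradiction is immediate: $\mu$ leaves $t_j$ unmatched, so by monotonicity $\mu'$ does too, contradicting the assumption that $t_j$ becomes matched. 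Combined with Lemma \ref{thm:truthfulaugmented}, Theorem \ref{thm:groupsp}, and the ``bounded by statements'' assumption, this yields that $\MG$ is truthful in the EVMS.
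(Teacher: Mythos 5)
Your proposal is correct and follows essentially the same route as the paper's proof: you argue that reporting a lower value only pushes $t_j$ to a weakly later priority position, and since $\MG$ never de-allocates, the set of saturated agents at $t_j$'s turn only grows, so a task left unmatched under truthful reporting stays unmatched. Your extra care about ties and about the allocations of intermediate tasks is fine but not needed beyond this monotonicity observation, which is exactly the paper's argument.
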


\begin{proof}[Proof of Lemma \ref{thm:MAP_task_val_manipulation}]
Toward a contradiction, let us assume that there exists a task $t_j$ whose real value is $q_j$ and that, according to $\MG$, is not matched when it reports truthfully but it gets matched if it reports a value $q_j'$ such that $q_j'<q_j$.
Let us denote with $j$ the priority position $t_j$ when it reports the value $q_j$ and with $j'$ the priority position that $t_j$ gets when it reports the value $q_j'$.
By hypothesis, we have $j\le j'$.
According to mechanism $\MG$, task $t_j$ is matched only if at the $j$-th step of Algorithm \ref{alg:Maxvalue} $t_j$ is connected to an unsaturated agent.
Since $j<j'$, if an agent is unsaturated at $j'$-th step and is also unsaturated at the $j$-th step, we conclude that $t_j$ cannot be matched when its priority position is $j'$ if it was not matched with priority position $j$, which is a contradiction.
\end{proof}

Finally, we show that if we allow the tasks to report both their edge set and value, all the mechanisms $\MB$, $\MD$, and $\MG$, are truthful as long as the tasks are bounded by their statements.
Thus, also in this extended case, both $\MB$ and $\MD$ are truthful and optimal.

\begin{theorem}
\label{thm:blablabla}
If the tasks are allowed to report both their value and their edges, the mechanisms $\MB$, $\MD$, and $\MG$ are truthful as long as the tasks are bounded by their statements.
Moreover, we have $ar(\MB)=ar(\MD)=1$ and $ar(\MG)=2$.
\end{theorem}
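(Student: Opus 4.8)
The plan is to reduce the Edge and Value Manipulation Setting (EVMS) to the two cases that have already been handled: pure edge manipulation (Theorem \ref{thm:groupsp}) and pure value manipulation (Lemma \ref{thm:truthfulaugmented} and Lemma \ref{thm:MAP_task_val_manipulation}). The key observation is that, because tasks are bounded by their statements, a joint deviation of task $t_j$ from $(I_j,q_j)$ to $(I'_j,q'_j)$ with $I'_j\in\mathcal{R}_j$ and $q'_j\le q_j$ can be decomposed into two unilateral steps: first hide edges, passing from $(I_j,q_j)$ to $(I'_j,q_j)$, then lower the value, passing from $(I'_j,q_j)$ to $(I'_j,q'_j)$. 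If neither step can turn an unmatched task into a matched one, then the composition cannot either, since being matched is a $0/1$ utility. So it suffices to argue each step in isolation.

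Concretely, for a fixed mechanism $\mathbb{M}\in\{\MB,\MD,\MG\}$, I would first invoke Theorem \ref{thm:groupsp}: with the value $q_j$ held fixed, hiding edges cannot make an unmatched task matched, because the search tree (for $\MB$, $\MD$) or the per-task check (for $\MG$) produced after hiding edges is a sub-structure of the truthful one. Hence after the edge-hiding step $t_j$ is matched only if it was already matched truthfully; in particular, if $t_j$ is unmatched under $(I_j,q_j)$, it remains unmatched under $(I'_j,q_j)$. Next, starting from the report $(I'_j,q_j)$, I would apply Lemma \ref{thm:truthfulaugmented} (for $\MB$, $\MD$) and Lemma \ref{thm:MAP_task_val_manipulation} (for $\MG$) to the instance whose edge set is the one induced by $I'_j$ and the other tasks' reports $J_{-j}$: lowering the value from $q_j$ to $q'_j$ cannot make an unmatched task matched. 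Chaining the two steps, if $t_j$ is unmatched under its truthful report $(I_j,q_j)$, it is unmatched under every $(I'_j,q'_j)\in\mathcal{R}_j\times(0,q_j]$, so $v_j(\mathbb{M}((I'_j,q'_j),J_{-j}))\le v_j(\mathbb{M}((I_j,q_j),J_{-j}))$. This is exactly the truthfulness condition stated in the subsection, so $\MB$, $\MD$, and $\MG$ are all truthful in the EVMS.

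For the approximation ratios, I would note that truthfulness in the EVMS together with the boundedness assumption means the worst-case ratio is still taken over honest reports (reporting a lower value or hiding an edge can only be as good as, not better than, truthful reporting, and the social-welfare comparison is against the MVbM on the true instance). Since $\MB$ and $\MD$ return an MVbM, $ar(\MB)=ar(\MD)=1$. For $\MG$, the upper bound $ar(\MG)\le 2$ follows verbatim from the argument in the proof of Theorem \ref{thm:groupsp} (pass to the auxiliary instance $\mathcal{P}'$ with all values equal to $1$, on which $\MG$ returns the same matching and the half-approximation result of \cite{dobrian20192} applies), and tightness follows from the same two-agent, two-task instance $E=\{(a_1,t_1),(a_1,t_2),(a_2,t_1)\}$ with values $1$ and $0.5$ used there, which lives in the EVMS as well.

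The only real subtlety — and the step I would write out most carefully — is the decomposition argument: one must be sure that applying Lemma \ref{thm:truthfulaugmented} or Lemma \ref{thm:MAP_task_val_manipulation} to the post-edge-hiding instance is legitimate, i.e. that those lemmas hold for an \emph{arbitrary} fixed edge set (they are stated for "the only input the tasks provide is their value", which is exactly the publicly-known-edges regime, so the edge set $I'_j$ can be treated as fixed public data). Once that is spelled out, everything else is bookkeeping. I expect no genuine obstacle beyond making the "hide edges, then lower value" factorization explicit and observing that the $0/1$ nature of task utility makes the composition of two non-improving moves non-improving.
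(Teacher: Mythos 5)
Your proposal is correct and takes essentially the same approach as the paper: the paper also factors the joint deviation into two unilateral steps and chains the value-manipulation lemmas (Lemma \ref{thm:truthfulaugmented}, Lemma \ref{thm:MAP_task_val_manipulation}) with the edge-manipulation result (Theorem \ref{thm:groupsp}), just in the opposite order (lower the value first at the truthful edge set, then hide edges at the lowered value), and it settles the approximation ratios the same way. The reversed order is immaterial since, as you note, each ingredient holds with the other coordinate held fixed at an arbitrary feasible report, and the $0/1$ task utility makes the composition of non-improving moves non-improving.
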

\begin{proof}[Proof of Theorem \ref{thm:blablabla}]
Indeed, let us consider a task, namely $t_j$, whose truthful input is $(q_j,T_j)$, where $q_j$ is its value and $T_j$ is the set of edges it is connected to.
Toward a contradiction, let us assume that $t_j$ is not matched if it reports truthfully, but gets matched if it manipulates and reports $(q_j',T_j')$.
Since the task is bounded by its statement, we have both $q_j'\le q_j$ and $T_j'\subset T_j$.
Let us now consider the matching returned by the mechanism when the task reports $(q_j',T_j)$ (which means, $t_j$ has only manipulated its value and not $T_j$).
According to Lemma \ref{thm:truthfulaugmented}, if $t_j$ could not get matched by reporting its true value, then it could not be matched by reporting a lower one.
We, therefore, conclude that $t_j$ is not matched when it reports $(q_j',T_j)$.
To conclude, we observe that if the task is not matched when it reports $(q_j'.T_j)$, by Theorem \ref{thm:groupsp}, it cannot be matched when it reports $(q_j',T_j')$ with $T_j'\subset T_j$, which contradicts the initial assumption.
Since both $\MB$ and $\MD$ are optimal, we get $ar(\MB)=ar(\MD)=1$.
By the same argument used in the proof of Theorem \ref{thm:approx_truth}, we conclude $ar(\MG)=2$. 
Indeed, since the agents are bounded by their statements and the mechanisms is truthful, there is no difference between letting the values of the tasks be publicly known and letting the tasks self-report their value.
\end{proof}

\section{Experimental Evaluation}

\label{sec:exp_res}

In this section, we conduct several numerical tests to evaluate the manipulability of both $\MB$ and $\MD$.
Since $\MB$ and $\MD$ are truthful when tasks behave strategically, our analysis pertains to scenarios where agents are the strategic entities.
Furthermore, we narrow our focus to the EMS, as we have proven that the ECMS yields no meaningful differences from the EMS.
Our experiments aim to achieve three primary objectives:
\begin{enumerate*}[label=(\roman*)]
    \item First, we compare the susceptibility of $\MB$ and $\MD$ to manipulation by the first agent.
    \item Second, we focus specifically on $\MB$ and evaluate its vulnerability to manipulation by every agent.
    \item Third, we introduce a randomized version of $\MB$ in which the order of agents is determined by running a lottery before executing the routine of the $\MB$. We refer to this modified mechanism as $\MBR$.
    We then compare the performances of $\MB$ and $\MBR$ to investigate whether randomizing the agents' order has a positive impact on the truthfulness of $\MB$.
\end{enumerate*}

\subsection{Dataset Generation}

We now detail the parameters that characterize how the instances are randomly generated and introduce the class of manipulations we consider.

{\bf The truthful input.} 
Given the number of agents $n$ and the number of tasks $m$, we generate a random truthful instance $G=([n]\cup [m], E, \bb, \qq)$ as follows: 
\begin{itemize}
    \item the capacity $b_i$ of the agent $a_i$ is an integer number between sampled from a uniform probability distribution on the set of natural numbers between $\underline{b}$ and $\overline{b}$, i.e. over the set $\{\underline{b},\underline{b}+1,\dots,\overline{b}-1,\overline{b}\}$;
    \item in all the tests, the value $q_j$ of task $t_j$ is sampled from a $Y=\max\{Z,0\}$, where $Z\sim\mathcal{N}(3,0.77)$.
    \item every possible edge $(a_i,t_j)\in [n]\times [m]$ has a probability $p$ to be in $E$. We will use the term \textit{connection probability} to denote $p$.
\end{itemize}
We consider as truthful every instance generated this way.

{\bf Agent manipulation.} 
Given an agent $a_i$, let $T_i$ be the set of edges connected to $a_i$ so that the set of $a_i$'s possible strategies is the set of subsets of $T_i$.
As a consequence, every agent has $2^{|T_i|}$ possible ways to misreport, which is unfeasible from a computational point of view.
For this reason, we focus only on the case in which the agent hides its connections to the tasks with the lowest value.
Depending on the experimental setting, we describe these strategies using two different parameters, the \textit{cutoff threshold} $T$ and the \textit{number of hidden tasks} $k$.
Given an agent $a_i$ and a cutoff threshold $T$, the $T$-level manipulation of agent $a_i$ consists of hiding all its connections with tasks with a value lower than $T$.
Given an agent $a_i$ and a number of hidden tasks $k$, the $k$-th order manipulation of agent $a_i$ consists of hiding its connections with the $k$ tasks with the lowest value.

\subsection{Comparison between \texorpdfstring{$\MB$}{MB} and \texorpdfstring{$\MD$}{MD}.}

The test we run is as follows: given the number of agents $n$, the number of tasks $m$, and the connection probability $p$, we apply both $\MB$ and $\MD$ on $250$ randomly generated instances.
As a metric to compare the manipulability of the two mechanisms, we consider the ratio between the payoff that the first agent gets by $\MB$ ($\MD$, respectively) when it behaves truthfully against the payoff it gets from $\MB$ ($\MD$) when it follows the strategy described in Theorem \ref{thm:max_match_agent_one}.
By definition, when the ratio is equal to $1$, the agent is not able to manipulate.
In Table \ref{tab_comp:1}, we report the average ratio between the payoff that the first agent gets from $\MB$ ($\MD$, respectively) and the maximal payoff it could get when the number of tasks is $m=30, 50, 70$, the number of agents is $n=20,40,60,80$, and $p=0.4,0.6,0.8$.
For the sake of simplicity, we set the capacity of every agent equal to $3$, so that $\underline{b}=\overline{b}=3$.
For every set of parameters, we compute the average by running the experiment over $250$ randomly generated instances.
From our results, we observe that the payoff returned by $\MB$ to the first agent is bigger than the one returned by $\MD$ except on two sets of instances ($n=20$, $m=70$, and $p=0.4,0.6$).
Additionally, in the overwhelming majority of cases, the ratio between the first agent's payoff returned by $\MB$ and what it would receive through manipulation is approximately $1$. Consequently, for these parameter settings, the first agent cannot gain an advantage by manipulating $\MB$ on any of the $250$ randomly generated instances.
It is interesting to notice that the same does not hold for $\MD$, as its average ratio is consistently around $90\%$.
We also observe that $\MB$ gets less manipulable by the first agent as the parameters $n$ and $p$ grow.
Again, we observe a distinct behaviour from $\MD$, for which the first agent's ability to manipulate $\MD$ seems to remain constant for every value of $n$, $m$, and $p$.
To further enhance our observations, we report in Table \ref{tab_comp:2} and \ref{tab_comp:2_cont} the maximum and minimum loss percentages of the first agent among the $250$ iterations we run.
We notice that, in the vast majority of cases, the maximum loss percentage of the $\MB$ is equal to $0$, which means that the first agent was unable to manipulate $\MB$ on all of the $250$ randomly generated instances.
On the contrary, the maximum loss percentage of the $\MD$ is always greater than zero.
Similarly, we observe that the minimum loss percentage of the first agent from $\MD$ is equal to $0$ only in two cases.
This means that, for all but these two parameter choices, on all the $250$ instances we randomly generated according to those parameters, $\MD$ was always manipulable by the first agent.

\begin{table*}[t]

\small

\centering

\begin{tabular}{l| c|c| c|c| c|c |c|c}
    \toprule[2pt]

    \multicolumn{9}{c}{$m=30$}\\
    \midrule
    \multirow{2}{*}{} &
      \multicolumn{2}{c|}{$n=20$} &
      \multicolumn{2}{c|}{$n=40$} &
      \multicolumn{2}{c|}{$n=60$} & \multicolumn{2}{c}{$n=80$}  \\
      \hhline{~|-|-|-|-|-|-|-|-}
      & {$\MB$} & {$\MD$} & {$\MB$} & {$\MD$} & {$\MB$} & {$\MD$} & {$\MB$} & {$\MD$} \\
      \hline
    $p=0.4$ & 0.99 & 0.83 & 1.0 & 0.83 &  1.0 & 0.83 & 1.0 & 0.83 \\
    $p=0.6$ & 1.0 & 0.87 & 1.0 & 0.86 & 1.0 & 0.86 & 1.0 & 0.87 \\
    $p=0.8$ & 1.0 & 0.89 & 1.0 & 0.89 & 1.0 & 0.89 & 1.0 & 0.89  \\
    
    \midrule[1.5pt]
    
    \multicolumn{9}{c}{$m=50$}\\
    \midrule
  &   \multicolumn{2}{c|}{$n=20$} &
      \multicolumn{2}{c|}{$n=40$} &
      \multicolumn{2}{c|}{$n=60$}  & \multicolumn{2}{c}{$n=80$} \\
      
      \hhline{~|-|-|-|-|-|-|-|-}
      
      & {$\MB$} & {$\MD$} & {$\MB$} & {$\MD$} & {$\MB$} & {$\MD$} & {$\MB$} & {$\MD$}  \\
      \hline
    $p=0.4$ 
    & 0.96 & 0.88 & 1.0 & 0.88 & 1.0 & 0.88 & 1.0 & 0.88 \\
    $p=0.6$ 
    & 0.98 & 0.92 & 1.0 & 0.92 & 1.0 & 0.91 & 1.0 & 0.92 \\
    $p=0.8$ 
    & 0.99 & 0.93 & 1.0 & 0.93 & 1.0 & 0.93 & 1.0 & 0.93 \\
    
    \midrule[1.5pt]

  \multicolumn{9}{c}{$m=70$}\\
    \midrule
    \multirow{2}{*}{} &
      \multicolumn{2}{c|}{$n=20$} &
      \multicolumn{2}{c|}{$n=40$} &
      \multicolumn{2}{c|}{$n=60$} & \multicolumn{2}{c}{$n=80$}  \\
      \hhline{~|-|-|-|-|-|-|-|-}
      & {$\MB$} & {$\MD$} & {$\MB$} & {$\MD$} & {$\MB$} & {$\MD$} & {$\MB$} & {$\MD$} \\
      \hline
    $p=0.4$  & 0.88 & 0.89 & 1.0 & 0.90 & 1.0 & 0.88 & 1.0 & 0.89  \\
    $p=0.6$  & 0.87 & 0.90 & 1.0 & 0.89 & 1.0 & 0.89 & 1.0 & 0.90  \\
    $p=0.8$  & 0.91 & 0.90 & 1.0 & 0.90 & 1.0 & 0.90 & 1.0 & 0.90  \\
    
    \bottomrule[2pt]

  \end{tabular}

\caption{\small Every column labeled with $\MB$ ($\MD$, respectively) reports the average ratio between the payoff the first agent gets from $\MB$ ($\MD$) when it reports truthfully and the payoff that it gets from its \safepolicy. 
We consider $m=30,50,70$, $n=20,40,60,80$, and fix $b_i=3$ for each agent.
The averages are taken over 250 randomly generated instances.
\label{tab_comp:1}
}

\end{table*}

\begin{table*}[ht!]
\centering
    \begin{tabular}{l |c|c|c|c |c|c|c|c}
    \toprule[2pt]
    \multicolumn{9}{c}{$m=30$}\\
    \midrule
    \multirow{2}{*}{} & \multicolumn{4}{c|}{$n=20$} &
      \multicolumn{4}{c}{$n=40$} \\
      \hhline{~|-|-|-|-|-|-|-|-}
      & \multicolumn{2}{c|}{$\MB$} & \multicolumn{2}{c|}{$\MD$} & \multicolumn{2}{c|}{$\MB$} & \multicolumn{2}{c}{$\MD$}   \\
      \hhline{~|-|-|-|-|-|-|-|-}
         & $\max$ & $\min$ & $\max$ & $\min$ & $\max$ & $\min$ & $\max$ & $\min$ \\
      \hline
    $p=0.4$  & 0.12 & 0.0 & 0.28 & 0.0 & 0.0 & 0.0 & 0.30 & 0.04  \\
    $p=0.6$   & 0.0 & 0.0 & 0.24 & 0.0 & 0.0 & 0.0 & 0.26 & 0.06  \\
    $p=0.8$   & 0.0 & 0.0 & 0.19 & 0.06 & 0.0 & 0.0 & 0.21 & 0.06   \\
    
    \midrule[1.5pt]
  
    \multicolumn{9}{c}{$m=50$}\\
    \midrule
    
    \multirow{3}{*}{} & \multicolumn{4}{c|}{$n=20$} &
      \multicolumn{4}{c}{$n=40$} \\
      \hhline{~|-|-|-|-|-|-|-|-}
      &  \multicolumn{2}{c|}{$\MB$} & \multicolumn{2}{c|}{$\MD$} & \multicolumn{2}{c|}{$\MB$} & \multicolumn{2}{c}{$\MD$}   \\
      \hhline{~|-|-|-|-|-|-|-|-}
        & $\max$ & $\min$ & $\max$ & $\min$ & $\max$ & $\min$ & $\max$ & $\min$  \\
      \hline
    $p=0.4$  & 0.26 & 0.0 & 0.24 & 0.02 &  0.0 & 0.0 & 0.24 & 0.03 \\
    $p=0.6$   & 0.25 & 0.0 & 0.16 & 0.02 & 0.0 & 0.0 & 0.21 & 0.02 \\
    $p=0.8$  & 0.14 & 0.0 & 0.13 & 0.02 & 0.0 & 0.0 & 0.10 & 0.03  \\
    
    \midrule[1.5pt]

    \multicolumn{9}{c}{$m=70$}\\
    \midrule
    \multirow{2}{*}{} &
      \multicolumn{4}{c|}{$n=20$} &
      \multicolumn{4}{c}{$n=40$} \\
      \hhline{~|-|-|-|-|-|-|-|-}
       & \multicolumn{2}{c|}{$\MB$} & \multicolumn{2}{c|}{$\MD$} & \multicolumn{2}{c|}{$\MB$} & \multicolumn{2}{c}{$\MD$}  \\
      \hhline{~|-|-|-|-|-|-|-|-}
      & $\max$ & $\min$ & $\max$ & $\min$ & $\max$ & $\min$ & $\max$ & $\min$ \\
      \hline
    $p=0.4$   & 0.39 & 0.0 & 0.23 & 0.02 & 0.0 & 0.0 & 0.21 & 0.04  \\
    $p=0.6$  & 0.41 & 0.0 & 0.18 & 0.02 & 0.0 & 0.0 & 0.17 & 0.04  \\
    $p=0.8$ & 0.41 & 0.0 & 0.15 & 0.04 & 0.0 & 0.0 & 0.15 & 0.04   \\
    
    \bottomrule[2pt]
  \end{tabular}

\caption{Every column labeled with $\MB$ ($\MD$, respectively) reports the maximum and minimum difference between what the first agent gets from $\MB$ ($\MD$) when it reports truthfully and the payoff that it can achieve by applying its \safepolicy.
The average is taken on $250$ randomly generated instances.
\label{tab_comp:2} }

\end{table*}

\begin{table*}[ht!]
\centering
    \begin{tabular}{l |c|c|c|c |c|c|c|c}
    \toprule[2pt]
    \multicolumn{9}{c}{$m=30$}\\
    \midrule
    \multirow{2}{*}{} & \multicolumn{4}{c|}{$n=60$} &
      \multicolumn{4}{c}{$n=80$} \\
      \hhline{~|-|-|-|-|-|-|-|-}
      & \multicolumn{2}{c|}{$\MB$} & \multicolumn{2}{c|}{$\MD$} & \multicolumn{2}{c|}{$\MB$} & \multicolumn{2}{c}{$\MD$}   \\
      \hhline{~|-|-|-|-|-|-|-|-}
         & $\max$ & $\min$ & $\max$ & $\min$ & $\max$ & $\min$ & $\max$ & $\min$ \\
      \hline
    $p=0.4$  & 0.0 & 0.0 & 0.31 & 0.0 & 0.0 & 0.0 & 0.32 & 0.01   \\
    $p=0.6$   & 0.0 & 0.0 & 0.24 & 0.06 & 0.0 & 0.0 & 0.24 & 0.06  \\
    $p=0.8$   & 0.0 & 0.0 & 0.16 & 0.06 & 0.0 & 0.0 & 0.18 & 0.06   \\
    
    \midrule[1.5pt]
  
    \multicolumn{9}{c}{$m=50$}\\
    \midrule
    
    \multirow{3}{*}{} & \multicolumn{4}{c|}{$n=60$} &
      \multicolumn{4}{c}{$n=80$} \\
      \hhline{~|-|-|-|-|-|-|-|-}
      &  \multicolumn{2}{c|}{$\MB$} & \multicolumn{2}{c|}{$\MD$} & \multicolumn{2}{c|}{$\MB$} & \multicolumn{2}{c}{$\MD$}   \\
      \hhline{~|-|-|-|-|-|-|-|-}
        & $\max$ & $\min$ & $\max$ & $\min$ & $\max$ & $\min$ & $\max$ & $\min$  \\
      \hline
    $p=0.4$  & 0.0 & 0.0 & 0.24 & 0.02 &  0.0 & 0.0 & 0.26 & 0.03 \\
    $p=0.6$   & 0.0 & 0.0 & 0.16 & 0.02 & 0.0 & 0.0 & 0.21 & 0.02 \\
    $p=0.8$  & 0.0 & 0.0 & 0.13 & 0.02 & 0.0 & 0.0 & 0.13 & 0.02 \\
    
    \midrule[1.5pt]

    \multicolumn{9}{c}{$m=70$}\\
    \midrule
    \multirow{2}{*}{} &
      \multicolumn{4}{c|}{$n=60$} &
      \multicolumn{4}{c}{$n=80$} \\
      \hhline{~|-|-|-|-|-|-|-|-}
       & \multicolumn{2}{c|}{$\MB$} & \multicolumn{2}{c|}{$\MD$} & \multicolumn{2}{c|}{$\MB$} & \multicolumn{2}{c}{$\MD$}  \\
      \hhline{~|-|-|-|-|-|-|-|-}
      & $\max$ & $\min$ & $\max$ & $\min$ & $\max$ & $\min$ & $\max$ & $\min$ \\
      \hline
    $p=0.4$   & 0.0 & 0.0 & 0.23 & 0.02 & 0.0 & 0.0 & 0.25 & 0.03  \\
    $p=0.6$  & 0.0 & 0.0 & 0.18 & 0.02 & 0.0 & 0.0 & 0.18 & 0.04  \\
    $p=0.8$ & 0.0 & 0.0 & 0.15 & 0.04 & 0.0 & 0.0 & 0.14 & 0.04   \\
    
    \bottomrule[2pt]
  \end{tabular}

\caption{(continued) Every column labeled with $\MB$ ($\MD$, respectively) reports the maximum and minimum difference between what the first agent gets from $\MB$ ($\MD$) when it reports truthfully and the payoff that it can achieve by applying its \safepolicy.
The average is taken on $250$ randomly generated instances.
\label{tab_comp:2_cont} }
\end{table*}

\subsection{The Manipulability of \texorpdfstring{$\MB$}{MB}.}

We now move our focus on the $\MB$.
In particular, we want to assess the manipulability of the $\MB$ from every agent.
For this reason, we consider three metrics:
\begin{itemize}
    \item the Maximum Percentage Utility Gain (MPUG), which measures the maximum gain that any agent can get from manipulating;
    \item the Percentage of Manipulative Agents (PMA), which measures the average number of agents that can manipulate the mechanism, and
    \item the Percentage of Manipulable Instances (PMI) of the mechanism, which measures the number of instances that are manipulable by at least one agent.
\end{itemize}
\begin{figure}[t]
     \centering
     \begin{subfigure}[b]{0.47\textwidth}
         \centering
         \centering
  \includegraphics[width=1.1\linewidth]{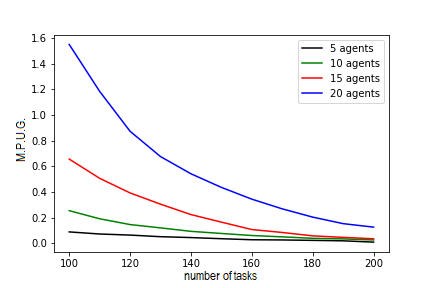}
  \caption{The capacity of every agent is sampled uniformly in $\{3,4,\dots,7\}$. \label{fig:good_2}}
  
     \end{subfigure}
     \hfill
     \begin{subfigure}[b]{0.47\textwidth}
         \centering
      \includegraphics[width=1.1\linewidth]{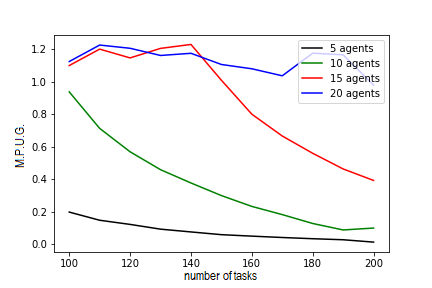}
      \caption{The capacity of every agent is sampled uniformly in 
 $\{7,8,\dots,12\}$. \label{fig:test2}}
      
     \end{subfigure}

        \caption{The Maximum Percentage Utility Gain (M.P.U.G.) for different number of agents, number of tasks and when the agents' capacity ranges in different sets.}
\end{figure}

{\bf Maximum Percentage Utility Gain.} 
The experiment is run as follows.
We generate a random input $G$ for $n=20,40,60,80$, $m=30,50,70$, and $p=0.4,0.6,0.8$.
We then compute the utilities of all the agents in the truthful input.
On every instance and for every agent, we consider the $T$-level manipulations, i.e. every agent $a_i$ hides the edges whose value is less than or equal to $T=1.5,2,2.5,3$.
For every manipulation, we then solve the instance again and compute the utility achieved by $a_i$ after the manipulation. 
We then compute the Percentage Utility Gain of agent $a_i$ as 
\begin{equation}
\label{eq:perc_gain_exp}
p^{(T)}_{a_i}(G):=\frac{\max\{w_i(G')-w_i(G), 0\}}{w_i(G)},
\end{equation}
where $T$ is the threshold defining the manipulation, $G$ is the truthful instance, and $G'$ the instance after agent $a_i$ applied its $T$-level manipulation.
Notice that, by Lemma \ref{lemma:nopayoff}, this quantity is well defined and finite\footnote{we adopt the convention $\frac{0}{0}=1$.}.
Finally, we set the MPUG of the instance $G$ as 
\[
M_p(G):=\max_{T,i}\big\{p^{(T)}_{u_i}(G)\big\}.
\]
The value $M_p(G)$ describes the maximal percentage utility gain that an agent can get by accomplishing one of the above-described $T$-level manipulations. 
We estimate the extent to which an agent can increase its utility percentage by computing the average of the maximal percentage utility gain over $250$ randomly generated instances. 
In Figure \ref{fig:good_2}, we report the maximum percentage utility gain when the capacity of every agent is randomly chosen in the set $\{3,4,5,6,7\}$, while, in Figure \ref{fig:test2}, the capacity is randomly chosen in the set $\{7,8,9,10,11,12\}$.
We observe that the maximum gain that any agent can accomplish constantly drops as the number of tasks grows in both cases.
In Figure \ref{fig:test2}, although there is a trend that the curves are dropping, two of them (15 agents and 20 agents) are initially unstable.
Going back to the worker-project example, these experimental results suggest that when the number of projects is enough to ensure every worker will be associated with its maximum number of tasks, hiding the connections with low-value tasks always leads to a higher utility.
On the contrary, when there is a shortage of projects, hiding the connections with low-value tasks may lead to a loss since it is no longer certain that each worker will be associated with a number of projects equal to its capacity.
It is noteworthy that, when the capacity of the workers is a random integer between $7$ and $12$, the average capacity of the workers is $9.5$. 
If we multiply this average by $15$, we find $142.5$ which is the threshold after which the curve with $15$ agents becomes smooth.
{\bf Percentage of Manipulative Agents.} 
We now run a test to assess the percentage of agents that can benefit from manipulation, i.e. the number of agents that are able to manipulate over the total number of agents.  
The parameters are fixed as follows.
For each triplet of $n=10,15,20$, $m=100,125,150,175,200$, and $p=0.1,0.2,0.4$, we generate $250$ instances and compute the number of manipulative agents for each of those. 
The capacity of every agent is randomly sampled in a discrete set, in our tests we consider the cases in which $\{1,2,3,4,5\}$, $\{3,4,5,6,7\}$, and $\{5,6,7,8,9,10\}$.
Finally, for every agent, we consider the $T$-level manipulations where $T=1.5,2,2.5,3$.
In Table \ref{tab:2}, we report the average PMA for each triplet, i.e. the average percentage of agents that are able to get a higher payoff by applying one of the previously described $T$-level manipulations.  
The results are in accordance with the ones observed for the MPUG: when the number of tasks is higher than the total capacity of the agents, the number of manipulative agents and manipulable instances decreases. 
Similarly, when the number of tasks is lower than the total capacity of the agents, the number of manipulable instances starts to decrease as well.
Once again, this latter effect is due to the shortage of supply for the agents, which again makes the tasks with a low value more appealing.

\begin{table}[ht!]
    \centering
\resizebox{0.8\textwidth}{!}{  
  \begin{tabular}{c|c|c|c|c|c|c|c|c|c}
    \toprule[2pt]
    \multicolumn{10}{c}{$b \in \{1,2,3,4,5\}$} \\
    \midrule
    \multirow{2}{*}{} &
      \multicolumn{3}{c|}{$n=10$} &
      \multicolumn{3}{c|}{$n=15$} &
      \multicolumn{3}{c}{$n=20$} \\
      \hline
      $p$ & {$0.1$} & {$0.2$} & {$0.4$}& {$0.1$} & {$0.2$} & {$0.4$}& {$0.1$} & {$0.2$} & {$0.4$}\\
      \hline
    $m= 100$ & 0.26 & 0.0 & 0.0 & 0.52 & 0.02 & 0.01 & 0.81 & 0.41 & 0.38\\
    $m= 125$ & 0.05 & 0.0 & 0.0 & 0.13 & 0.0 & 0.0 & 0.36 & 0.03 & 0.01\\
    $m=150$ & 0.01 & 0.0 & 0.0 & 0.01 & 0.0 & 0.0 & 0.06 & 0.0 & 0.0\\
    $m=175$ & 0.0 & 0.0 & 0.0 & 0.0 & 0.0 & 0.0 & 0.01 & 0.0 & 0.0 \\
    $m=200$ & 0.0 & 0.0 & 0.0 & 0.0 & 0.0 & 0.0 & 0.0 & 0.0 & 0.0 \\

    \midrule[1.5pt]
    
    \multicolumn{10}{c}{$b \in \{3,4,5,6,7\}$} \\
    \midrule
    \multirow{3}{*}{} &
      \multicolumn{3}{c|}{$n=10$} &
      \multicolumn{3}{c|}{$n=15$} &
      \multicolumn{3}{c}{$n=20$} \\
      \hline
      $p$ & {$0.1$} & {$0.2$} & {$0.4$}& {$0.1$} & {$0.2$} & {$0.4$}& {$0.1$} & {$0.2$} & {$0.4$}\\
      \hline
    $m= 100$ & 0.90 & 0.20 & 0.04 & 1.0 & 0.98 & 0.94 & 1.0 & 1.0 & 1.0\\
    $m= 125$ & 0.54 & 0.01 & 0.0 & 0.98 & 0.49 & 0.35 & 1.0 & 0.99 & 0.98\\
    $m=150$ & 0.19 & 0.0 & 0.0 & 0.59 & 0.02 & 0.02 & 0.97 & 0.78 & 0.73\\
    $m=175$ & 0.04 & 0.0 & 0.0 & 0.18 & 0.0 & 0.0 & 0.66 & 0.18 & 0.09 \\
    $m=200$ & 0.01 & 0.0 & 0.0 & 0.03 & 0.0 & 0.0 & 0.13 & 0.01 & 0.01 \\

  
  
    \midrule[1.5pt]
    \multicolumn{10}{c}{$b \in \{5,6,7,8,9,10\}$} \\
    \midrule
    \multirow{2}{*}{} &
      \multicolumn{3}{c|}{$n=10$} &
      \multicolumn{3}{c|}{$n=15$} &
      \multicolumn{3}{c}{$n=20$} \\
      \hline
      $p$ & {$0.1$} & {$0.2$} & {$0.4$}& {$0.1$} & {$0.2$} & {$0.4$}& {$0.1$} & {$0.2$} & {$0.4$}\\
      \hline
    $m=100$ & 0.96 & 1.0 & 0.96 & 0.99 & 1.0 & 1.0 & 0.98 & 1.0 & 1.0\\
    $m=125$ & 0.99 & 0.74 & 0.33 & 1.0 & 1.0 & 1.0 & 1.0 & 1.0 & 1.0\\
    $m=150$ & 0.96 & 0.10 & 0.03 & 1.0 & 0.99 & 0.98 & 1.0 & 1.0 & 1.0\\
    $m=175$ & 0.70 & 0.01 & 0.0 & 1.0 & 0.76 & 0.56 & 1.0 & 1.0 & 1.0 \\
    $m=200$ & 0.35 & 0.0 & 0.0 & 0.93 & 0.18 & 0.09 & 1.0 & 0.99 & 0.98 \\
    \bottomrule[2pt]

  \end{tabular}}
  
  \caption{Average percentages of manipulable instances for different values of $n$, $m$, and randomly generated capacities. The averages are taken over 250 instances randomly generated. The utility vector is sampled from a uniform distribution over $[1,5]$, and the thresholds $T$ are taken in the set $\{1.5,2,2.5,3\}$. }
    \label{tab:1}
\end{table}

{\bf Percentage of Manipulable Instances.}
Lastly, we run a test to determine the average number of instances in which there exists at least an agent that can manipulate the matching process.  
The test specifics are as follows.
For each triplet of $n=10,15,20$, $m=100,125,150,175,200$, and $p=0.1,0.2,0.4$, we generate $250$ instances and compute the number of manipulative agents for each of those. 
The capacity of every agent is randomly sampled in a discrete set, namely $B$, in our tests we consider $B=\{1,2,3,4,5\}$, $\{3,4,5,6,7\}$, and $\{5,6,7,8,9,10\}$.
Finally, to simulate the manipulations of the agents we let them remove the edges connecting them to the tasks with values lower than $T=1.5,2,2.5,$ and $3$.
For each triplet of $n$, $m$, and $p$, we create $250$ instances and compute the number of manipulative agents for each of those.
In Table \ref{tab:1}, we report the average number of instances in which at least an agent was able to manipulate the outcome of the matching in its favour.
According to what we observed for the PMA and MPUG, the average PMI increases as the total capacity approaches the total amount of tasks. 
Similarly, we observe a quick decrease in the percentage as the number of tasks increases.

\begin{table}[ht!]{}
\centering

\resizebox{0.8\textwidth}{!}{  
  \begin{tabular}{c|c|c|c|c|c|c|c|c|c}
    \toprule[2pt]
    \multicolumn{10}{c}{$b \in \{1,2,3,4,5\}$} \\
    \midrule
    \multirow{2}{*}{} &
      \multicolumn{3}{c|}{$n=10$} &
      \multicolumn{3}{c|}{$n=15$} &
      \multicolumn{3}{c}{$n=20$} \\
      \hline
      $p$ & {$0.1$} & {$0.2$} & {$0.4$}& {$0.1$} & {$0.2$} & {$0.4$}& {$0.1$} & {$0.2$} & {$0.4$}\\
      \hline
    $m=100$ & 0.04 & 0.0 & 0.0 & 0.09 & 0.0 & 0.0 & 0.22 & 0.11 & 0.08\\
    $m=125$ & 0.01 & 0.0 & 0.0 & 0.01 & 0.0 & 0.0 & 0.05 & 0.01 & 0.01\\
    $m=150$ & 0.0 & 0.0 & 0.0 & 0.0 & 0.0 & 0.0 & 0.01 & 0.0 & 0.0\\
    $m=175$ & 0.0 & 0.0 & 0.0 & 0.0 & 0.0 & 0.0 & 0.01 & 0.0 & 0.0 \\
    $m=200$ & 0.0 & 0.0 & 0.0 & 0.0 & 0.0 & 0.0 & 0.0 & 0.0 & 0.0 \\

    \midrule[1.5pt]
    
    \multicolumn{10}{c}{$b \in \{3,4,5,6,7\}$} \\
    \midrule
    \multirow{3}{*}{} &
      \multicolumn{3}{c|}{$n=10$} &
      \multicolumn{3}{c|}{$n=15$} &
      \multicolumn{3}{c}{$n=20$} \\
      \hline
      $p$ & {$0.1$} & {$0.2$} & {$0.4$}& {$0.1$} & {$0.2$} & {$0.4$}& {$0.1$} & {$0.2$} & {$0.4$}\\
      \hline
    $m= 100$ & 0.33 & 0.06 & 0.01 & 0.59 & 0.62 & 0.44 & 0.59 & 0.83 & 0.66\\
    $m= 125$ & 0.14 & 0.01 & 0.0 & 0.44 & 0.15 & 0.08 & 0.71 & 0.66 & 0.50\\
    $m=150$ & 0.04 & 0.0 & 0.0 & 0.17 & 0.01 & 0.01 & 0.54 & 0.28 & 0.20\\
    $m=175$ & 0.01 & 0.0 & 0.0 & 0.04 & 0.0 & 0.0 & 0.21 & 0.04 & 0.01 \\
    $m=200$ & 0.01 & 0.0 & 0.0 & 0.01 & 0.0 & 0.0 & 0.03 & 0.01 & 0.0 \\

  
  
    \midrule[1.5pt]
    \multicolumn{10}{c}{$b \in \{5,6,7,8,9,10\}$} \\
    \midrule
    \multirow{2}{*}{} &
      \multicolumn{3}{c|}{$n=10$} &
      \multicolumn{3}{c|}{$n=15$} &
      \multicolumn{3}{c}{$n=20$} \\
      \hline
      $p$ & {$0.1$} & {$0.2$} & {$0.4$}& {$0.1$} & {$0.2$} & {$0.4$}& {$0.1$} & {$0.2$} & {$0.4$}\\
      \hline
    $m= 100$ & 0.37 & 0.82 & 0.58 & 0.29 & 0.81 & 0.76 & 0.20 & 0.60 & 0.59\\
    $m= 125$ &  0.57 & 0.39 & 0.33 & 0.57 & 0.91 & 0.72 & 0.45 & 0.79 & 0.68\\
    $m=150$ & 0.49 & 0.04 & 0.03 & 0.79 & 0.75 & 0.49 & 0.72 & 0.92 & 0.72 \\
    $m=175$ & 0.27 & 0.01 & 0.0 & 0.73 & 0.35 & 0.17 & 0.89 & 0.85 & 0.62 \\
    $m=200$ & 0.08 & 0.0 & 0.0 & 0.49 & 0.05 & 0.02 & 0.85 & 0.68 & 0.45 \\
    \bottomrule[2pt]
  \end{tabular}}
    \caption{Average percentages of manipulative agents for different values of $n$, $m$, and randomly generated capacities. The averages are taken over 250 instances randomly generated. The utility vector is sampled from a uniform distribution over $[1,5]$, and the thresholds $T$ are taken in the set $\{1.5,2,2.5,3\}$. }
    \label{tab:2}
\end{table}

\subsection{A Randomized Version of \texorpdfstring{$\MB$}{MB}.}

Finally, we consider a version of $\MB$ that processes the agents in a random order, which we call $\MBR$.
The routine of $\MBR$ is the following:
\begin{enumerate}[label=(\roman*)]
    \item the mechanism gathers the reports of the agents (in this case the edges);
    \item the mechanism randomly orders the agents, that is randomly decides which agent is $a_1$, which is $a_2$, and so on, according to a routine $r$;
    \item after the agent set is ordered, it runs Algorithm \ref{alg:Maxvalue} in order to retrieve a $b$-matching.
\end{enumerate}
Since rearranging the order of the agents does not affect the optimality of the output of Algorithm \ref{alg:Maxvalue}, $\MBR$ always returns an optimal $b$-matching.
Since $\MBR$ is not deterministic, we study whether $\MBR$ is truthful in expectation.
If $I_i$ is the truthful report of agent $i\in [n]$, $\MBR$ is truthful in expectation if it holds
\[
\mathbb{E}[w(\MBR(I_i,I_{-i}))]\ge \mathbb{E}[w(\MBR(I_i',I_{-i}))]\quad\quad \forall I_i'\in \mathcal{S}_i, 
\]
where the expected value is taken over all the possible ordering of agents and weighted by their likelihood.
As we observed in Section \ref{sect:strategy}, knowing the agents' order is essential for each agent to determine its best strategy.
Randomizing the processing order before running the routine of Algorithm \ref{alg:Maxvalue} is therefore a natural way to preclude this information from the agents and reduce the overall manipulability of the mechanism.
It is worth stressing, however, that there does not exist a shuffling that makes any optimal mechanism truthful in expectation.
\begin{theorem}
    The mechanism $\MBR$ is not truthful in expectation regardless of the routine that randomizes the order of the agents during step $(ii)$ of the routine.
\end{theorem}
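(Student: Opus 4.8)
The plan is to exhibit a single instance on which, for \emph{every} probability distribution over orderings that the randomizing routine $r$ may induce, at least one agent strictly increases its expected utility by hiding an edge; this already contradicts truthfulness in expectation. I would reuse the instance from the proof of Theorem~\ref{thm:tight_bound_general}: two agents $a_1,a_2$ with $b_1=b_2=1$, two tasks $t_1,t_2$ with values $q_1=1+\epsilon$ and $q_2=1$ for a fixed $\epsilon>0$, and the complete truthful edge set $E=\{(a_1,t_1),(a_1,t_2),(a_2,t_1),(a_2,t_2)\}$.

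First I would record the only property of $\MBR$ that is used: for each realized ordering it runs Algorithm~\ref{alg:Maxvalue}, hence it always outputs an MVbM. On this instance every MVbM has value $2+\epsilon$, so it matches both $t_1$ and $t_2$; since each agent has capacity $1$, each MVbM gives exactly one task to each agent. Therefore, if $u_1$ and $u_2$ denote the expected utilities of $a_1$ and $a_2$ under truthful reporting (the expectation taken over the random ordering drawn by $r$), then $u_1+u_2=(1+\epsilon)+1=2+\epsilon$, and consequently $\min\{u_1,u_2\}\le 1+\tfrac{\epsilon}{2}$. Let $a_k$ be an agent attaining this minimum and let $a_j$ be the other one.

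Next I would have $a_k$ deviate to the report $\{(a_k,t_1)\}$, i.e.\ hide its edge to $t_2$, giving the manipulated edge set $E'=E\setminus\{(a_k,t_2)\}$. On $E'$ the unique MVbM is $\{(a_k,t_1),(a_j,t_2)\}$: its value is $2+\epsilon$, any $b$-matching missing a task has value at most $1+\epsilon$, and the only way to saturate both tasks is to assign $t_1$ to $a_k$ (its sole remaining edge) and $t_2$ to $a_j$ (assigning $t_1$ to $a_j$ would violate $b_j=1$). Hence, \emph{whatever} ordering $r$ produces, $\MBR(E')$ assigns $t_1$ to $a_k$, so $a_k$'s expected utility after the deviation is exactly $1+\epsilon$. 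Since $1+\epsilon>1+\tfrac{\epsilon}{2}\ge u_k$, agent $a_k$ strictly benefits, and $\MBR$ is not truthful in expectation.

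The main point to get right — and the only place the argument could go wrong — is that everything must be uniform over the unknown distribution on orderings. Two order-independent facts handle this: (i) before the deviation, optimality alone fixes the sum $u_1+u_2=2+\epsilon$, so no matter how the probability mass is split between the agents one of them has expected utility at most $1+\epsilon/2$; and (ii) after the deviation the MVbM is unique, so the deviator's post-manipulation utility is the constant $1+\epsilon$ rather than merely an expectation that would still have to be compared with $u_k$. No hypothesis on the shuffling routine $r$ is used beyond the fact, already noted in the text, that reordering the agents does not affect the optimality of Algorithm~\ref{alg:Maxvalue}.
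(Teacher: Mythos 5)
Your proof is correct and follows essentially the same route as the paper's: the same two-agent, two-task complete-graph instance with distinct task values, and the same deviation of hiding the edge to the lower-valued task, which forces the unique MVbM to give the deviator the top task deterministically. The only cosmetic difference is how the profitable deviator is identified — you use the fact that the truthful expected utilities sum to $2+\epsilon$ so one agent gets at most $1+\epsilon/2$, while the paper notes that some agent receives $t_1$ with probability less than one — and both observations are equally valid and order-distribution-free.
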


\begin{proof}
    Again, consider the case in which we have two agents, namely $a_1$ and $a_2$, whose capacity is equal to $1$, thus $b_1=b_2=1$.
    Let us consider that we have two tasks, namely $t_1$ and $t_2$, whose values are $q_1=2$ and $q_1=1$.
    Finally, let us assume that in the truthful input, both agents are connected to both the tasks.

    Without loss of generality, we can assume that one of the two agents has a probabilty of getting $t_1$ that is lower than $1$, so that its expected payoff is lower than $2$.
    For the sake of simplicity, assume that such agent is $a_1$.
    Then, regardless of how we shuffle the agents, if $a_1$ hides the edge connecting it to $t_2$, the only possible optimal vertex weighted matching is the one that associate $a_1$ to $t_1$, thus its expected payoff when it manipulates is equal to $2$, which is higher than what it was getting by reporting truthfully.
\end{proof}

While randomizing the order of agents cannot guarantee the truthfulness of the mechanism, it does impact the expected payoff for every agent.
In the following, we consider only the following randomization method.
Given the reports of the agents, every agent $a\in A$ is associated with a value, namely $p_a$ defined as $p_a=\sum_{t_j\in T_a}(1+q_j)^{-1}$, where $q_j$ is the value of task $t_j$.
Once every $p_a$ is computed, we define a probability measure, namely $\eta_a$, over $A$ by normalizing the values $\{p_a\}_{a\in A}$.
We then run a lottery to decide which agent will be the first agent $a_1$.
The probability that agent $a$ is labelled as the first agent is $\eta_a$.
Once the first agent is elected, we remove it from $A$ and run the same procedure over the set of remaining agents.
The procedure goes on until all the agents are labelled.
Among the various shuffling methods, we opted for the one we described above because it punishes agents for hiding edges connecting them to low-valued tasks.
To assess how shuffling the agents' set impacts the performances of $\MB$, we run $\MB$ and $\MBR$ on $100$ randomly generated instances.
For each mechanism, we compute the percentage of instances in which at least one agent is able to manipulate the allocation procedure.
To determine whether an instance is manipulable for $\MB$, we first run the mechanism over the truthful input (i.e. the randomly generated instance) and compute every agent's payoff.
Afterwards, we test if at least one agent is able to increase its payoff by misreporting its private information. 
Thus, for every agent $a_i$, we run $\MB$ over the instance we get when $a_i$ hides the edges connecting it to the tasks with the lowest $2$, $3$, and $4$ values.
For the first agent, we bypass this procedure and simply compare the payoff that the first agent gets with the payoff it would get by using the strategy described in Theorem \ref{thm:max_match_agent_one}.
An instance is considered manipulable if at least one agent increases its payoff by hiding some of its connections.
An instance is manipulable for $\MBR$ if the average payoff that each agent gets by reporting truthfully is higher than the average payoff that the agent gets by hiding its connection with the tasks with the lowest $2$, $3$, and $4$ values.
Every average is taken by running $\MBR$ on the same instance, be it manipulated or not, for $250$ times.
In Table \ref{tab_comp:rand_all_ag_b2}, we report the results of our tests when $n=15,20,25$, $m=25,30$, $p=0.2,0.3$, and the capacity of each agent is equal to $3$.
Finally, to simulate the manipulations of the agents we consider the $k$-th order manipulation for $k=2,3,4$.
Thus, every agent hides the connection to the two, three, and four tasks with the lowest value.
We observe that $\MBR$ is less manipulable than its deterministic counterpart, showing that knowing the processing order of the mechanism is essential information to the manipulative agents.
Interestingly, in most cases, the percentage of manipulable instances is $0$, so there is no agent able to manipulate any of the $100$ instances. 
To further enhance our findings, in Table \ref{tab_comp:rand_only_2} we compare the percentage of instances in which one agent was able to manipulate $\MBR$ with the number of instances in which the first agent is able to manipulate $\MB$.
We report the percentages in Table \ref{tab_comp:rand_only_2}.
We observe that, even if we restrict our attention to just the first agent, $\MB$ is more manipulable than its randomized counterpart. 

\begin{table*}[t]
\centering
\small

    \begin{tabular}{c| c|c| c|c |c|c }
    \toprule[2pt]
    \multicolumn{7}{c}{$m=25$}   \\
    \midrule
      & \multicolumn{2}{c|}{$n=15$} & \multicolumn{2}{c|}{$n=20$} & \multicolumn{2}{c}{$n=25$} \\
      \hhline{~|-|-|-|-|-|-}
       & {\;$\MBR$\;} & {\;$\MB$\;} & {\;$\MBR$\;} & {\;$\MB$\;} & {\;$\MBR$\;} & {\;$\MB$\;}  \\
      \hline
      $p=0.2$\;  & 0.0 & 0.33 & 0.0 & 0.19 & 0.0 & 0.10 \\ 
      $p=0.3$\;  & 0.01 & 0.40 & 0.0 & 0.09 & 0.0 & 0.02 \\

      \midrule[1.5pt]
      
      \multicolumn{7}{c}{$m=30$}\\
      \midrule
      & \multicolumn{2}{c|}{$n=15$} & \multicolumn{2}{c|}{$n=20$} & \multicolumn{2}{c}{$n=25$} \\
      \hhline{~|-|-|-|-|-|-}
      & {\;$\MBR$\;} & {\;$\MB$\;} & {\;$\MBR$\;} & {\;$\MB$\;} & {\;$\MBR$\;} & {\;$\MB$\;}  \\
      \hline
      $p=0.2$\;   & 0.0 & 0.56 & 0.0 & 0.34 & 0.0 & 0.34 \\ 
      $p=0.3$\;   & 0.06 & 0.71 & 0.05 & 0.23 & 0.0 & 0.08 \\
     
    \bottomrule[2pt]
  \end{tabular}
    \caption{Every column labeled with $\MBR$ ($\MB$, respectively) reports the percentage of instances (over $100$) in which at least one agent is able to manipulate $\MBR$ ($\MB$).
    We consider $m=25,30$, $n=15,20,25$, and fix $b_i=b=3$ for every agent.
    The values $q_j$ are sampled from a Gaussian distribution $\mathcal{N}(3,0.77)$.\label{tab_comp:rand_all_ag_b2} } 
\end{table*}

\begin{table*}[t]
\centering
\small

    \begin{tabular}{c| c|c| c|c |c|c }
    \toprule[2pt]
    \multicolumn{7}{c}{$m=25$}   \\
    \midrule
      & \multicolumn{2}{c|}{$n=15$} & \multicolumn{2}{c|}{$n=20$} & \multicolumn{2}{c}{$n=25$} \\
      \hhline{~|-|-|-|-|-|-}
       & {\;$\MBR$\;} & {\;$\MB$\;} & {\;$\MBR$\;} & {\;$\MB$\;} & {\;$\MBR$\;} & {\;$\MB$\;}  \\
      \hline
      $p=0.2$\;  & 0.03 & 0.12 & 0.01 & 0.08 & 0.0 & 0.01 \\
      $p=0.3$\;  & 0.02 & 0.16 & 0.01 & 0.05 & 0.0 & 0.0  \\

      \midrule[1.5pt]
      
      \multicolumn{7}{c}{$m=30$}\\
      \midrule
      & \multicolumn{2}{c|}{$n=15$} & \multicolumn{2}{c|}{$n=20$} & \multicolumn{2}{c}{$n=25$} \\
      \hhline{~|-|-|-|-|-|-}
      & {\;$\MBR$\;} & {\;$\MB$\;} & {\;$\MBR$\;} & {\;$\MB$\;} & {\;$\MBR$\;} & {\;$\MB$\;}  \\
      \hline
      $p=0.2$\;  & 0.03 & 0.25 & 0.01 & 0.12 & 0.0 & 0.08 \\
      $p=0.3$\;  & 0.04 & 0.23 & 0.01 & 0.06 & 0.0 & 0.04 \\
     
    \bottomrule[2pt]
  \end{tabular}
    \caption{\small Additional results of the tests on mechanism $\MBR$. 
    Every column labeled with $\MBR$ reports the percentage over $250$ randomly generated instances in which an agent is able to manipulate $\MBR$. 
    For every instance, we evaluate the manipulability of $\MBR$ by taking the average of $250$ outputs of $\MBR$.
    Every column labeled with $\MB$ reports the percentage, of the same $250$ randomly instances, in which the first agent is able to manipulate $\MB$.
    The capacity of every agent is fixed at $3$, while the tasks' values are sampled from a Gaussian distribution $\mathcal{N}(3,0.77)$.\label{tab_comp:rand_only_2} } 
\end{table*}

\section{Conclusion and Future Work}

In this paper, we studied a game-theoretical framework for the MVbM problems in which one side of the bipartite graph is composed of agents and the other one of tasks that possess an objective value.
We have considered both cases in which the agents and the tasks can behave strategically by either hiding some of the connections with the other side of the bipartite graph or hiding their connections and lowering their capacity/value.
In this framework, we considered three mechanisms, $\MB$, $\MD$, and $\MG$, and fully characterized their truthfulness, approximation ratio, PoA, and PoS. 
For the agents' manipulation case, we have shown that, albeit $\MB$ and $\MD$ are not truthful, they are optimal from a PoA and PoS point of view for all the cases. 
In particular, $\MB$ and $\MD$ are optimal from both a complexity and game theoretical viewpoint.
On the contrary, $\MG$ is truthful and has an approximation ratio equal to $2$, which we proved to be the best possible approximation ratio achievable by truthful mechanisms in this framework.
Remarkably, these results are true in both the EMS and the EMCS.
We have also shown how the ordering of the agents affects the manipulability of the mechanisms and the best/worst Nash Equilibrium they induce and characterize several sets of instances in which $\MD$ and $\MB$ behave truthfully.
We then studied these mechanisms when we allow tasks to behave strategically.
In this simpler framework, all the mechanisms are truthful, thus both $\MB$ and $\MD$ are optimal and truthful.
To conclude, we numerically compared the manipulability of $\MB$ and $\MD$ on randomly generated instances and studied more in detail $\MB$.
A future development of our work would be to further study the manipulability of the mechanisms when we introduce a random shuffling to the set of agents from a theoretical viewpoint.
This study could lead to detect other sets of instances in which the algorithm behaves truthfully.
Another interesting development is to consider multi-layered graphs to better represent different characteristics that the tasks may have.
Finally, it would be interesting to study how limiting the amount of edges that every agent can report affects the performance and manipulability of the mechanism itself.
%









\bibliography{sn-bibliography}

\end{document}